\documentclass[11pt,a4paper]{article}

\usepackage[T1]{fontenc}
\usepackage[utf8]{inputenc}
\usepackage{kotex} 

\usepackage[a4paper,margin=20mm]{geometry}
\usepackage[dvipsnames,svgnames,table]{xcolor}

\usepackage{mathtools}
\usepackage{amssymb,amsfonts}
\usepackage{amsthm}
\usepackage{bm}
\usepackage{bbm}
\usepackage{mathrsfs}

\usepackage{graphicx}
\usepackage{subcaption}
\usepackage{booktabs}
\usepackage{tabularx}
\usepackage{longtable}
\usepackage{multirow}
\usepackage{makecell}
\usepackage{diagbox}
\usepackage{adjustbox}

\usepackage{algorithm}
\usepackage{algpseudocode}

\usepackage[shortlabels]{enumitem}
\usepackage{float}

\usepackage{tikz}
\usetikzlibrary{fit}
\usepackage[framemethod=TikZ]{mdframed}

\usepackage{natbib}
\usepackage[affil-it]{authblk}

\usepackage[colorlinks,
            citecolor=Black,
            linkcolor=Red,
            urlcolor=Blue]{hyperref}

\usepackage{comment}

\newif\ifdraft
\draftfalse

\ifdraft
  \usepackage[colorinlistoftodos,textsize=scriptsize]{todonotes}
  \includecomment{note}
\else
  \usepackage[disable]{todonotes}
  \excludecomment{note}
\fi

\usepackage{hyperref}
\usepackage{xr-hyper}

\newcommand{\E}{\mathbb{E}}

\newcommand{\bs}{\boldsymbol}

\newcommand{\be}{\begin{equation}}
\newcommand{\ee}{\end{equation}}
\newcommand{\bea}{\begin{eqnarray*}}
\newcommand{\eea}{\end{eqnarray*}}
\newcommand{\bean}{\begin{eqnarray}}
\newcommand{\eean}{\end{eqnarray}}

\newcommand{\ben}{\begin{enumerate}}
\newcommand{\een}{\end{enumerate}}
\newcommand{\bi}{\begin{itemize}}
\newcommand{\ei}{\end{itemize}}

\newcommand{\bt}{\begin{theorem}}
\newcommand{\et}{\end{theorem}}
\newcommand{\bl}{\begin{lemma}}
\newcommand{\el}{\end{lemma}}
\newcommand{\bd}{\begin{definition}}
\newcommand{\ed}{\end{definition}}
\newcommand{\bp}{\begin{proposition}}
\newcommand{\ep}{\end{proposition}}
\newcommand{\bc}{\begin{corollary}}
\newcommand{\ec}{\end{corollary}}
\newcommand{\brem}{\begin{remark}}
\newcommand{\erem}{\end{remark}}

\newcommand{\bcen}{\begin{center}}
\newcommand{\ecen}{\end{center}}
\newcommand{\bsv}{\begin{semiverbatim}}
\newcommand{\esv}{\end{semiverbatim}}

\newcommand{\bbE}{\mathbb{E}}

\newcommand{\bbP}{\mathbb{P}}
\newcommand{\bbR}{\mathbb{R}}
\newcommand{\bbX}{\mathbb{X}}
\newcommand{\bbY}{\mathbb{Y}}

\newcommand{\calB}{\mathcal{B}}
\newcommand{\calC}{\mathcal{C}}

\newcommand{\calL}{\mathcal{L}}

\newcommand{\calN}{\mathcal{N}}
\newcommand{\calP}{\mathcal{P}}

\newcommand{\calS}{\mathcal{S}}
\newcommand{\calT}{\mathcal{T}}

\newcommand{\bbeta}{\boldsymbol{\beta}}
\newcommand{\bepsilon}{\boldsymbol{\epsilon}}

\newcommand{\bmu}{\boldsymbol{\mu}}

\newcommand{\bPsi}{\boldsymbol{\Psi}}
\newcommand{\btheta}{\boldsymbol{\theta}}
\newcommand{\bOmega}{\boldsymbol{\Omega}}

\newcommand{\bGamma}{\boldsymbol{\Gamma}}

\newcommand{\bSigma}{\boldsymbol{\Sigma}}

\newcommand{\bA}{\boldsymbol{A}}
\newcommand{\bbA}{\boldsymbol{\bar{A}}}
\newcommand{\hbA}{\boldsymbol{\hat{A}}}
\newcommand{\btA}{\boldsymbol{\tilde{A}}}

\newcommand{\bB}{\boldsymbol{B}}
\newcommand{\bC}{\boldsymbol{C}}

\newcommand{\bD}{\boldsymbol{D}}

\newcommand{\bI}{\boldsymbol{I}}
\newcommand{\bJ}{\boldsymbol{J}}
\newcommand{\bK}{\boldsymbol{K}}
\newcommand{\bL}{\boldsymbol{L}}
\newcommand{\bM}{\boldsymbol{M}}
\newcommand{\bU}{\boldsymbol{U}}
\newcommand{\bV}{\boldsymbol{V}}
\newcommand{\bX}{\boldsymbol{X}}
\newcommand{\bY}{\boldsymbol{Y}}
\newcommand{\bZ}{\boldsymbol{Z}}

\newcommand{\bS}{ \boldsymbol{S}}
\newcommand{\bR}{ \boldsymbol{R}}
\newcommand{\bN}{ \boldsymbol{N}}
\newcommand{\bF}{ \boldsymbol{F}}
\newcommand{\bP}{ \boldsymbol{P}}
\newcommand{\bQ}{ \boldsymbol{Q}}
\newcommand{\bE}{ \boldsymbol{E}}
\newcommand{\bG}{ \boldsymbol{G}}
\newcommand{\bH}{ \boldsymbol{H}}

\newcommand{\btmu}{\boldsymbol{\tilde{\mu}}}

\newcommand{\btOmega}{\boldsymbol{\tilde{\Omega}}}

\newcommand{\tf}{\tilde{f}}

\theoremstyle{plain}
\newtheorem{theorem}{Theorem}
\newtheorem{proposition}[theorem]{Proposition}
\newtheorem{lemma}[theorem]{Lemma}
\newtheorem{corollary}[theorem]{Corollary}

\theoremstyle{definition}
\newtheorem{definition}{Definition}
\newtheorem{assumption}{Assumption}

\theoremstyle{remark}
\newtheorem{remark}{Remark}

\newcommand{\tr}{\operatorname{tr}}
\newcommand{\Ktr}{\operatorname{Ktr}}
\newcommand{\vecr}{\operatorname{vec}}

\DeclareMathOperator*{\argmax}{argmax}

\hypersetup{colorlinks=true, linkcolor=blue, citecolor=blue, urlcolor=blue}

\providecommand{\keywords}[1]{\par\noindent\textbf{Keywords: }#1}

\title{Laplace Variational Inference for Bayesian Envelope Models}
\author[1]{Seunghyeon Kim}
\author[2]{Kwangmin Lee}
\author[3]{Yeonhee Park}
\affil[1]{Department of Mathematics and Statistics, Chonnam National University}
\affil[2]{Department of Big Data Convergence, Chonnam National University}
\affil[3]{Department of Statistics, Sungkyunkwan University}

\begin{document}

\maketitle

\abstract{Envelope models provide a sufficient dimension reduction framework for multivariate regression analysis. Bayesian inference for these models has been developed primarily using Markov chain Monte Carlo (MCMC) methods. Specifically, Gibbs sampling and Metropolis–Hastings algorithms suffer from slow mixing and high computational cost. Although automatic differentiation variational inference (ADVI) has been explored for Bayesian envelope models, the resulting gradient-based optimization is often numerically unstable due to severe ill-conditioning of the posterior distribution. To address this issue, we propose a novel reparameterization of the posterior distribution that alleviates the ill-conditioning inherent in conventional variational approaches. Building on this reparameterization, we develop an efficient variational inference procedure. Since the resulting likelihood remains nonconjugate, we approximate the corresponding variational factor using a Laplace approximation within a coordinate-ascent variational inference (CAVI) framework. We establish theoretical results showing that, at each one-step coordinate update, the Laplace approximation error relative to the exact variational inference coordinate update converges to zero. Simulation studies and a real-data analysis demonstrate that the proposed method substantially improves computational efficiency while maintaining estimation accuracy and model-selection performance relative to existing approaches.}

\keywords{Bayesian envelope models, Variational inference, Laplace approximation, Asymptotic normality}

\section{Introduction}\label{intro}

\citet{cook2010envelope} introduced the envelope model as a dimension reduction approach to improving estimation efficiency in multivariate regression. The key idea is to recognize that not all variation in multivariate responses is useful for explaining the effects of predictors. The response envelope model isolates a low-dimensional subspace that contains all predictor-related information and discards the remaining variation that is irrelevant to the regression. By removing such immaterial information, envelope methods can achieve substantial gains in estimation efficiency—often comparable to having many additional observations—while retaining the essential regression signal.

Envelope methods have been extended to various contexts, including quantile regression \citep{ding2020envelope}, matrix- and tensor-variate regression \citep{ding2018matrix, li2017parsimonious}, and multivariate probit regression \citep{lee2024bayesian}. Recent advances in the Bayesian envelope framework include \citet{khare2017bayesian}, \citet{leebayesian}, \citet{shen_park_chakraborty_zhang_2023}, \citet{chakraborty2024comprehensive}, and \citet{lee2024bayesian}. 

Bayesian envelope models provide a principled framework for uncertainty quantification and model selection. However, existing Bayesian implementations are often computationally inefficient. In particular, currently available methods rely on Markov chain Monte Carlo (MCMC) algorithms which tend to exhibit slow mixing and high autocorrelation, thereby substantially limiting scalability. 

\citet{khare2017bayesian} proposed a random-scan Gibbs sampler on the Stiefel manifold, but sampling directly on the manifold is computationally expensive. To improve computational efficiency, \citet{chakraborty2024comprehensive} developed a comprehensive Bayesian framework based on a Euclidean parameterization \citep{cook2016note} and employed a Metropolis-within-Gibbs sampler to draw from the posterior (see Section~\ref{sec2} for details). However, Metropolis–Hastings (MH) and Gibbs updates tend to suffer from poor mixing and high autocorrelation in higher dimensions, leading to substantial computational costs. 

In this setting, variational inference (VI) can be considered an alternative approach to alleviate the computational burden of existing MCMC-based methods. VI reformulates posterior inference as an optimization problem, which can substantially reduce computation time \citep{jordan1999introduction, wainwright2008graphical, blei2017variational}. Owing to these advantages, VI has been increasingly adopted as an alternative to MCMC across a wide range of statistical models, including logistic regression models \citep{10.1214/19-STS712}, hierarchical models \citep{10.1214/21-BA1266}, and multinomial probit models \citep{loaiza2023fast}.

Two common strategies for optimizing the VI objective are coordinate-ascent variational inference (CAVI) \citep{10.5555/1162264} and stochastic variational inference (SVI) \citep{10.5555/2567709.2502622}; see \citet{blei2017variational} for a review. These approaches are particularly straightforward in conditionally conjugate models. By conditionally conjugate models, we mean models in which each full conditional belongs to the same family as its prior, yielding closed-form coordinate updates. However, Bayesian envelope models are conditionally nonconjugate—some full conditionals do not admit conjugate forms—and therefore closed-form updates are not available.

Although automatic differentiation variational inference (ADVI) has been explored for Bayesian envelope models \citep{chakrabortysupplement}, our empirical investigation suggests that its performance can be highly sensitive under the standard Euclidean parameterization. In particular, we observed substantial variability in estimation accuracy across replicated datasets, and in higher envelope dimensions the algorithm occasionally exhibited numerical instability. These phenomena may be attributable to the interaction between the envelope geometry and the curvature of the variational objective under the standard parameterization. Such sensitivity motivates the development of a more stable coordinate-wise approximation tailored to this class of models.

To address this issue, we employ a coordinate–ascent Laplace variational inference (CALVI) \citep{wang2013variational}, which retains a coordinate-ascent scheme by applying a Laplace approximation to otherwise intractable updates. We propose a novel reparameterization strategy designed to enable the Laplace approximation, since applying CALVI to the parameterization used in \citet{khare2017bayesian} and \citet{chakraborty2024comprehensive} is computationally costly. This reparameterization effectively shifts the computational bottleneck associated with first- and second-order differentiation to a set of more manageable parameters, thereby substantially enhancing computational efficiency and rendering CALVI practically feasible for envelope models.

We also investigate the theoretical properties of the proposed CALVI procedure. Specifically, we study the approximation error induced by replacing the exact coordinate-wise variational update for a nonconjugate block with its Laplace approximation. While CALVI is widely used in practice, the accuracy of this Laplace-based coordinate update relative to the exact CAVI update has not been thoroughly analyzed. We identify conditions in a general setting under which the resulting approximation error, measured in total variation (TV) distance after appropriate rescaling, becomes asymptotically negligible. Under additional localization and curvature assumption for the Bayesian response envelope model, we establish a theoretical justification for the one-step CALVI algorithm.


The remainder of this paper is organized as follows. Section~\ref{sec2} reviews the Bayesian response envelope model. Section~\ref{sec3} provides a brief review of VI and CALVI. In Section~\ref{sec4}, we address the computational challenges that arise when applying CALVI to envelope models and propose a reparameterization that facilitates the tractable application of CALVI to the Bayesian response envelope model. Section~\ref{sec5} presents theoretical results regarding the approximation error of the CALVI approach. Section~\ref{sec6} presents numerical studies through simulations and a real-world data analysis. Finally, Section~\ref{conclude} concludes the paper with a summary of the main contributions and findings. The Supplementary Material provides detailed proofs and extensions to the Bayesian predictor envelope model.

\section{Review of Bayesian response envelope model}
\label{sec2}

This section reviews the response envelope model introduced by \citet{cook2010envelope}, with particular emphasis on Bayesian inference and its computational challenges. We first introduce the response envelope model, highlighting how it improves estimation efficiency in multivariate linear regression model. We then review existing Bayesian inference methods, identifying the key computational bottlenecks associated with sampling from constrained parameter spaces. The discussion in this section lays the groundwork for the methodological developments in subsequent sections.

Consider the multivariate linear regression model with an $r$-dimensional response $\boldsymbol{Y}$ and a $p$-dimensional predictor $\boldsymbol{X}$ given by
\begin{align*}
    \bY = \bmu + \bbeta\bX + \bepsilon, \quad \bepsilon \sim \calN_r(\boldsymbol{0}, \bSigma), \label{eq:1} \tag{1}
\end{align*}
where $\bmu \in \bbR^{r}$, $\bbeta \in \bbR^{r \times p}$, and $\bSigma \in \calC_{r}$ denote the intercept, coefficient matrix, and covariance, respectively, with $\calC_{d}$ denoting the set of $d \times d$ symmetric positive definite matrices. 

The response envelope model \citep{cook2010envelope} improves the efficiency of estimating $\bbeta$ by restricting attention to a low-dimensional subspace of the response space. 
Formally, let $\calS \subset \bbR^{r}$ be a subspace and denote by $\bbP_{\calS}$ the orthogonal projector onto $\calS$, with $\mathbb{Q}_{\calS} = \bI_{r} - \bbP_{\calS}$. 
We refer to $\bbP_{\calS}\bY$ as the \textit{material} part of the response, which contains all information relevant to the regression, and to $\mathbb{Q}_{\calS}\bY$ as the \textit{immaterial} part, which represents variation unrelated to the predictors. The response envelope model assumes:
\begin{enumerate}
    \item[(i)] $\mathbb{Q}_{\calS} \bY \mid \bX \stackrel{d}{=} \mathbb{Q}_{\calS} \bY$ : the immaterial part is conditionally independent of $\bX$;
    \item[(ii)] $\operatorname{cov}(\mathbb{Q}_{\calS} \bY, \bbP_{\calS} \bY \mid \bX) = 0$ : the material and immaterial parts are conditionally uncorrelated given $\bX$.
\end{enumerate}

Under conditions (i)--(ii), let $u = \dim (\calS)$ and let $\bGamma \in \bbR^{r \times u}$ and $\bGamma_0 \in \bbR^{r \times (r-u)}$ be orthonormal bases of $\calS$ and $\calS^{\perp}$, respectively. Then the response envelope model admits the following parametric formulation:
\begin{align*}
\bY = \bmu + \bGamma\boldsymbol{\eta}\bX + \bepsilon, \qquad \bepsilon \sim \calN_r(\boldsymbol{0}, \bGamma\bOmega\bGamma^\top + \bGamma_0\bOmega_0\bGamma^\top_0), \label{eq:2} \tag{2}
\end{align*}
where $\boldsymbol{\eta} \in \bbR^{u \times p}$ satisfies $\bbeta = \bGamma \boldsymbol{\eta}$, and $\bOmega \in \calC_u$, $\bOmega_0 \in \calC_{r-u}$. In this representation, only the material part $\bGamma^\top \bY$ is informative about $\bbeta$, thereby yielding efficiency gains relative to \eqref{eq:1}, whereas $\bGamma_0^\top \bY$ contributes immaterial variation.

Bayesian inference for the response envelope model~\eqref{eq:2} is challenging because the envelope subspace $\mathcal{S} = \text{span}(\bGamma)$, which is the primary inferential target, lies on the $r \times u$ Grassmann manifold. Although the model is parameterized through an orthonormal basis matrix $\bGamma$ on the $r \times u$ Stiefel manifold, the basis is not identifiable due to its invariance under orthogonal transformations. 

To address the identifiability issue, \citet{khare2017bayesian} imposed restrictions on the model parameters in~\eqref{eq:2} by constraining $\bOmega = \operatorname{diag}(\omega_1, \dots, \omega_u)$ and $\bOmega_0 = \operatorname{diag}(\omega_{0,1}, \dots, \omega_{0,(r-u)})$, with $\omega_1 > \cdots > \omega_u$ and $\omega_{0,1} > \cdots > \omega_{0, r-u}$, which ensures that the envelope model~\eqref{eq:2} is identifiable.

However, their sampler requires rejection sampling from truncated inverse-gamma distributions (for the diagonal elements of $\bOmega$) and sampling from the generalized Bingham distribution (for $\bGamma$). Both steps are computationally intensive and tend to exhibit slow mixing in high-dimensional settings.

To overcome the computational limitations associated with manifold-constrained inference, \citet{chakraborty2024comprehensive} employed \citet{cook2016note}'s parameterization, which identifies the envelope subspace via a parameter defined in Euclidean space. When the first $u$ rows of $\bGamma$ are full-rank, the envelope basis $\bGamma$ can be expressed in terms of an unconstrained matrix $\bA \in \bbR^{(r-u) \times u}$ as follows: 
\begin{flalign*}
    \bGamma = \begin{pmatrix}
        \bGamma_1 \\
        \bGamma_2
    \end{pmatrix} = \begin{pmatrix}
        \boldsymbol{I}_u \\
        \bGamma_2 \bGamma_1^{-1}
    \end{pmatrix} \bGamma_1 \equiv \begin{pmatrix}
        \boldsymbol{I}_u \\
        \bA
    \end{pmatrix} \bGamma_1 \equiv 
    \boldsymbol{C}_{\bA} \bGamma_1,
\end{flalign*}
where $\bGamma_1 \in \bbR^{u \times u}$, $\bGamma_2 \in \bbR^{(r-u) \times u}$, $\bA \equiv \bGamma_2 \bGamma_1^{-1}$, and $\boldsymbol{C}_{\bA} \equiv \begin{pmatrix} \boldsymbol{I}_u \\ \bA \end{pmatrix} \in \bbR^{r \times u}$. Then $\bGamma$ and $\bGamma_0$ can be expressed as:
\begin{align*}
    \bGamma(\bA) := \boldsymbol{C}_{\bA}(\boldsymbol{C}_{\bA}^\top \boldsymbol{C}_{\bA})^{-1/2}, \qquad
    \bGamma_0(\bA) := \boldsymbol{D}_{\bA}(\boldsymbol{D}_{\bA}^\top \boldsymbol{D}_{\bA})^{-1/2}, \tag{3} \label{eq:3}
\end{align*}
where $\boldsymbol{D}_{\bA} \equiv \begin{pmatrix} -\bA^\top \\ \boldsymbol{I}_{r-u} \end{pmatrix} \in \bbR^{r \times (r-u)}$. Without loss of generality, we may assume—after permuting rows if necessary—that the first $u$ rows of $\bGamma$ form a full-rank block.

Using \citet{cook2016note}'s parameterization, \citet{chakraborty2024comprehensive} specified the response envelope model as follows:
\begin{align*}
    \boldsymbol{Y} &= \bmu + \bGamma(\bA)\boldsymbol{\eta}\boldsymbol{X} + \bepsilon, \qquad \bepsilon \sim \mathcal{N}_r\left(\boldsymbol{0}, \bGamma(\bA)\boldsymbol{\Omega}\bGamma^\top(\bA) + \bGamma_0(\bA)\boldsymbol{\Omega}_0\bGamma^\top_0(\bA)\right). \label{eq:4} \tag{4} 
\end{align*}
They specified the prior distributions as follows:
\begin{enumerate}
    \item[(i)] $\pi(\bmu) \propto 1$ is an improper flat prior on $\bbR^{r}$.
    \item[(ii)] $\bOmega \sim \mathcal{IW}_u(\bPsi, \nu^{(1)})$, where $\bPsi \in \mathcal{C}_u$, $\nu^{(1)} > u-1$.
    \item[(iii)] $\bOmega_0 \sim \mathcal{IW}_{r-u}(\bPsi_0, \nu^{(0)})$, where $\bPsi_0 \in \mathcal{C}_{r-u}$, $\nu^{(0)} > r-u-1$.
    \item[(iv)] $\boldsymbol{\eta} \mid \bA, \bOmega \sim \mathcal{MN}_{u,p}(\bGamma^\top(\bA)\boldsymbol{B}_0, \bOmega, \boldsymbol{M}^{-1})$, where $\bB_0 \in \bbR^{r \times p}$ and $\boldsymbol{M} \in \mathcal{C}_{p}$.
    \item[(v)] $\bA \sim \mathcal{MN}_{r-u,u}(\bA_0, \boldsymbol{U}_0^{(A)}, \boldsymbol{V}_0^{(A)})$, where $\bA_0 \in \bbR^{(r-u) \times u}$,  $\boldsymbol{U}_0^{(A)} \in \mathcal{C}_{r-u}$, and  $\boldsymbol{V}_0^{(A)} \in \mathcal{C}_{u}$,
\end{enumerate}
where $\mathcal{MN}$ denotes the matrix-normal distribution and $\mathcal{IW}$ denotes the inverse-Wishart distribution. Let $\bbY \in \mathbb{R}^{n \times r}$ denote the response matrix, whose $i$th row is
$\bY_i^\top \in \mathbb{R}^r$. Similarly, let $\bbX \in \mathbb{R}^{n \times p}$ denote the predictor matrix, whose $i$th row is $\bX_i^\top \in \mathbb{R}^p$. Let $\bbY_{\bmu} = \bbY - \boldsymbol{1}_n \bmu^\top$, where $\boldsymbol{1}_n$ is the $n$-vector of ones. Then the unnormalized log posterior distribution is:
\begin{flalign*}
    & \log p(\bmu, \boldsymbol{\eta}, \mathbf{\Omega}, \mathbf{\Omega}_0,\bA \mid \mathbb{Y}) = \notag \\ 
    & \ \textit{const.} -\frac{n + p + \nu^{(1)} + u + 1}{2} \log \left| \mathbf{\Omega} \right| -\frac{n + \nu^{(0)} + r-u + 1}{2} \log \left| \mathbf{\Omega}_0 \right| \\
    & \ -\frac{1}{2} \tr \left[(\mathbb{Y}_{\bmu}\mathbf{\Gamma}(\bA) -\mathbb{X}\boldsymbol{\eta}^\top ) \mathbf{\Omega}^{-1} (\mathbb{Y}_{\bmu}\mathbf{\Gamma}(\bA) -\mathbb{X}\boldsymbol{\eta}^\top )^\top \right] \label{eq:5} \tag{5}\\
    & \ -\frac{1}{2} \tr \left[ \mathbf{\Gamma}_0^\top (\bA) \mathbb{Y}_{\bmu}^\top \mathbb{Y}_{\bmu} \mathbf{\Gamma}_0(\bA) \mathbf{\Omega}_0^{-1} \right] \label{eq:6} \tag{6} \\
    & \ -\frac{1}{2} \tr\left[ \boldsymbol{M}(\boldsymbol{\eta}-\mathbf{\Gamma}^\top (\bA) \mathbf{B}_0)^\top \mathbf{\Omega}^{-1} (\boldsymbol{\eta}-\mathbf{\Gamma}^\top (\bA) \mathbf{B}_0) \right] -\frac{1}{2} \tr\left[ \boldsymbol{\Psi}\mathbf{\Omega}^{-1} \right]  -\frac{1}{2} \tr\left[\mathbf{\Psi}_0 \mathbf{\Omega}_0^{-1} \right] \\
     & \ - \frac{1}{2} \tr\left[\boldsymbol{V}_0^{(A) -1} (\bA - \bA_0)^\top \boldsymbol{U}_0^{(A) -1} (\bA - \bA_0) \right]. \label{eq:7} \tag{7} &&
\end{flalign*}

The primary object of interest in the response envelope model is the envelope subspace $\mathcal{S}=\mathrm{span}(\bGamma)$ (and the induced regression coefficient $\bbeta=\bGamma\boldsymbol{\eta}$). Under the Euclidean parameterization~\eqref{eq:3}, the subspace $\mathcal{S}$ is fully characterized by the unconstrained matrix $\bA$, so posterior inference on the envelope structure can be carried out through inference on $\bA$. The remaining parameters $(\bmu,\boldsymbol{\eta},\bOmega,\bOmega_0)$ play auxiliary roles in the likelihood specification and are typically treated as nuisance parameters conditional on $\bA$. Importantly, given $\bA$, these blocks admit standard conjugate full conditional distributions (normal/matrix-normal and inverse-Wishart forms under the priors listed above).

\citet{chakraborty2024comprehensive} proposed a blockwise Metropolis-within-Gibbs sampler. Specifically, they proposed an MH algorithm with a likelihood-driven proposal to update $\bA$ to alleviate the poor mixing of a random-walk proposal (see details in \citet{chakrabortysupplement} Section C.7).

However, the resulting algorithms of \citet{khare2017bayesian} and \citet{chakraborty2024comprehensive} still suffer from poor mixing and slow convergence, especially when the parameter dimension is large. Consistent with this, \citet{chakraborty2024comprehensive} report average effective sample size (ESS) ratios (their Metropolis-within-Gibbs sampler relative to the manifold-based Gibbs sampler of \citet{khare2017bayesian}) of about $0.9$ across $\bbeta$-coordinates and replications; see \citet{chakrabortysupplement}, Section~C.13.3. Moreover, the likelihood-driven proposal for $\bA$ is constructed solely from the likelihood and does not incorporate prior information, potentially reducing posterior efficiency in the presence of informative priors. Altogether, these limitations underscore the fundamental scalability challenges of MCMC-based inference for the response envelope model and motivate the development of computationally efficient alternatives.

\section{Review of Laplace variational inference}
\label{sec3}

VI is a scalable alternative to MCMC that approximates the
posterior distribution by a more tractable family. We refer to this approximating distribution as the variational distribution, denoted by $q(\btheta)$. Under the mean-field variational family (MFVF) assumption, the variational distribution factorizes across parameter blocks as
\[
q(\btheta)=\prod_{i=1}^K q_i(\btheta_i),
\]
where each $q_i(\btheta_i)$ is referred to as a variational factor. VI minimizes the Kullback--Leibler (KL) divergence
$\mathrm{KL}\!\left(q(\btheta)\,\|\,p(\btheta\mid\bbY)\right)$, which is equivalent to maximizing the evidence lower bound (ELBO) \citep{jordan1999introduction}
\[
\mathcal{L}(q)
:= \E_q[\log p(\bbY,\btheta)]-\E_q[\log q(\btheta)].
\]

\paragraph{Coordinate-ascent variational inference.}
CAVI optimizes the ELBO by iteratively updating one variational factor at a time while holding the remaining factors fixed. Let $q^{(t)}(\btheta)=\prod_i q_i^{(t)}(\btheta_i)$ denote the variational distribution
at iteration $t$. Given the current iterate $q_{-i}^{(t)}(\btheta_{-i}):=\prod_{j\neq i} q_j^{(t)}(\btheta_j)$,
the $(t+1)$th update for the $i$th factor is obtained by solving
\[
q_i^{(t+1)}(\btheta_i)
\;\propto\;
\exp\!\Big\{
\E_{q_{-i}^{(t)}}\big[\log p(\btheta_i\mid\btheta_{-i},\bbY)\big]
\Big\}.
\tag{8}\label{eq:8}
\]
More precisely, the update is carried out sequentially: at iteration $t$, the update for $q_i$ uses $q_j^{(t+1)}$ for $j<i$ and $q_j^{(t)}$ for $j>i$. For notational simplicity, we write $q_{-i}^{(t)}$ to denote this collection. CAVI proceeds by cycling over $i=1,\dots,K$ and repeating these updates until convergence.

\paragraph{Conditionally conjugate blocks.}
A parameter block $\btheta_i$ in $q_i(\btheta_i)$ is said to be conditionally conjugate in the variational sense if the update in~\eqref{eq:8} yields a factor in a known parametric family \citep{wang2013variational}. Formally, $\btheta_i$ is conditionally conjugate if there exists a family
$\mathcal F_i$ such that
\[
q_i^{(t+1)}(\btheta_i)\in\mathcal F_i
\quad\text{and}\quad
p(\btheta_i\mid\btheta_{-i},\bbY)\in\mathcal F_i.
\]
For example, when $\mathcal F_i$ is the collection of normal distributions,
\[
\E_{q_{-i}^{(t)}}[\log p(\btheta_i\mid\btheta_{-i},\bbY)]
= -\tfrac12\btheta_i^\top \mathbf H^{(t)}\btheta_i
+ \mathbf b^{(t)\top}\btheta_i + \text{const},
\]
and the update is
\[
q_i^{(t+1)}(\btheta_i)
= \calN\!\big(\mathbf H^{(t)-1}\mathbf b^{(t)},\,\mathbf H^{(t)-1}\big),
\]
where $\mathbf{H}^{(t)}$ and $\mathbf{b}^{(t)}$ are the quadratic and linear coefficients in $\btheta_i$ determined by the expansion of $\E_{q_{-i}^{(t)}}[\log p(\btheta_i\mid\btheta_{-i},\bbY)]$.

Conditionally conjugate blocks play a crucial computational role in CAVI. Because the optimal variational factor $q_i^{(t+1)}(\btheta_i)$ belongs to a known parametric family, its normalizing constant and moments are available in closed form. As a result, expectations with respect to $q_i(\btheta_i)$ that appear in the coordinate updates of other blocks can be evaluated analytically. This tractability avoids numerical integration or gradient-based optimization for these blocks, leading to substantial computational savings and improved numerical stability.

\paragraph{Nonconjugate blocks and Laplace updates.}
When $\btheta_i$ is conditionally nonconjugate, the update~\eqref{eq:8} is not tractable. To address this, \citet{wang2013variational} proposed incorporating a Laplace approximation within each coordinate update. Define the local objective at iteration $t$ as
\[
f_i^{(t)}(\btheta_i)
:= \E_{q_{-i}^{(t)}}\big[\log p(\btheta_i\mid\btheta_{-i},\bbY)\big].
\tag{9}\label{eq:9}
\]
Let
\[
\hat{\btheta}_i^{(t)}
:= \arg\max_{\btheta_i} f_i^{(t)}(\btheta_i)
\]
be the local maximizer. We approximate $f_i^{(t)}$ by a second-order Taylor expansion around $\hat{\btheta}_i^{(t)}$, yielding the following Gaussian update
\begin{align*}
&q_i^{(t+1)}(\btheta_i) \\
&\approx \exp \Big\{ f_i^{(t)}(\hat{\btheta}^{(t)}_i) + \frac{1}{2} (\btheta_i - \hat{\btheta}^{(t)}_i)^\top f_i^{(t)\, \prime \prime}(\hat{\btheta}^{(t)}_i) (\btheta_i - \hat{\btheta}^{(t)}_i) \Big\} \\
&= \calN\!\left(
\hat{\btheta}_i^{(t)},
-\big[f_i^{(t)\prime\prime}(\hat{\btheta}_i^{(t)})\big]^{-1}
\right) =: q_i^{L(t+1)}(\btheta_i).
\tag{10}\label{eq:10}
\end{align*}
We refer to CAVI augmented with this Laplace-based coordinate update as \emph{coordinate-ascent Laplace variational inference} (CALVI).

\section{CALVI for the Bayesian response envelope model}
\label{sec4}

In this section, we show how CALVI can be effectively applied to the Bayesian response envelope model. A key challenge arises from the envelope parameter matrix $\bA$, which is conditionally nonconjugate because it enters the likelihood through the orthonormal basis matrices $\bGamma(\bA)$ and $\bGamma_0(\bA)$ in~\eqref{eq:5} and~\eqref{eq:6}. As shown in Section~\ref{sec4.1}, a direct Laplace update for $\bA$ under the Euclidean parameterization of \citet{cook2016note} leads to severe computational bottlenecks, driven by computationally expensive matrix derivative calculations. To address this issue, Section~\ref{sec4.2} introduces a novel reparameterization that removes the dominant computational bottlenecks and yields a tractable and numerically stable CALVI procedure for the Bayesian response envelope model.

\subsection{\texorpdfstring
{Computational issues of parameterization~\eqref{eq:3}}
{Computational issues of parameterization (3)}}
\label{sec4.1}

In the envelope models parameterized via~\eqref{eq:3}, applying a Laplace approximation to the nonconjugate update for $\bA$ is computationally challenging. At each CALVI iteration, the Laplace step requires differentiating the local objective
\begin{align*}
    f^{(t)}(\bA)
:= \E_{q_{-\bA}^{(t)}}\!\big[\log p(\bA\mid\bmu, \boldsymbol{\eta}, \bOmega, \bOmega_0,\bbY)\big], \tag{11} \label{eq:11}
\end{align*} 
where $q_{-\bA}^{(t)} = q^{(t)}(\bmu)q^{(t)}(\bs{\eta})q^{(t)}(\bOmega)q^{(t)}(\bOmega_0)$. The explicit form of $f^{(t)}(\bA)$ and its full derivative are given in the Supplementary Material, Section~\ref{sec:A}. The resulting computational difficulty is driven by a single dominant term.

Specifically, the bottleneck arises from differentiating the matrix inverse square-root terms appearing in $\bGamma(\bA)$ and $\bGamma_0(\bA)$, which induce large Kronecker-sum matrix inversions.
Let
\begin{align*}
    \bJ(\bA) &:= \bC_{\bA}^\top \bC_{\bA} = \bI_u + \bA^\top \bA \in \bbR^{u \times u}, \\
    \bJ_0(\bA) &:= \bD_{\bA}^\top \bD_{\bA} = \bI_{r-u} + \bA \bA^\top \in \bbR^{(r-u) \times (r-u)}.
\end{align*}
Differentiation of $\bJ(\bA)^{-1/2}$ and $\bJ_0(\bA)^{-1/2}$ with respect to $\bA$ leads to the Kronecker-sum expressions in~\eqref{eq:12}.
\begin{align*}
    \frac{\partial \, \vecr \left( \bJ(\bA)^{-1/2} \right)}{\partial \, \vecr(\boldsymbol{C}_{\bA})^\top} =& - \Big[ \underbrace{\bJ(\bA)^{-1/2} \otimes \boldsymbol{I}_{u} + \boldsymbol{I}_{u} \otimes \bJ(\bA)^{-1/2} }_{(i)} \Big]^{-1} \\
    &\quad \times \Big[ \bJ(\bA)^{-1} \otimes \bJ(\bA)^{-1} \boldsymbol{C}_{\bA}^\top +   \bJ(\bA)^{-1} \boldsymbol{C}_{\bA}^\top \otimes \bJ(\bA)^{-1} \mathcal{K}_{r,u} \Big], \\
    \frac{\partial \, \vecr \left( \bJ_0(\bA)^{-1/2} \right)}{\partial \, \vecr(\boldsymbol{D}_{\bA})^\top} =& - \Big[ \underbrace{\bJ_0(\bA)^{-1/2} \otimes \boldsymbol{I}_{r-u} + \boldsymbol{I}_{r-u} \otimes \bJ_0(\bA)^{-1/2}}_{(ii)} \Big]^{-1} \\
    &\quad \times \Big[ \bJ_0(\bA)^{-1} \otimes \bJ_0(\bA)^{-1} \boldsymbol{D}_{\bA}^\top + \bJ_0(\bA)^{-1} \boldsymbol{D}_{\bA}^\top \otimes \bJ_0(\bA)^{-1} \mathcal{K}_{r,(r-u)} \Big], \label{eq:12} \tag{12}
\end{align*}
where $\vecr(\cdot)$ denotes the columnwise vectorization operator, and $\mathcal K_{m,n}$ denotes the $(mn)\times(mn)$ commutation matrix. The matrices appearing in~\eqref{eq:12}(i) and~\eqref{eq:12}(ii) have dimensions $u^2\times u^2$ and $(r-u)^2\times(r-u)^2$, respectively. Consequently, evaluating these terms requires $\mathcal{O}\!\left(\max\{u,\,r-u\}^{\,4.752}\right)$ operations in theory when fast matrix inversion algorithms, such as Coppersmith--Winograd \citep{10.1145/28395.28396}, are employed. Because this computation is performed at every iteration of the Laplace update, a direct implementation of CALVI under the parameterization~\eqref{eq:3} is computationally burdensome.

Importantly, this difficulty is not specific to the response envelope model. Any envelope model based on the parameterization of \citet{cook2016note} induces the same Kronecker-sum bottleneck~\eqref{eq:12}; see, for example, the predictor envelope model in Supplementary Material, Section~\ref{sec:C}.

\subsection{Reparameterization and CALVI algorithm}
\label{sec4.2}

Section~\ref{sec4.1} identified the dominant cost in a direct Laplace update for $\bA$ under the Euclidean parameterization~\eqref{eq:3}: differentiating the inverse square-root factors $\bJ(\bA)^{-1/2}$ and $\bJ_0(\bA)^{-1/2}$ induces the Kronecker-sum inversions in~\eqref{eq:12}. Under the reparameterization introduced below, these inverse square-root terms no longer appear in $f^{(t)}(\bA)$.

\paragraph{Reparameterization.}
We introduce the mapping $(\bmu, \boldsymbol{\eta}, \bOmega, \bOmega_0, \bA) \mapsto (\btmu, \boldsymbol{\tilde{\eta}}, \btOmega, \btOmega_0, \bA)$ defined as
\begin{align*}
    \boldsymbol{\tilde{\eta}} &= \bJ(\bA)^{1/2}\boldsymbol{\eta}, \\
    \btOmega &= \bJ(\bA)^{1/2}\boldsymbol{\Omega}\bJ(\bA)^{1/2}, \\
    \btOmega_0 &= \bJ_0(\bA)^{1/2}\boldsymbol{\Omega}_0\bJ_0(\bA)^{1/2}.
    \tag{13}\label{eq:13}
\end{align*}
We also define the centered intercept
\begin{align*}
\btmu
&:= \bmu + \bC_{\bA}\,\bJ(\bA)^{-1}\,\boldsymbol{\tilde{\eta}}\,\bar{\boldsymbol{X}}, \\
\bar{\boldsymbol{X}}
&:= \bbX^\top \boldsymbol{1}_n/n\in\bbR^p,
\qquad
\bbX_c := \bbX - \boldsymbol{1}_n\bar{\boldsymbol{X}}^\top.
\tag{14}\label{eq:14}
\end{align*}
This centering ensures that the likelihood depends on $\bbX$ only through the centered design
$\bbX_c$ while preserving conjugacy in the $(\btmu,\boldsymbol{\tilde\eta},\btOmega,\btOmega_0)$
blocks.

The key motivation behind~\eqref{eq:13} is that it cancels the inverse square-root factors
appearing in $\bGamma(\bA)$ and $\bGamma_0(\bA)$ within the quadratic forms of the likelihood. For example, using $\bGamma(\bA)=\bC_{\bA}\bJ(\bA)^{-1/2}$ and $\bOmega^{-1}=\bJ(\bA)^{1/2}\btOmega^{-1}\bJ(\bA)^{1/2}$, the material part quadratic form in~\eqref{eq:5} simplifies to an expression involving only $\bC_{\bA}$ and $\btOmega^{-1}$, with no $\bJ(\bA)^{-1/2}$ remaining. An analogous cancellation holds for the immaterial part via $\bD_{\bA}$ and $\btOmega_0^{-1}$.

\paragraph{Key computational implication: the Laplace derivatives for $\bA$.}
Under the MFVF scheme 
\begin{align*}
q^{(t)}(\btmu,\boldsymbol{\tilde\eta},\btOmega,\btOmega_0,\vecr(\bA)) =
q^{(t)}(\btmu)\,q^{(t)}(\boldsymbol{\tilde\eta})\,q^{(t)}(\btOmega)\,q^{(t)}(\btOmega_0)\,q^{(t)}(\vecr(\bA)), \tag{15} \label{eq:15}
\end{align*} 
define the $\bA$-coordinate objective at iteration $t$ by
\begin{align*}
\tf^{(t)}(\bA)
:= \E_{q^{(t)}_{-\bA}}\!\left[\log p\!\left(\bA\mid \btmu,\boldsymbol{\tilde\eta},\btOmega,\btOmega_0,\bbY\right)\right],
\tag{16}\label{eq:16}
\end{align*}
where $q^{(t)}_{-\bA}:=q^{(t)}(\btmu)\,q^{(t)}(\boldsymbol{\tilde\eta})\,q^{(t)}(\btOmega)\,q^{(t)}(\btOmega_0)$. Equivalently, $\tf^{(t)}(\bA)$ can be expressed using the expected log joint, up to an additive constant independent of $\bA$. The explicit moment-expanded form of $\tf^{(t)}(\bA)$ is provided in the Supplementary Material, Section~\ref{sec:B.1}.

Throughout this section, we adopt isotropic inverse-Wishart prior scales,
setting $\bPsi=\psi^{(1)}\bI_u$ and $\bPsi_0=\psi^{(0)}\bI_{r-u}$ for constants
$\psi^{(1)},\psi^{(0)}>0$. Under this choice, the $\bA$-coordinate objective
$\tf^{(t)}(\bA)$ depends on $\bA$ only through the log-determinant term
$\log|\bJ_0(\bA)| := \log|\bI_{r-u}+\bA\bA^\top|$ and linear/quadratic forms
induced by $\bC_{\bA}$ and $\bD_{\bA}$, yielding a simple and efficient Laplace
update. More general prior scales can be handled via a whitening reparameterization
based on the Cholesky decomposition of $\bPsi$ and $\bPsi_0$.

As a result, the gradient of $\tf^{(t)}(\bA)$ with respect to $\bA$ admits an explicit
matrix-valued expression, while the second-order derivative can be characterized as
a Hessian with respect to the vectorized parameter $\vecr(\bA)$.
Specifically, the gradient and Hessian take the following generic forms:
\begin{align*}
&\nabla_{\bA}\tf^{(t)}(\bA)
=
\kappa\, \bJ_0(\bA)^{-1}\bA
+ 2\, \boldsymbol{C}_1^{(t)} \bA \boldsymbol{C}_2^{(t)}
+ \boldsymbol{C}_3^{(t)\top}, \tag{17}\label{eq:17} \\
&\nabla^2_{\vecr(\bA)}\tf^{(t)}(\bA)\\
&\quad=
\kappa\,(\bI_{u} - \bA^\top \bJ_0(\bA)^{-1} \bA)\otimes \bJ_0(\bA)^{-1} \\
&\qquad
- \kappa\,\big((\bJ_0(\bA)^{-1}\bA)^\top\otimes
      (\bJ_0(\bA)^{-1}\bA)\big)\mathcal{K}_{(r-u),u} + 2\,\boldsymbol{C}_2^{(t)} \otimes \boldsymbol{C}_1^{(t)}, \tag{18}\label{eq:18}
\end{align*}
where $\kappa=2n+\nu^{(1)}+\nu^{(0)}$,
$\boldsymbol{C}_1^{(t)}\in\bbR^{(r-u)\times(r-u)}$ and
$\boldsymbol{C}_2^{(t)}\in\bbR^{u\times u}$ are symmetric matrices, and
$\boldsymbol{C}_3^{(t)}\in\bbR^{u\times(r-u)}$ depends only on the variational moments of
the $q^{(t)}(\btmu, \boldsymbol{\tilde{\eta}}, \btOmega, \btOmega_0)$ at iteration $t$ (and on the data and hyperparameters).
Here, the coefficient $\kappa$ collects contributions from the log-determinant term
arising in the likelihood and from the inverse-Wishart priors under the
reparameterization; see Supplementary Material, Section~\ref{sec:B.1}, for the full derivation.

As a consequence of the reparameterization, the $\bA$-coordinate objective~\eqref{eq:16} admits explicit closed-form expressions for both the gradient and Hessian; see~\eqref{eq:17}--\eqref{eq:18}. Evaluating these derivatives requires only forming $\bJ_0(\bA)^{-1}$ (an $(r-u)\times(r-u)$ inverse) and standard matrix multiplications, and is therefore computationally tractable in the dimensions considered here. In contrast, under the Euclidean parameterization of~\citet{cook2016note}, the Laplace derivatives inherit the Kronecker--sum bottlenecks in~\eqref{eq:12}, and second-order derivatives involve repeated Kronecker--sum solves together with additional Kronecker products (Supplementary Material, Section~\ref{sec:A.4}). Finally, forming the Laplace covariance still requires solving a linear system in dimension $u(r-u)$, but this step is substantially less burdensome than the Kronecker--sum computations that dominate the Euclidean parameterization.

\paragraph{Laplace update for the nonconjugate block $\bA$.}
At iteration $t$, given $q_{-\bA}^{(t)}$, we form $\tf^{(t)}(\bA)$ in~\eqref{eq:16} and update
$q^{(t+1)}(\vecr(\bA))$ using the Laplace step~\eqref{eq:10}:
\begin{align*}
&q^{(t+1)}(\vecr(\bA)) \approx q^{L}(\vecr(\bA)) = \calN_{(r-u)u}\!\left(\vecr(\hbA^{(t)}),\, H(\vecr(\hbA^{(t)}))\right),
\tag{19}\label{eq:19}
\end{align*}
where
\begin{align*}
&\hbA^{(t)}=\arg\max_{\bA}\ \tf^{(t)}(\bA), \\
&H(\vecr(\hbA^{(t)}))= -\left[ \nabla^2_{\vecr(\bA)}\tf^{(t)}(\hbA^{(t)})\right]^{-1}.
\end{align*}
In practice, $\hbA^{(t)}$ is obtained via numerical optimization \citep{cook2016note} based on the derivatives in \eqref{eq:17}.

\paragraph{Closed-form updates for the conjugate blocks.}
After updating the nonconjugate envelope factor $q^{(t+1)}(\vecr(\bA))$ via the Laplace step,
the remaining parameter blocks become conditionally conjugate under the MFVF
assumption~\eqref{eq:15}. Consequently, their coordinate updates admit closed-form
expressions obtained from~\eqref{eq:8}, and we choose the corresponding variational
families as
\begin{align*}
    & q^{(t+1)}(\btmu) \sim \calN_{r}(\bmu^{(t+1)}_{q}, \bSigma^{(t+1)}_{q}), \nonumber\\
    & q^{(t+1)}(\boldsymbol{\tilde{\eta}}) \sim
      \mathcal{MN}_{u,p}(\boldsymbol{\eta}^{(t+1)}_{q},
      \bU^{(\boldsymbol{\eta})(t+1)}_{q},
      \bV^{(\boldsymbol{\eta})(t+1)}_{q}), \nonumber\\
    & q^{(t+1)}(\btOmega) \sim
      \mathcal{IW}_{u}(\bPsi^{(1)(t+1)}_{q}, \nu^{(1)(t+1)}_{q}), \nonumber\\
    & q^{(t+1)}(\btOmega_0) \sim
      \mathcal{IW}_{r-u}(\bPsi^{(0)(t+1)}_{q}, \nu^{(0)(t+1)}_{q}).
\tag{20}\label{eq:20}
\end{align*}
These families are chosen to match the functional forms of the corresponding full
conditional posterior distributions under the reparameterized model.

The resulting variational parameters can be expressed in closed form as functions of
(i) the current Laplace approximation $(\hbA^{(t)},\, H(\vecr(\hbA^{(t)})))$,
(ii) the current variational moments of the other conjugate blocks, and
(iii) sufficient statistics of the data $(\bbX,\bbY)$.
For example, the update for $\boldsymbol{\tilde{\eta}}$ is given by
\begin{align*}
    q^{(t+1)}(\boldsymbol{\tilde{\eta}})
    &\sim \mathcal{MN}_{u,p}(\boldsymbol{\eta}^{(t+1)}_q,
    \boldsymbol{U}^{(\boldsymbol{\eta})(t+1)}_q,
    \boldsymbol{V}^{(\boldsymbol{\eta})(t+1)}_q),\\
    \boldsymbol{\eta}^{(t+1)}_q
    &= \bC_{\hbA^{(t)}}^\top
       (\mathbb{X}_c^\top \mathbb{Y} + \boldsymbol{M}\boldsymbol{B}_0^\top)^\top
       (\mathbb{X}_c^\top \mathbb{X}_c + \boldsymbol{M})^{-1},\\
    \boldsymbol{U}^{(\boldsymbol{\eta})(t+1)}_q
    &= \bPsi^{(1)(t+1)}_q/\nu^{(1)(t+1)}_q, \\
    \boldsymbol{V}^{(\boldsymbol{\eta})(t+1)}_q
    &= (\mathbb{X}_c^\top \mathbb{X}_c + \boldsymbol{M})^{-1},
\end{align*}
where $\bC_{\hbA^{(t)}}=\E_{q^{(t+1)}(\vecr(\bA))}[\bC_{\bA}]$ is computed under the Gaussian
Laplace factor~\eqref{eq:19}.
Explicit expressions for the remaining updates are provided in the Supplementary Material, Section~\ref{sec:B.2}.

\paragraph{Algorithm summary and convergence monitoring.}
For convergence monitoring, we evaluate the approximate ELBO at each iteration:
\begin{align*}
\tilde{\calL}^{(t)}(q)
&\approx \tf^{(t)}(\hbA^{(t)}) - \frac{(r-u)u}{2}\\
&\quad - \E_{q^{(t)}(\btmu, \boldsymbol{\tilde{\eta}}, \btOmega, \btOmega_0, \bA)}\!
\left[
    \log q^{(t)}(\btmu, \boldsymbol{\tilde{\eta}}, \btOmega, \btOmega_0, \bA)
\right],
\tag{21}\label{eq:21}
\end{align*}
where $\tf^{(t)}(\hbA^{(t)})$ denotes the plug-in Laplace approximation to the nonconjugate contribution. The full expression of $\tilde{\calL}(q)$ (including constants) is provided in the Supplementary
Material, Section~\ref{sec:B.3}.

Algorithm~\ref{alg1} summarizes the resulting CALVI procedure. At iteration $t$, we update
$q^{(t+1)}(\vecr(\bA))$ via~\eqref{eq:19} and then update the conjugate variational factors via the closed-form
expressions underlying~\eqref{eq:20}, cycling until the relative approximate ELBO improvement falls below a
tolerance.

\begin{algorithm}
\caption{Coordinate ascent Laplace variational inference for the response envelope model}
\label{alg1}
\begin{algorithmic}[1]
    \State Initialize variational factors $q^{(1)}(\btmu)$, $q^{(1)}(\boldsymbol{\tilde{\eta}})$,
           $q^{(1)}(\btOmega)$, $q^{(1)}(\btOmega_0)$, and $q^{(1)}(\vecr(\bA))$
    \State Set tolerance $\epsilon$, maximum iterations $T$
    \State $t \gets 1$
    \While{$t < T$}
        \State Compute $\tilde{\calL}(q)^{(t)}$ using \eqref{eq:21}
        \State Update $q^{(t+1)}(\vecr(\bA))$ using \eqref{eq:19}
        \State Update $q^{(t+1)}(\boldsymbol{\tilde{\eta}})$ using closed-form CAVI updates (Supplementary Material, Section~\ref{sec:B.2})
        \State Update $q^{(t+1)}(\btOmega)$ using closed-form CAVI updates (Supplementary Material, Section~\ref{sec:B.2})
        \State Update $q^{(t+1)}(\btOmega_0)$ using closed-form CAVI updates (Supplementary Material, Section~\ref{sec:B.2})
        \State Update $q^{(t+1)}(\btmu)$ using closed-form CAVI updates (Supplementary Material, Section~\ref{sec:B.2})
        \State Compute $\tilde{\calL}(q)^{(t+1)}$ using \eqref{eq:21}
        \If{$|\tilde{\calL}(q)^{(t+1)} - \tilde{\calL}(q)^{(t)}|
             < \epsilon\,|\tilde{\calL}(q)^{(t+1)}|$}
            \State \textbf{break}
        \EndIf
        \State $t \gets t + 1$
    \EndWhile
    \State \Return $q^{(t)}(\btmu)$, $q^{(t)}(\boldsymbol{\tilde{\eta}})$, $q^{(t)}(\btOmega)$, $q^{(t)}(\btOmega_0)$,
           $q^{(t)}(\vecr(\bA))$, and $\{\tilde{\calL}(q)^{(1)} ,\dots, \tilde{\calL}(q)^{(t)}\}$
\end{algorithmic}
\end{algorithm}

\section{Theoretical justification of the CALVI Laplace update}
\label{sec5}

This section quantifies the accuracy of the Laplace approximation used in CALVI.
We first establish a generic result for the one-step Laplace approximation in the
nonconjugate coordinate update~\eqref{eq:10}, and then apply it to the Bayesian response
envelope model to justify the model-specific update~\eqref{eq:19}.

\paragraph{Laplace approximation error in a general variational setting.}
Recall the notation of Section~\ref{sec3}. The target posterior is $p(\btheta\mid\bbY_n)$, with a block decomposition $\btheta= (\btheta_1,\btheta_2)$.
Under the MFVF assumption $q(\btheta)=q_1(\btheta_1)\,q_2(\btheta_2)$, the block
$\btheta_1\in\bbR^{d_1}$ denotes the conditionally nonconjugate component, while
$\btheta_2\in\bbR^{d_2}$ denotes the conditionally conjugate component.
Since we focus on the one-step Laplace approximation error in~\eqref{eq:10}, we fix the
variational factor for the conjugate block at the current CAVI iterate, denote it by
$q_{2,n}^{\dagger}(\btheta_2)$, and suppress the explicit iteration index $(t)$.
Throughout this section, all quantities with subscript $n$ may depend on the sample size $n$,
whereas the dimensions $d_1$ and $d_2$ are fixed.

The exact CAVI coordinate update for the nonconjugate block $\btheta_1$ is the distribution with density proportional to
\begin{align*}
q_n^{VI}(\btheta_1 \mid q_{2,n}^{\dagger})
\;\propto\;
\exp\!\big\{\tf_n^\dagger(\btheta_1)\big\},
\label{eq:def-qVI} \tag{22}
\end{align*}
which coincides with the $i=1$ instance of~\eqref{eq:8} when the conjugate variational factor
$q_{2,n}^{\dagger}$ is held fixed. The exponent $\tf_n^\dagger(\btheta_1)$, defined in~\eqref{eq:tfdef} below, matches the kernel in~\eqref{eq:def-qVI}. Any term independent of $\btheta_1$ is absorbed into the normalizing constant.

In CALVI, the one-step coordinate update for the nonconjugate block replaces the exact update~\eqref{eq:def-qVI} with its Laplace approximation
\begin{align*}
q_n^{L}(\btheta_1)
\;:=\;
\calN\!\left(\hat{\btheta}_{1_n},
\big[-\tf_n^{\dagger\prime\prime}(\hat{\btheta}_{1_n})\big]^{-1}\right),
\label{eq:def-qL} \tag{23}
\end{align*}
where $\hat{\btheta}_{1_n}:=\arg\max_{\btheta_1}\tf_n^\dagger(\btheta_1)$. The primary goal is to show that, for fixed $q_{2,n}^{\dagger}$ and after rescaling, the Laplace-based update~\eqref{eq:def-qL} is asymptotically equivalent to the exact CAVI update~\eqref{eq:def-qVI} in TV distance.

To formalize this comparison on the local $n^{-1/2}$ scale, define the rescaled variable
\[
\bX := \sqrt n\,(\btheta_1-\hat{\btheta}_{1_n}),
\]
and let $\overline{q_n^{VI}}$ and $\overline{q_n^{L}}$ denote the laws of $\bX$ under $q_n^{VI}(\cdot\mid q_{2,n}^{\dagger})$ and $q_n^{L}$, respectively. Our objective is to establish that
\[
d_{TV}\!\left(\overline{q_n^{L}},\,\overline{q_n^{VI}}\right)\xrightarrow{P}0,
\qquad n\to\infty,
\]
under suitable regularity conditions.

Before stating these conditions, we introduce terminology that will be used throughout the analysis. With $q_{2,n}^\dagger$ fixed, write the variational log-likelihood
\begin{align*}
\ell_n^\dagger(\btheta_1)
:= \int q_{2,n}^\dagger(\btheta_2)\,
\log p(\bbY_n,\btheta_2\mid\btheta_1)\, d\btheta_2,
\tag{24}\label{eq:elldef}
\end{align*}
and write the corresponding variational log-posterior (up to an additive constant in $\btheta_1$) as
\begin{align*}
\tf_n^\dagger(\btheta_1)
= \ell_n^\dagger(\btheta_1) + \log\pi(\btheta_1).
\tag{25}\label{eq:tfdef}
\end{align*}
Equivalently, the exact coordinate-update distribution~\eqref{eq:def-qVI} is the normalized law with density proportional to $\exp\{\tf_n^\dagger(\btheta_1)\}$. All assumptions introduced below are imposed on $\ell_n^\dagger$ and $\tf_n^\dagger$.

We impose Assumptions~\ref{assum1}--\ref{assum5}, adapted from \citet[Theorem~17]{kasprzak2025laplace} to the present variational coordinate-update setting. These conditions coincide in structure with those of \citet{kasprzak2025laplace}, but are stated here in terms of the variational log-likelihood and log-posterior relevant to CALVI.
\begin{assumption}
\label{assum1}
Let $\ell_n^{\dagger}(\btheta_1)$ denote the variational log-likelihood defined in~\eqref{eq:elldef}. Assume that:
\begin{enumerate}
\item[(a)] There exists a unique maximum variational likelihood estimator (MVLE)
\[
\bar{\btheta}_{1_n}
:= \arg\max_{\btheta_1}\,\ell_n^{\dagger}(\btheta_1).
\]

\item[(b)] There exists $\bar{\delta}>0$ such that $\ell_n^{\dagger}(\btheta_1)$ is three-times continuously differentiable on the closed ball
\[
\calB(\bar{\btheta}_{1_n},\bar{\delta})
:= \big\{\,\btheta_1:\ \|\btheta_1-\bar{\btheta}_{1_n}\|\le \bar{\delta}\,\big\}.
\]

\item[(c)] There exists a constant $\bar{M}_1>0$ such that
\[
\sup_{\btheta_1 \in \calB(\bar{\btheta}_{1_n},\bar{\delta})}
\frac{\big\|\,\ell_n^{\dagger\prime\prime\prime}(\btheta_1)\,\big\|^{*}}{n}
\le \bar{M}_1,
\]
where the norm of the third derivative is defined by
\begin{align*}
    g^{\prime \prime \prime}(\theta)[a, b, c] := \sum_{i,j,k=1}^{d_1}
    \frac{\partial^3 g(\theta)}{\partial \theta_{i}\partial \theta_{j}\partial \theta_{k}}
    \, a_{i} b_{j} c_{k}, \qquad
    \| g^{\prime \prime \prime}(\theta) \|^{*}
    := \sup_{\|a\| \leq 1,\ \|b\| \leq 1,\ \|c\| \leq 1}
    \left| g^{\prime \prime \prime}(\theta)[a, b, c] \right|.
\end{align*}
\end{enumerate}
\end{assumption}
Assumption~\ref{assum1} ensures that the variational posterior admits a local quadratic approximation on the neighborhood $\calB(\bar{\btheta}_{1_n}, \bar{\delta})$ around the MVLE $\bar{\btheta}_{1_n}$.

\begin{assumption}
\label{assum2}
For the same $\bar{\delta} > 0$ and $\calB(\bar{\btheta}_{1_n}, \bar{\delta})$ as in Assumption~\ref{assum1},
there exists a constant $\bar{M}_2 > 0$ such that
\[
\sup_{\btheta_1 \in \calB(\bar{\btheta}_{1_n}, \bar{\delta})}
\left|\frac{1}{\pi(\btheta_1)}\right|
\le \bar{M}_2.
\]
\end{assumption}
Assumption~\ref{assum2} guarantees that $\calB(\bar{\btheta}_{1_n}, \bar{\delta})$ has strictly positive prior probability, preventing the local posterior behavior
from being dominated by the prior.

\begin{assumption}
\label{assum3}
Let $\tf_n^{\dagger}(\btheta_1)$ denote the variational log-posterior defined in~\eqref{eq:tfdef}, and let
$\lambda_{\min}(\cdot)$ denote the minimum eigenvalue of a symmetric matrix. Assume that:
\begin{enumerate}
\item[(a)] There exists a unique maximum variational posterior (MAVP),
\[
\hat{\btheta}_{1_n}
:= \arg\max_{\btheta_1}\, \tf_n^{\dagger}(\btheta_1).
\]

\item[(b)] There exists $\hat{\delta}>0$ such that the log prior $\log\pi(\btheta_1)$ is
three-times continuously differentiable on the closed ball
\[
\calB(\hat{\btheta}_{1_n},\hat{\delta})
:= \big\{\btheta_1:\ \|\btheta_1-\hat{\btheta}_{1_n}\|\le \hat{\delta}\big\}.
\]

\item[(c)] There exists a constant $\hat{M}_1>0$ such that
\begin{align*}
\sup_{\btheta_1 \in \calB(\hat{\btheta}_{1_n},\hat{\delta})}
\frac{\big\| \tf_n^{\dagger\prime\prime\prime}(\btheta_1)\big\|^{*}}{n}
\le \hat{M}_1, \qquad
\lambda_{\min}\!\left(-\frac{\tf_n^{\dagger\prime\prime}(\hat{\btheta}_{1_n})}{n}\right)
> \hat{\delta}\,\hat{M}_1 .
\end{align*}
\end{enumerate}
\end{assumption}
Assumption~\ref{assum3} ensures that, after incorporating the prior, the induced
variational log-posterior remains strongly concave in a neighborhood of the MAVP,
thereby justifying a Gaussian approximation at the posterior level.

\begin{assumption}
\label{assum4}
For the same $\bar{\delta} > 0$ as in Assumption~\ref{assum1} and the same $\hat{\delta} > 0$ as in Assumption~\ref{assum3},
\begin{align*}
\max \left\{
\|\bar{\btheta}_{1_n} - \hat{\btheta}_{1_n}\|,
\ \sqrt{ \tr \!\left[ - \tf_n^{\dagger\prime\prime}(\hat{\btheta}_{1_n})^{-1} \right] }
\right\}
< \hat{\delta}, \qquad 
\sqrt{ \tr \!\left[ - \ell_n^{\dagger\prime\prime}(\bar{\btheta}_{1_n})^{-1} \right] }
< \bar{\delta}.
\end{align*}
\end{assumption}
Assumption~\ref{assum4} ensures that, for sufficiently large $n$, the local Gaussian
scales induced by the Hessians fit within the neighborhoods
$\calB(\bar{\btheta}_{1_n}, \bar{\delta})$ and
$\calB(\hat{\btheta}_{1_n}, \hat{\delta})$, and that the MVLE and MAVP are sufficiently
close for the corresponding local quadratic approximations to overlap.

\begin{assumption}
\label{assum5}
For the same $\hat{\delta} > 0$ as in Assumption~\ref{assum3}, there exists a constant $\hat{\kappa} > 0$ such that
\[
\sup_{\btheta_1:\ \| \btheta_1 - \bar{\btheta}_{1_n} \| > \hat{\delta}
- \| \bar{\btheta}_{1_n} - \hat{\btheta}_{1_n}\|}
\frac{ \ell_n^{\dagger}(\btheta_1) - \ell_n^{\dagger}(\bar{\btheta}_{1_n})}{n}
\le -\hat{\kappa}.
\]
\end{assumption}
Assumption~\ref{assum5} guarantees sufficient tail decay of the variational
log-likelihood, ensuring that regions away from the MVLE contribute negligibly to the
variational posterior.

\begin{theorem}
\label{main-theorem}
Suppose that Assumptions~\ref{assum1}--\ref{assum5} hold and adopt the notation above. Then
\[
d_{TV}\!\left(\overline{q_n^L},\overline{q_n^{VI}}\right)\xrightarrow{P}0, \qquad \text{as} \quad  n \rightarrow \infty.
\]
\end{theorem}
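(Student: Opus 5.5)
The strategy is a reduction: show that $q_n^{VI}(\cdot\mid q_{2,n}^\dagger)$ is formally a Bayesian posterior with a (pseudo-)likelihood, and then invoke the non-asymptotic Laplace TV bound of \citet[Theorem~17]{kasprzak2025laplace}. Concretely, define the pseudo-likelihood
\[
L_n^\dagger(\btheta_1):=\exp\{\ell_n^\dagger(\btheta_1)\}.
\]
By \eqref{eq:def-qVI} and \eqref{eq:tfdef}, $q_n^{VI}(\btheta_1\mid q_{2,n}^\dagger)\propto L_n^\dagger(\btheta_1)\pi(\btheta_1)$, which is exactly the posterior under likelihood $L_n^\dagger$ and prior $\pi$. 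Likewise, $q_n^L$ in \eqref{eq:def-qL} is exactly the Laplace approximation of that posterior: it is centred at the MAP $\hat{\btheta}_{1_n}$, with covariance given by the inverse negative Hessian of the log-posterior $\tf_n^\dagger$. The result of \citet{kasprzak2025laplace} only uses analytic properties of the log-likelihood and log-prior, namely smoothness, third-derivative bounds, local strong concavity, the MLE/MAP locations and tail decay. It does not use the fact that $\ell_n$ is a sum of i.i.d.\ log densities. Hence $\ell_n^\dagger$ can be substituted for $\ell_n$, and Assumptions~\ref{assum1}--\ref{assum5} are precisely its hypotheses with $(\bar{\btheta}_{1_n},\hat{\btheta}_{1_n})$ playing the roles of the MLE and MAP.

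Second, I would record that total variation is invariant under the bijection $\btheta_1\mapsto\sqrt n(\btheta_1-\hat{\btheta}_{1_n})$. Therefore $d_{TV}(\overline{q_n^L},\overline{q_n^{VI}})=d_{TV}(q_n^L,q_n^{VI})$, and the rescaling is only cosmetic. Theorem~17 then yields an explicit bound, $d_{TV}\le \mathcal{D}_n$, where $\mathcal{D}_n$ is a sum of terms of three types:
\begin{itemize}
\item a main term of order $\hat M_1 d_1/\sqrt n$ divided by a power of $\lambda_{\min}(-\tf_n^{\dagger\prime\prime}(\hat{\btheta}_{1_n})/n)$, coming from the third-derivative control on $\calB(\hat{\btheta}_{1_n},\hat\delta)$;
\item Gaussian tail terms for mass outside the balls, which are exponentially small in $n$ because Assumption~\ref{assum4} places the Gaussian scales inside the balls;
\item a tail term from Assumption~\ref{assum5} and Assumption~\ref{assum2}, bounded by something like $\bar M_2\, n^{d_1/2}\exp(-n\hat\kappa)$ times a determinant factor.
\end{itemize}

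Third, I would show that $\mathcal{D}_n\to0$. The constants $\bar M_1,\bar M_2,\hat M_1,\hat\kappa,\bar\delta,\hat\delta$ are fixed and $d_1$ is fixed. The curvature condition in Assumption~\ref{assum3}(c) bounds $\lambda_{\min}$ below by $\hat\delta\hat M_1>0$ uniformly in $n$. So the main term is $O(n^{-1/2})$ and the remaining terms decay exponentially. The assumptions are imposed on random (data-dependent) objects and may hold only on events whose probability tends to one. On those events the bound applies, which gives convergence in probability.

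The main obstacle I expect is the careful matching of Theorem~17's hypotheses. In particular, its tail condition is phrased with the radius $\hat\delta-\|\bar{\btheta}_{1_n}-\hat{\btheta}_{1_n}\|$ and requires control of the normalising constant. That constant is a ratio of $\int e^{\ell_n^\dagger}\pi$ over the ball to the integral over its complement, and I would handle it using $\ell_n^\dagger$'s local quadratic lower bound from Assumption~\ref{assum1}(c) together with the prior lower bound of Assumption~\ref{assum2}. I would also need to confirm that $\ell_n^\dagger$ is integrable against $\pi$, so that $q_n^{VI}$ is proper, which Assumption~\ref{assum5} should provide.
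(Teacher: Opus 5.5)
This is essentially the paper's argument. The paper proves an explicit nonasymptotic bound (Theorem~\ref{theorem1}) by re-running the proof of \citet[Theorem~17]{kasprzak2025laplace} with $\ell_n^\dagger$ in place of the log-likelihood: a split at radius $\hat\delta\sqrt n$; Pinsker plus a Bakry--\'Emery log-Sobolev bound and a third-order Taylor estimate inside the ball; Gaussian tail bounds together with Assumptions~\ref{assum2} and~\ref{assum5} outside it. You instead invoke that theorem as a black box after identifying $q_n^{VI}$ as the posterior for the pseudo-likelihood $e^{\ell_n^\dagger}$, which is legitimate because nothing there uses i.i.d.\ structure.

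One correction to your third step. The leading term's denominator is the log-Sobolev constant $\lambda_{\min}\big(-\tf_n^{\dagger\prime\prime}(\hat{\btheta}_{1_n})/n\big)-\hat\delta\hat M_1$, not $\lambda_{\min}$ itself. So the strict inequality in Assumption~\ref{assum3}(c) does not by itself give an $O(n^{-1/2})$ rate: you need this gap to stay bounded away from zero uniformly in $n$. The paper also assumes this tacitly, by treating all its constants as fixed.
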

This theorem establishes a local one-step equivalence between the Laplace coordinate update and the exact CAVI update; it does not address the global behavior of the full iterative scheme. The proof is based on an explicit nonasymptotic bound on the TV distance and is deferred to Supplementary Material, Section~\ref{sec:D}.

\paragraph{Verification for the Bayesian response envelope model.}
We now introduce a model-specific assumption under which
Assumptions~\ref{assum1}--\ref{assum5} are satisfied for the reparameterized
Bayesian response envelope model.
\begin{assumption}
\label{assum6}
Let $\bA^{\star}$ denote the true parameter under the chosen identification implied by~\eqref{eq:4}.
There exist $\rho>0$ and $\delta>0$ such that, letting
\begin{align*}
\mathcal N_{\rho} &:=\{\bA:\|\bA-\bA^\star\|_F\le\rho\},\\
\mathbf{J}_n(\bA) &:= -\bbE_{q_n^{\dagger}(\bZ)} \big[\nabla^2_{\vecr(\bA)} \log p(\bbY_n, \bZ \mid \bA) \big],
\end{align*}
the following hold:
\begin{enumerate}
\item[(a)]
With probability tending to one, the global maximizer sets are nonempty and satisfy
\begin{align*}
\bbA_n:=\argmax_{\bA}\ell^{\dagger}_n(\bA)\ \subset\ \operatorname{int}(\mathcal N_{\rho}), \qquad
\hbA_n:=\argmax_{\bA}\tf_n^\dagger(\bA)\ \subset\ \operatorname{int}(\mathcal N_{\rho}).
\end{align*}
In particular, on this event the maximizers over $\mathcal N_{\rho}$ coincide with the global maximizers.

\item[(b)]
With probability tending to one,
\[
\inf_{\bA\in\mathcal N_{\rho}}\lambda_{\min}\!\Big(\frac{1}{n}\mathbf J_n(\bA)\Big)\ \ge\ \delta.
\]
\end{enumerate}
\end{assumption}
Assumption~\ref{assum6}(a) is motivated by existing localization results for envelope estimators. In particular, under standard regularity conditions the envelope MLE is consistent for $\bA^\star$ \citep{cook2010envelope, cook2016note}, so it is reasonable to expect that, for sufficiently large $n$, the maximizers of $\ell_n^\dagger(\bA)$ and $\tf_n^\dagger(\bA)$ concentrate in a (model-dependent) neighborhood of $\bA^\star$; Assumption~\ref{assum6}(a) formalizes this by postulating the existence of a compact ball $\mathcal N_\rho$
that contains the global maximizers with high probability.

From a computational standpoint, when this localization event holds, initializing the numerical maximization of $\tf_n^\dagger(\bA)$ at a consistent pilot estimator such as the envelope MLE provides a practical way to target the relevant maximizer inside $\mathcal N_\rho$.

Assumption~\ref{assum6}(b) is a curvature condition that is consistent with standard information-matrix behavior in regular parametric models: the information for $\bA$ is positive definite at $\bA^\star$ and varies continuously in $\bA$ \citep{cook2010envelope}, so for $\rho$ chosen sufficiently small one expects a uniform eigenvalue lower bound on $\mathcal N_\rho$. Since $\mathbf J_n(\bA)$ averages the observed information under the fixed variational distribution $q_n^\dagger(\bZ)$, this assumption ensures that the Laplace covariance in the $\bA$-update is well-defined and uniformly well-conditioned. Consistent with this assumption, our numerical experiments in Section~\ref{sec6} did not exhibit near-singularity of the Hessian along CALVI iterations.

Theorem~\ref{main-theorem} applies to the reparameterized model, yielding the following Theorem~\ref{thm:yenv}.
\begin{theorem}\label{thm:yenv}
Under the reparameterized response envelope model, suppose that Assumption~\ref{assum6} holds.
Define
\[
\bX \;:=\; \sqrt{n}\,\big(\vecr(\bA)-\vecr(\hbA_n)\big),
\]
with $\hbA_n$ denoting the MAVP for $\bA$, corresponding to
$\hat{\btheta}_{1_n}$ under the identification $\btheta_1\equiv\bA$.
Let $\overline{q_n^{L}}$ and $\overline{q_n^{VI}}$ denote the laws of $\bX$ under
$\bA\sim q_n^{L}$ and $\bA\sim q_n^{VI}$, respectively.
Then
\[
d_{TV}\!\left(\overline{q_n^{L}}, \overline{q_n^{VI}}\right)\xrightarrow{P} 0.
\]
\end{theorem}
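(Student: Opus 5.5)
The plan is to derive Theorem~\ref{thm:yenv} from Theorem~\ref{main-theorem}. With $\btheta_1\equiv\vecr(\bA)$ and $\btheta_2\equiv\bZ=(\btmu,\boldsymbol{\tilde\eta},\btOmega,\btOmega_0)$, and $q_{2,n}^\dagger$ fixed, it suffices to check Assumptions~\ref{assum1}--\ref{assum5} on an event whose probability tends to one. On that event we take $\bar\delta$ and $\hat\delta$ as fixed constants chosen from $\rho$. The first step is to record the structure of $\tf_n^\dagger$ from Section~\ref{sec4.2}. It consists of $-\tfrac{\kappa}{2}\log|\bJ_0(\bA)|$, which comes from the likelihood and the inverse-Wishart terms, plus a quadratic polynomial in $\bA$ given by \eqref{eq:17}--\eqref{eq:18}, plus the Gaussian log prior from (v). We have $\kappa=2n+\nu^{(1)}+\nu^{(0)}=O(n)$, and the coefficients $\bC_1,\bC_2,\bC_3$ are sums over $n$ observations weighted by fixed variational moments, so under standard moment conditions they are $O_P(n)$. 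The log-determinant term is real-analytic in $\bA$ with derivatives bounded uniformly on compact sets, because $\bJ_0(\bA)\succeq \bI$. Hence $\ell_n^\dagger$ and $\tf_n^\dagger$ are $C^\infty$, and the third derivatives, which come only from the log-determinant term since the quadratic parts have zero third derivative, satisfy $\sup_{\mathcal N_{2\rho}}\|\ell_n^{\dagger\prime\prime\prime}\|^*/n\le \bar M_1$ with a deterministic constant. This gives Assumptions~\ref{assum1}(b),(c) and the first half of \ref{assum3}(c). Assumptions~\ref{assum2} and \ref{assum3}(b) are immediate, because the matrix-normal prior is smooth and bounded away from zero on bounded sets.

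The next step is to establish uniqueness and the location of the maximizers. By Assumption~\ref{assum6}(b), $-\ell_n^{\dagger\prime\prime}=\mathbf J_n\succeq n\delta\bI$ on $\mathcal N_\rho$, so $\ell_n^\dagger$ is strictly concave there. Since the prior Hessian is $O(1)$, $\tf_n^\dagger$ is also strictly concave there for large $n$, with curvature at least $n\delta/2$. Combined with \ref{assum6}(a), this means each global maximizer set lies inside $\operatorname{int}(\mathcal N_\rho)$ and is a singleton, which gives \ref{assum1}(a) and \ref{assum3}(a). Choosing $\hat\delta<\delta/(2\hat M_1)$ yields the eigenvalue condition in \ref{assum3}(c). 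For Assumption~\ref{assum4}, the trace terms are at most $\sqrt{d_1/(n\delta/2)}\to0$. To bound $\|\bar{\btheta}_{1_n}-\hat{\btheta}_{1_n}\|$, I would combine the first-order condition $\nabla\ell_n^\dagger(\hat\btheta)=-\nabla\log\pi(\hat\btheta)=O(1)$ with strong concavity along the segment joining the two maximizers, which lies in the convex set $\mathcal N_\rho$. This gives $\|\bar\btheta-\hat\btheta\|\le O(1)/(n\delta)\to0$.

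The main obstacle is Assumption~\ref{assum5}, the global separation condition, because the objective is not concave outside $\mathcal N_\rho$. Inside $\mathcal N_\rho$, strong concavity gives $\ell_n^\dagger(\btheta)-\ell_n^\dagger(\bar\btheta)\le -\tfrac{n\delta}{2}\|\btheta-\bar\btheta\|^2$. For points at distance greater than $\hat\delta/2$ from $\bar\btheta$ that still lie in $\mathcal N_\rho$, this yields the bound $-n\delta\hat\delta^2/8$, after shrinking $\hat\delta$ so that $\hat\delta\le\rho-\|\bar\btheta-\bA^\star\|$. For points outside $\mathcal N_\rho$, I would use concavity along rays: if $\btheta\notin\mathcal N_\rho$, the ray from $\bar\btheta$ to $\btheta$ crosses $\partial\mathcal N_\rho$. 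Concavity does not hold globally, however, so instead I would use the growth of the objective. The quadratic coefficient $2\bC_2\otimes\bC_1$ is, I expect, negative definite of order $n$, and the log-determinant grows only like $\log\|\bA\|$. Together these force $\ell_n^\dagger\to-\infty$ quadratically at infinity, which reduces the problem to a compact annulus. On that annulus, the uniqueness of the global maximizer in \ref{assum6}(a), together with the fact that $\ell_n^\dagger/n$ converges uniformly on compacts (a law of large numbers for the coefficient matrices), gives a strictly negative limit gap, and hence a constant $\hat\kappa>0$ with probability tending to one. If this uniform-convergence route turns out to be too delicate, the fallback is to read \ref{assum6}(a) as a uniform localization statement and strengthen it accordingly. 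Once all assumptions hold on an event of probability tending to one, Theorem~\ref{main-theorem}, whose nonasymptotic bound depends only on these constants, yields $d_{TV}(\overline{q_n^{L}},\overline{q_n^{VI}})\xrightarrow{P}0$.
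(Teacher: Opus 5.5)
Your overall plan is the paper's: verify Assumptions~\ref{assum1}--\ref{assum5} on a high-probability event and apply Theorem~\ref{main-theorem}. Your handling of Assumptions~\ref{assum1}--\ref{assum4} matches Lemmas~\ref{lem:local-concavity}--\ref{lem:bound-M2} and the paper's check of Assumption~\ref{assum4}: curvature from Assumption~\ref{assum6}(b), third derivatives coming only from the log-determinant, the Gaussian prior bounded below on bounded sets, and $\|\bbA_n-\hbA_n\|_F=O(n^{-1})$ from the two first-order conditions.

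\textbf{The gap.} The genuine gap is where you expected it: Assumption~\ref{assum5} on the complement of $\mathcal N_\rho$. There, Assumption~\ref{assum6}(a) only gives $\ell_n^\dagger(\bA)<\ell_n^\dagger(\bbA_n)$, not a deficit of order $n$. Your route does not supply one, for two reasons.
\begin{itemize}
\item[(i)] A uniform law of large numbers for $\ell_n^\dagger/n$, or even a coercivity radius uniform in $n$, requires control of the moments of the fixed factor $q_{2,n}^\dagger$ (e.g.\ $\bmu_q$, $\bSigma_q$, $\boldsymbol\eta_q$, $\nu_q^{(1)}\bPsi_q^{(1)-1}$). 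No hypothesis constrains these. Negative definiteness of the quadratic part does follow from \eqref{eq:tAobj2}, but a lower bound of order $n$ on it does not.
\item[(ii)] Uniqueness of the maximizer at each finite $n$ does not imply a well-separated maximizer of any limit.
\end{itemize}

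Also, by \eqref{eq:tAobj} the log-determinant enters as $+\tfrac{2n+\nu^{(1)}+\nu^{(0)}}{2}\log|\bI_{r-u}+\bA\bA^\top|$, with a plus sign rather than a minus. Its Hessian at $\bA=\boldsymbol{0}$ is $(2n+\nu^{(1)}+\nu^{(0)})\bI$, so it works against concavity and competing modes are a real possibility.

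\textbf{A counterexample.} Take $u=r-u=1$ and an objective of exactly this form, $\ell(a)=c\log(1+a^2)-\tfrac{c}{2}a^2+ba$, with $c=\tfrac{2n+\nu^{(1)}+\nu^{(0)}}{2}$ and fixed $b>0$. The quadratic and linear coefficients are set by the free moments of $q_{2,n}^\dagger$, so this is admissible.
\begin{itemize}
\item It has local maxima near $a=\pm1$, each with curvature about $-c$.
\item The maximum near $+1$ is the unique global maximizer, and Assumption~\ref{assum6}(a),(b) hold on a fixed ball around it.
\item The two mode heights differ by only about $2b$.
\end{itemize}
So Assumption~\ref{assum5} fails for every fixed $\hat\kappa>0$. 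Moreover $q_n^{VI}$ keeps non-vanishing mass near $a=-1$, so even the conclusion fails under Assumption~\ref{assum6} alone.

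\textbf{What the paper does and how to fix it.} The paper does not close this step either. Lemma~\ref{lem:variational-posterior} controls $\{\ell_n^\dagger(\bA)-\ell_n^\dagger(\bbA_n)\}/n$ only on the bounded shell $\rho_n\le\|\bA-\bbA_n\|_F\le\bar\delta_y$ inside $\mathcal N$. Assumption~\ref{assum5}, however, is a supremum over the whole unbounded complement of a ball, and the proof of Theorem~\ref{theorem1} uses it that way for the tail term $F_{2,1}$.

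What is needed is an added hypothesis, which is your fallback made explicit. For instance: with probability tending to one, $\sup_{\bA\notin\mathcal N_\rho}\{\ell_n^\dagger(\bA)-\ell_n^\dagger(\bbA_n)\}/n\le-\kappa_0$ for a fixed $\kappa_0>0$. Combine this with your bound $\ell_n^\dagger(\bA)-\ell_n^\dagger(\bbA_n)\le-\tfrac{n\delta}{2}\|\bA-\bbA_n\|_F^2$ on the convex set $\mathcal N_\rho$. Together they give Assumption~\ref{assum5} with $\hat\kappa=\min\{\kappa_0,\delta\hat\delta^2/8\}$, and the rest of your argument goes through.

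Finally, drop the step that shrinks $\hat\delta$ to $\rho-\|\bbA_n-\bA^\star\|_F$. It makes $\hat\delta$ random and possibly vanishing, which could spoil the terms $e^{\hat{\mathcal T}_n}$ and $n^{d_1/2}e^{-n\hat\kappa}$ in Theorem~\ref{theorem1}. It is also unnecessary: the descent bound uses only convexity of $\mathcal N_\rho$, and your third-derivative bound holds on $\mathcal N_{2\rho}$.
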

The proof of Theorem~\ref{thm:yenv} is given in Supplementary Material, Section~\ref{sec:E}.

\section{Numerical Studies} 
\label{sec6}

\subsection{Simulation studies}

We conduct simulation studies of the Bayesian response envelope model. The performance of CALVI is assessed in two aspects: (1) estimation accuracy of $\bbeta$ and (2) computation time. We compare CALVI with three competing methods: (i) the Euclidean MH–within–Gibbs sampler with a likelihood-driven proposal
proposed by \citet{chakraborty2024comprehensive} (hereafter, MH-Gibbs); (ii) the ADVI method implemented by \citet{chakraborty2024comprehensive} in their Supplementary Material; and (iii) the Stiefel manifold Gibbs sampler of \citet{khare2017bayesian}
(hereafter, Manifold-Gibbs).

Unless stated otherwise, CALVI is terminated using the stopping rule in Algorithm~\ref{alg1} with tolerance $\epsilon = 10^{-6}$ and maximum iterations $T = 10{,}000$. For the MCMC-based methods (MH-Gibbs and Manifold-Gibbs), we run each chain for $10{,}000$ iterations, discarding the first $5{,}000$ iterations as burn-in. Although the notion of an iteration differs between optimization-based and sampling-based methods, this choice is intended to provide a roughly comparable computational budget across methods.

Throughout, the superscript ${*}$ denotes the true values of the parameters. We generate 100 replicated datasets according to the response envelope model in~\eqref{eq:4}, with $n=100, 200, 500, 1000$, $r=20$, $p=7$ for each true dimension of the envelope $u^{*}=2, 5$. The elements of $\bmu^{*}$, $\boldsymbol{\eta}^{*}$, and $\bA^{*}$ are independently sampled from Uniform (0, 10), Uniform (0, 10), and Uniform (-1, 1), respectively. The matrices $\bOmega^{*}$ and $\bOmega_0^{*}$ are diagonal, with diagonal elements independently drawn from Uniform(0, 1) and Uniform(5, 10), respectively.

We specify the following vague priors for each parameter: $\boldsymbol{\eta} \sim \mathcal{MN}(\boldsymbol{0}, \bOmega, 10^{6} \bI_p)$, $\bA \sim \mathcal{MN}(\boldsymbol{0}, 10^{6} \bI_{r-u}, 10^6 \bI_u)$ , $\bOmega \sim \mathcal{IW}_u(u,10^{-6} \bI_u)$, and $\bOmega_0 \sim \mathcal{IW}_{r-u}(r-u,10^{-6} \bI_{r-u})$. These prior specifications are the same as those used in the simulation study of \citet{chakraborty2024comprehensive}. For the Manifold-Gibbs sampler, which directly parameterizes the envelope subspace on the Stiefel manifold and does not involve the $\bA$-parameterization, we employ the conventional noninformative prior used in \citet{khare2017bayesian}. In particular, a uniform (Haar) prior is imposed on the orthonormal basis matrix defining the envelope subspace, together with weakly informative diagonal priors on $\bOmega$ and $\bOmega_0$. Due to these fundamental differences in parameterization, the priors cannot be matched exactly across methods; instead, each method is evaluated under its standard vague prior specification.

In this simulation study, the envelope dimension $u$ is treated as unknown and estimated from the data. For the Bayesian methods, we estimate $u$ via model averaging using a BIC-based approximation to the marginal likelihood \citep{Kass01061995}. Specifically, the posterior distribution of $u$ is approximated as:
\begin{align*}
    p(u = \mathcal{M} \mid \bbY) = 
     \quad \frac{\exp \left( - \mathrm{BIC}(\mathcal{M})/2 \right)\,\pi(u=\mathcal{M})}
    {\sum_{\mathcal{M}^{\prime}=0}^{r} \exp \left( - \mathrm{BIC}(\mathcal{M}^{\prime})/2 \right)\,\pi(u=\mathcal{M}^{\prime})}, \quad \mathcal{M} = 0, \dots, r,
\end{align*}
where $\mathrm{BIC}(\mathcal{M}) = -2\,\hat{\mathcal{L}}(\mathcal{M}) + d_{\mathcal{M}} \log(n)$. 
Here, $\hat{\mathcal{L}}(\mathcal{M})$ denotes the log-likelihood evaluated at the variational posterior mean for CALVI, and at the maximum over posterior samples for each MCMC method. The effective number of parameters, $d_{\mathcal{M}}$, for the response envelope model is given by $r + r(r+1)/2 + up$.

The mean posterior probabilities for the envelope dimension $u$ across 100 replicated datasets are reported in Table~\ref{table1}. Overall, all three methods increasingly concentrate on the true envelope dimension as the sample size grows. When $u^{*}=2$, CALVI assigns posterior probability 0.684 to $u=2$ at $n=100$, with most of the remaining mass placed on $u=3$ (0.270). MH--Gibbs yields a slightly higher probability of 0.749 at $n=100$ and similarly places the remaining mass primarily on $u=3$ (0.215). Both CALVI and MH--Gibbs rapidly concentrate on the correct dimension as $n$ increases, reaching 0.905 at $n=200$ and at least 0.987 by $n=500$, and essentially 1.000 at $n=1000$. In contrast, the Manifold--Gibbs sampler produces a much more concentrated dimension posterior, assigning 1.000 mass on $u=2$ across all sample sizes considered. When $u^{*}=5$, CALVI and MH--Gibbs again put most posterior mass on the correct dimension at $n=100$ (0.833 and 0.808, respectively), with the remaining probability largely on $u=6$. The concentration strengthens quickly with $n$, with CALVI reaching 0.993 at $n=200$ (versus 0.938 for MH--Gibbs) and both methods essentially selecting $u=5$ with probability at least 0.996 by $n=500$. The Manifold--Gibbs sampler assigns posterior probability 1.000 to $u=5$ across all sample sizes considered.

To evaluate the estimation accuracy of $\bbeta$, we define $\hat{\bbeta}$ using Bayesian model averaging (BMA) as follows:
\begin{align*} 
    \hat{\boldsymbol{\beta}} = \sum_{\mathcal{M}=0}^{r} \mathbb{E} (\boldsymbol{\beta} \mid u = \mathcal{M}, \mathbb{Y})\, p(u = \mathcal{M} \mid \mathbb{Y}).
\end{align*}
We computed the mean squared error (MSE) for each replicated dataset $k$ as $\text{MSE}_{k} = \| \hat{\boldsymbol{\beta}}_{k} - \boldsymbol{\beta}^{*} \|_{\text{F}}^2$.

In Table~\ref{table2}, we summarize the MSE of $\hat{\bbeta}$ using the average $\overline{\text{MSE}} = \frac{1}{100} \sum_{k=1}^{100} \text{MSE}_k$ and the corresponding standard deviation $\sqrt{\frac{1}{100} \sum_{k=1}^{100}(\text{MSE}_k - \overline{\text{MSE}})^2}$ across 100 replicated datasets. For $u^{*}=2$, CALVI exhibits slightly larger MSE at the smallest sample size ($2.15$ with sd $1.13$ at $n=100$) than MH--Gibbs ($1.74$ with sd $0.69$) and Manifold--Gibbs ($1.51$ with sd $0.42$). However, the gap shrinks rapidly as $n$ increases. 
By $n=500$, all three methods attain essentially identical MSE 
($0.28$ with sd $0.08$), and by $n=1000$ their MSEs are nearly indistinguishable ($0.13$--$0.14$ with comparable standard deviations). For $u^{*}=5$, a similar convergence pattern is observed. At $n=100$, CALVI yields a slightly larger MSE ($3.02$ with sd $0.75$) than MH--Gibbs ($2.87$ with sd $0.54$) and Manifold--Gibbs ($2.72$ with sd $0.60$). From $n=200$ onward, however, the three methods produce nearly identical and 
monotonically decreasing MSE values, reaching approximately $0.24$--$0.25$ at $n=1000$.

In contrast, ADVI displays substantially larger estimation error and markedly higher variability in both envelope settings. For $u^{*}=2$, the mean MSE is extremely inflated at $n=100$ ($389.88$ with sd $119.83$) 
and, although it decreases sharply as $n$ increases, it remains unstable and highly dispersed even at $n=1000$ ($4.91$ with sd $33.43$). The instability is even more pronounced for $u^{*}=5$, where ADVI produces very large and highly variable MSE across all sample sizes (e.g., $617.26$ with sd $267.91$ at $n=100$ and $83.49$ with sd $123.61$ at $n=1000$). These findings suggest that naive gradient-based VI under the standard Euclidean parameterization can be numerically fragile in envelope models, particularly when the envelope dimension is moderate or large.

The average computational times across 100 replicated datasets are presented in Table~\ref{table3}. The reported times include fitting all candidate dimensions $u=1,\ldots,20$ to form the BIC-based weights. Across both $u^{*}\in\{2,5\}$ scenarios, CALVI is orders of magnitude faster than the MCMC-based methods. When $u^{*}=2$, CALVI requires 11.8--33.0 seconds across the considered sample sizes, whereas MH--Gibbs takes 2850.7--3075.7 seconds and Manifold--Gibbs takes 13101.5--26629.7 seconds. When $u^{*}=5$, CALVI takes 14.0--19.9 seconds, compared to 3350.1--3526.1 seconds for MH--Gibbs and 12295.3--26112.1 seconds for Manifold--Gibbs. Relative to MH--Gibbs, CALVI uses about 0.38\%--1.16\% of the runtime (and only about 0.04\%--0.25\% relative to Manifold--Gibbs), and these ratios decrease as $n$ increases, illustrating the scalability of CALVI.

We compare the posterior densities obtained by CALVI and two MCMC approaches (MH--Gibbs and Manifold--Gibbs) in Figure~\ref{fig:beta marginal}. The figure displays marginal posterior densities for each entry of $\bbeta$. Overall, all methods recover the true coefficients satisfactorily. The two MCMC samplers produce nearly indistinguishable marginal distributions in this replicate. Compared with the MCMC benchmarks, CALVI yields slightly more concentrated marginal posteriors for several coefficients, particularly in the top-$u$ rows. This mild underestimation of posterior variability may be explained by the reparameterization through $\bA$: variability associated with the envelope subspace can be partially redistributed to other model components under the variational approximation. Because CALVI adopts the MFVF, residual dependence among these components is not fully captured, which can lead to modest shrinkage of marginal variances while preserving posterior centers.

\begin{table*}[!t]
\centering
\small
\caption{Mean posterior probabilities for the envelope dimension $u$ by method, with $r=20$, $p=7$, and $u^{*}\in\{2,5\}$. Results are averaged over 100 replications.}
\label{table1}
\setlength{\tabcolsep}{4pt} 
\begin{tabular}{l|cccc|cccc|cccc}
\toprule
 & \multicolumn{4}{c|}{CALVI}
 & \multicolumn{4}{c|}{MH--Gibbs}
 & \multicolumn{4}{c}{Manifold--Gibbs} \\
\midrule
$n$ & $u \le 1$ & $u=2$ & $u=3$ & $4 \le u$
    & $u \le 1$ & $u=2$ & $u=3$ & $4 \le u$
    & $u \le 1$ & $u=2$ & $u=3$ & $4 \le u$ \\
\midrule
100  & 0.000 & 0.684 & 0.270 & 0.045 & 0.000 & 0.749 & 0.215 & 0.036 & 0.000 & 1.000 & 0.000 & 0.000 \\
200  & 0.000 & 0.905 & 0.094 & 0.001 & 0.000 & 0.905 & 0.095 & 0.000 & 0.000 & 1.000 & 0.000 & 0.000 \\
500  & 0.000 & 0.997 & 0.003 & 0.000 & 0.000 & 0.987 & 0.013 & 0.000 & 0.000 & 1.000 & 0.000 & 0.000 \\
1000 & 0.000 & 1.000 & 0.000 & 0.000 & 0.000 & 0.999 & 0.001 & 0.000 & 0.000 & 1.000 & 0.000 & 0.000 \\
\midrule
\multicolumn{13}{c}{(a) The true $u$ is 2.} \\
\midrule
$n$ & $u \le 4$ & $u=5$ & $u=6$ & $7 \le u$
    & $u \le 4$ & $u=5$ & $u=6$ & $7 \le u$
    & $u \le 4$ & $u=5$ & $u=6$ & $7 \le u$ \\
\midrule
100  & 0.000 & 0.833 & 0.156 & 0.011 & 0.000 & 0.808 & 0.180 & 0.011 & 0.000 & 1.000 & 0.000 & 0.000 \\
200  & 0.000 & 0.993 & 0.007 & 0.000 & 0.000 & 0.938 & 0.062 & 0.000 & 0.000 & 1.000 & 0.000 & 0.000 \\
500  & 0.000 & 1.000 & 0.000 & 0.000 & 0.000 & 0.996 & 0.004 & 0.000 & 0.000 & 1.000 & 0.000 & 0.000 \\
1000 & 0.000 & 1.000 & 0.000 & 0.000 & 0.000 & 1.000 & 0.000 & 0.000 & 0.000 & 1.000 & 0.000 & 0.000 \\
\midrule
\multicolumn{13}{c}{(b) The true $u$ is 5.} \\
\bottomrule
\end{tabular}
\end{table*}

\begin{table*}[!t]
\centering
\caption{
MSE for $\bbeta$ (sum over coordinates) by method—BMA via CALVI, MH--Gibbs, Manifold--Gibbs, and ADVI—with $r=20$, $p=7$, and $u^{*}\in\{2,5\}$. 
Entries are mean (sd) over 100 replications. 
For ADVI, results are reported only at the true envelope dimension $u=u^{*}$.
}
\label{table2}
\small
\setlength{\tabcolsep}{6pt}
\begin{tabular}{lcccc}
\toprule
Method & $n=100$ & $n=200$ & $n=500$ & $n=1000$ \\
\midrule
CALVI         & 2.15 (1.13) & 0.86 (0.43) & 0.28 (0.08) & 0.14 (0.04) \\
MH--Gibbs     & 1.74 (0.69) & 0.80 (0.31) & 0.28 (0.08) & 0.13 (0.04) \\
Manifold--Gibbs & 1.51 (0.42) & 0.73 (0.22) & 0.28 (0.08) & 0.13 (0.04) \\
ADVI          & 389.88 (119.83) & 10.205 (51.28) & 3.66 (29.95) & 4.91 (33.43) \\
\midrule
\multicolumn{5}{c}{(a) The true $u$ is 2.} \\
\midrule
CALVI         & 3.02 (0.75) & 1.32 (0.25) & 0.50 (0.09) & 0.25 (0.05) \\
MH--Gibbs     & 2.87 (0.54) & 1.33 (0.25) & 0.49 (0.09) & 0.24 (0.04) \\
Manifold--Gibbs & 2.72 (0.60) & 1.31 (0.16) & 0.48 (0.05) & 0.25 (0.03) \\
ADVI          & 617.26 (267.91) & 111.01 (128.82) & 65.11 (102.74) & 83.49 (123.61) \\
\midrule
\multicolumn{5}{c}{(b) The true $u$ is 5.} \\
\bottomrule
\end{tabular}
\end{table*}

\begin{table}[!t]
\centering
\caption{Mean total CPU time (seconds) across $u=1,\ldots,20$ over 100 replications ($r=20$, $p=7$; $u^* \in \{2,5\}$).}
\label{table3}
\small
\setlength{\tabcolsep}{6pt}
\begin{tabular}{lccc}
\toprule
 & CALVI & MH--Gibbs & Manifold--Gibbs \\
\midrule
$n=100$  & 33.0 & 2850.7 & 13101.5 \\
$n=200$  & 17.7 & 2892.6 & 16447.2 \\
$n=500$  & 12.6 & 2970.8 & 17838.3 \\
$n=1000$ & 11.8 & 3075.7 & 26629.7 \\
\midrule
\multicolumn{4}{c}{(a) True $u=2$} \\
\midrule
$n=100$  & 19.9 & 3350.1 & 12295.3 \\
$n=200$  & 14.9 & 3377.3 & 14010.1 \\
$n=500$  & 14.5 & 3431.0 & 19384.1 \\
$n=1000$ & 14.0 & 3526.1 & 26112.1 \\
\midrule
\multicolumn{4}{c}{(b) True $u=5$}\\
\bottomrule
\end{tabular}
\end{table}

\begin{figure}
    \centering
    \includegraphics[width=0.8\linewidth]{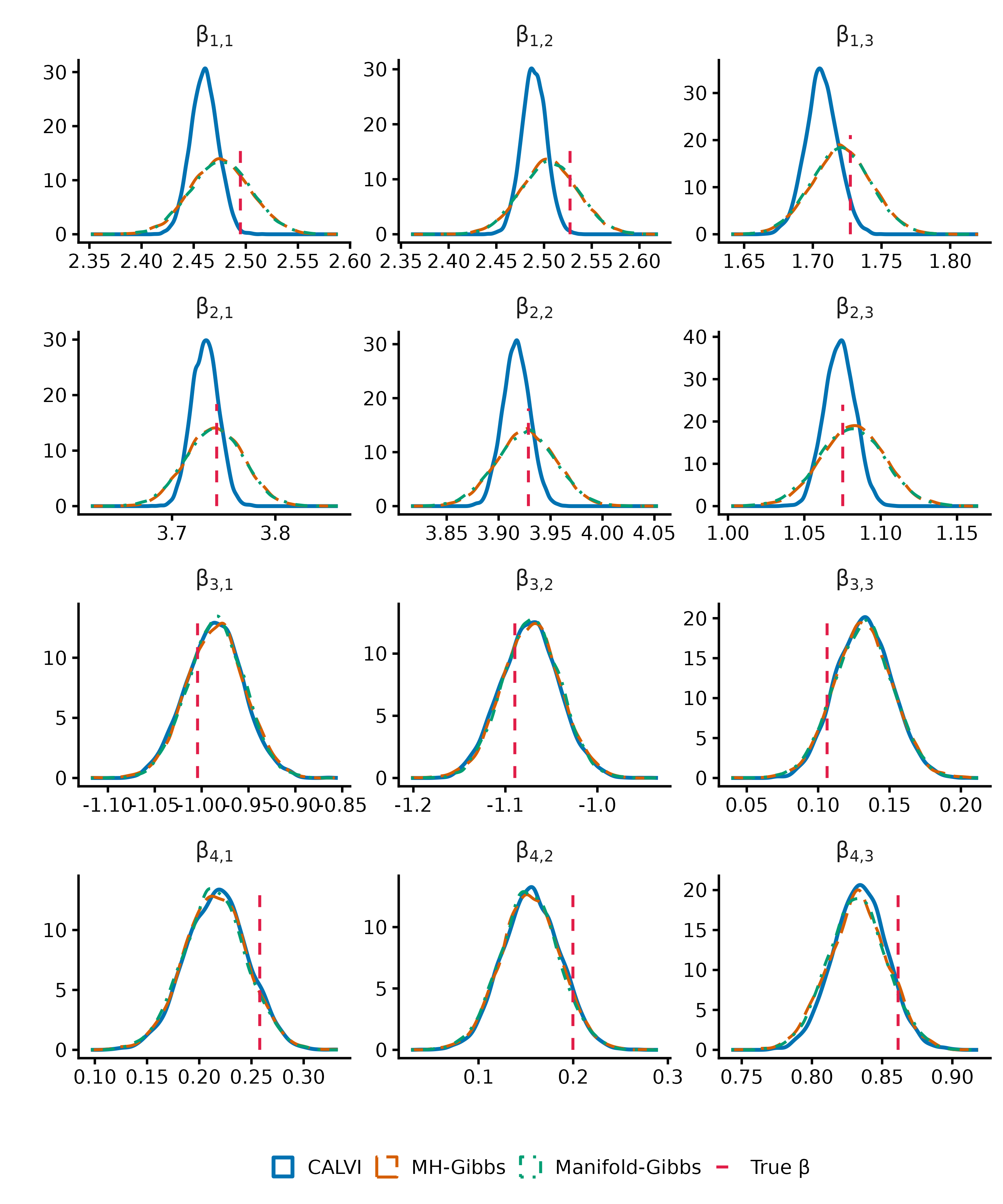}
    \caption{Marginal posterior distributions for $\beta_{ij}, \ i=1,\dots,3 \ , j=1,\dots,3$ based on the first replicate of the simulated data. Red dotted lines denote the true values of the parameters. The marginal densities for the remaining coefficients can be seen in Figure~\ref{beta all1} and Figure~\ref{beta all2}.}
    \label{fig:beta marginal}
\end{figure}

\subsection{Real data analysis} \label{real data}
We applied the response envelope model to women's breast tissue data. We tested whether socioeconomic stress influences metabolism, resulting in altered N-glycosylation associated with breast cancer risk in Black and White women. Normal breast tissue samples from 43 Black women (BW) and 43 White women (WW) at high 5-year risk of breast cancer based on the Gail score were obtained from The Susan G. Komen for the Cure Tissue Bank at the IU Simon Cancer Center. Socioeconomic data, including Gail score, age, body mass index (BMI), household income, education level (university and graduate school), and marital status, were collected. We excluded 13 BW and 13 WW due to missing information on these variables. To understand the N-glycan alterations that may contribute to breast cancer risk, 53 N-glycans were identified by mass spectrometry imaging methods. In addition, we used as responses the 10 (of 53) N-glycans that showed significant discrimination between BW and WW, as measured by the receiver operating characteristic area under the curve (AUC) (p < 0.0004). More detailed information about the data processing can be found in \citet{rujchanarong2022metabolic}.

We fit the response envelope model to the data $(\mathbf{Y}_i, \mathbf{X}_i)$ for $i = 1, \dots, 30$ in each group, where $\mathbf{Y}_i \in \mathbb{R}^{10}$ represents the mass of N‑glycans for the $i$th subject. The predictor vector $\mathbf{X}_i \in \mathbb{R}^{8}$ is partitioned as $\mathbf{X}_i = \begin{pmatrix} (\mathbf{X}_i^{(1)})^\top  & (\mathbf{X}_i^{(2)})^\top \end{pmatrix}^\top,$ where \(\mathbf{X}_i^{(1)} \in \{0,1\}^4\) comprises binary indicators for vocational or technical school status, associate degree status, single status and married status, and \(\mathbf{X}_i^{(2)} \in \mathbb{R}^4\) includes continuous variables: Gail score, age (years), BMI, and household income. All predictor variables are standardized and the responses are log-transformed.

Both CALVI and MH--Gibbs selected the envelope dimension as $u=1$. Table~\ref{tab:rmse-real} reports residual bootstrap root--mean--square errors (RMSEs) for the regression coefficients ($B=1000$). CALVI yields substantially smaller RMSEs than MH--Gibbs in both groups (0.0053 vs.\ 0.0373 for White women; 0.0073 vs.\ 0.0517 for Black women), indicating reduced sensitivity of the CALVI point estimates to resampling variability. The inflated bootstrap RMSE under MH--Gibbs may reflect the additional Monte Carlo variability induced by the likelihood-driven proposal in this small-sample setting. Although the Manifold--Gibbs sampler was included in the simulation study, it was not considered in the bootstrap analysis due to its substantially higher computational cost. In particular, the residual bootstrap requires refitting the model $B=1000$ times, rendering the Manifold--Gibbs approach computationally impractical in this setting. Figure~\ref{fig:heatmap} shows that several N-glycans are strongly associated with age for Black women, whereas associations in White women are generally weak. In terms of computation, we generated 50{,}000 MH--Gibbs samples, discarding the first 50\% as burn-in in order to ensure adequate stabilization of posterior summaries. 
This required 988.0 and 955.7 seconds for White and Black women, respectively. 
In contrast, CALVI converged in 1.5 and 7.7 seconds (with a maximum of 50{,}000 iterations). Reproducible code is available at \url{https://github.com/Seunghyeon-Kim-stat/env-LVI}.

\begin{table}[!t]
\centering
\caption{Bootstrap root--mean--square error (RMSE) of different estimation methods for White and Black women.}
\label{tab:rmse-real}
\small
\setlength{\tabcolsep}{6pt}
\begin{tabular}{lcc}
\toprule
Method & White women & Black women \\
\midrule
CALVI      & 0.0053 (0.0125) & 0.0073 (0.0057) \\
MH--Gibbs  & 0.0373 (0.0168) & 0.0517 (0.0241) \\
\bottomrule
\end{tabular}
\end{table}

\begin{figure}
    \centering
    \includegraphics[width=0.7\linewidth]{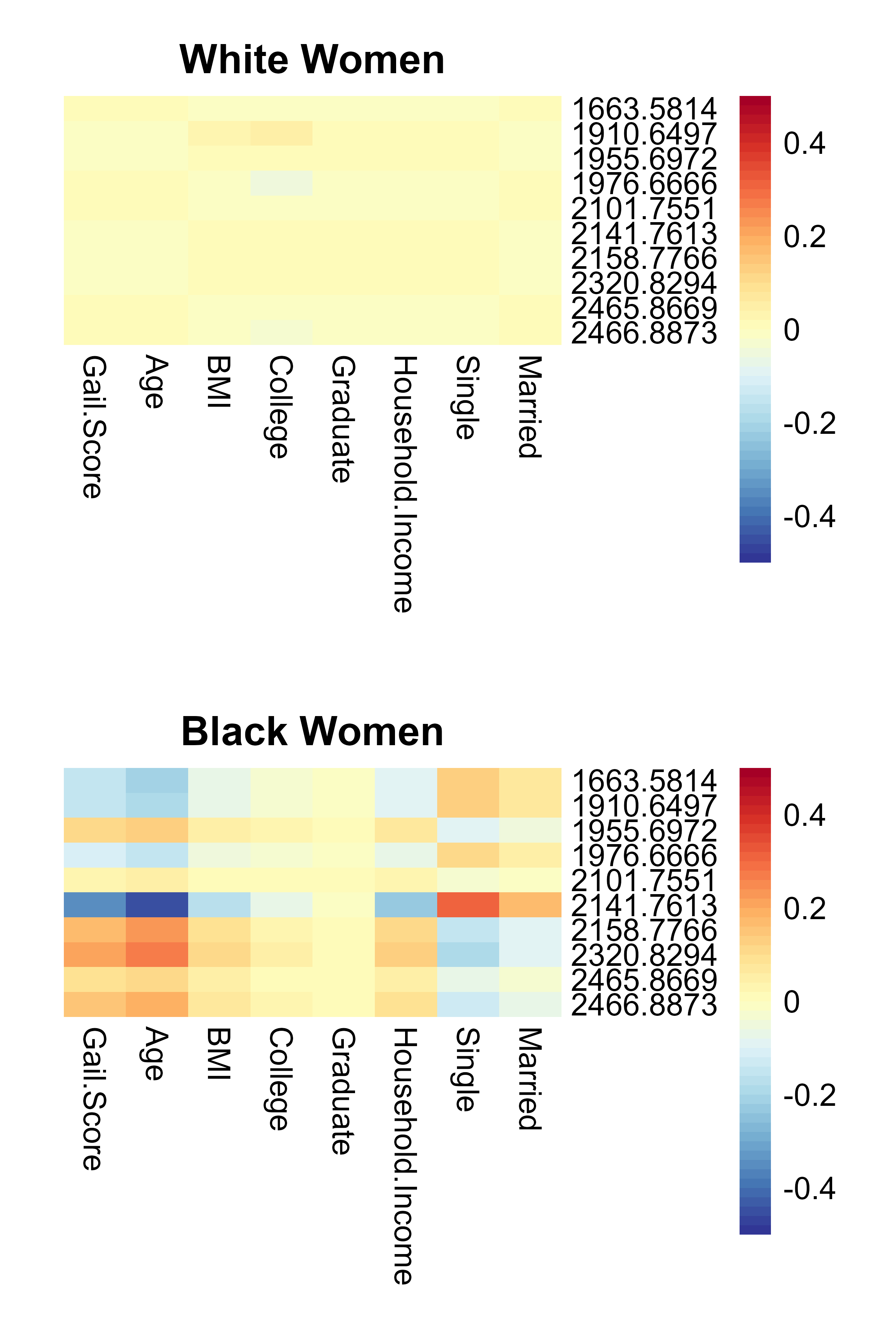}
    \caption{Heatmaps of the regression coefficient matrices for each group, estimated under the response envelope model using the CALVI algorithm. Rows correspond to N-glycans (responses) and columns correspond to predictors. The color bar indicates the magnitude and direction of the coefficient values.}
    \label{fig:heatmap}
\end{figure}

\section{Concluding remarks and discussion}\label{conclude}

We proposed a novel reparameterization and employed CALVI to improve computational efficiency in Bayesian response envelope models. To the best of our knowledge, this is the first attempt to bring the VI framework to envelope models in a way that preserves conjugacy for most parameters while handling the remaining nonconjugate block efficiently. Along with the methodological development, we provided a theoretical justification for the CALVI Laplace update used in the nonconjugate block, showing that it asymptotically coincides (after the usual $\sqrt n$ rescaling) with the corresponding exact CAVI update in TV distance.

Simulation studies and a real-data analysis demonstrated that CALVI can offer substantial computational gains while maintaining competitive inferential performance. Beyond the response envelope model considered here, the same reparameterization-and-update strategy can be adapted to other Bayesian models with envelope-type structure.

A general limitation of MFVF is that it can underestimate posterior uncertainty. In our setting, Figure~\ref{fig:beta marginal} illustrates that this underestimation is more pronounced for the top-$u$ rows of $\bbeta$. This behavior appears to reflect the interaction between the mean-field factorization, the parameterization of $\bGamma$ through $\bA$, and the proposed reparameterization. Mitigating this effect may require a richer variational family (e.g., structured covariance, blockwise dependence, or other extensions beyond MFVF), or alternative parameterizations that reduce the strength of posterior dependence; we leave these directions for future work.

\clearpage
\appendix
\section*{Supplementary Material for ``Laplace Variational Inference for Bayesian Envelope Models''}

\renewcommand{\thesection}{S\arabic{section}}
\renewcommand{\theequation}{S.\arabic{equation}}
\renewcommand{\thefigure}{S\arabic{figure}}
\renewcommand{\thetable}{S\arabic{table}}

\setcounter{section}{0}
\setcounter{equation}{0}
\setcounter{figure}{0}
\setcounter{table}{0}

The supplementary material is organized as follows.
Section~\ref{sec:A} provides detailed derivations of the computational bottleneck identified in the main paper.
Section~\ref{sec:B} presents implementation details of the coordinate-ascent Laplace variational inference (CALVI) algorithm.
Section~\ref{sec:C} discusses the implications of CALVI for the Bayesian predictor envelope model.
Sections~\ref{sec:D} and~\ref{sec:E} prove Theorems~\ref{main-theorem} and~\ref{thm:yenv}, respectively.
Finally, Section~\ref{sec:F} reports additional simulation results for both the Bayesian response and predictor envelope models.

\subsubsection*{Notations}
Throughout, $\|\cdot\|_2$ denotes the Euclidean norm, $\|\cdot\|_F$ the Frobenius norm, and $\|\cdot\|_{op}$ the spectral norm.
The operator $\vecr(\cdot)$ denotes column-wise vectorization.
For $a,b\in\mathbb{N}$, $\mathcal{K}_{a,b}$ denotes the $(ab)\times(ab)$ commutation matrix satisfying
$\vecr(\bA^\top)=\mathcal{K}_{a,b}\vecr(\bA)$ for any $\bA\in\mathbb{R}^{a\times b}$.
We write $\mathcal{MN}(\cdot)$ for the matrix normal distribution and $\mathcal{IW}(\cdot)$ for the inverse--Wishart distribution.
Unless otherwise stated, we suppress the iteration index $(t)$ for readability. $\calC_d$ denotes the set of $d \times d$ symmetric positive definite matrices.

\section{Computational issues}
\label{sec:A}

In this section, we provide the explicit form of the $\bA$-coordinate objective
\eqref{eq:11} together with derivative expressions that make the dominant
computational bottleneck explicit (Section~\ref{sec4.1}).

\subsection{Local objective for the \texorpdfstring{$\bA$}{A}-update}
\label{sec:A.1}

Let the $\bA$-coordinate objective at iteration $t$ be
\begin{align*}
f^{(t)}(\bA)
:=\;& \bbE_{q^{(t)}_{-\bA}} \big[ \log p(\bA \mid \bmu, \boldsymbol{\eta}, \bOmega, \bOmega_0, \bbY) \big] \\
=\;&
\bbE_{q^{(t)}_{-\bA}} \Bigg[-\frac{1}{2}\,
\tr\!\Big[
\underbrace{\bGamma(\bA)^\top
\Big(
\bbY_{\bmu}^\top \bbY_{\bmu}
+ \bB_0 \boldsymbol{M} \bB_0^\top
\Big)
\bGamma(\bA)\,
\bOmega^{-1}}_{(i)}
\Big] \;-\frac{1}{2}\,
\tr\!\Big[
\underbrace{\bGamma_0(\bA)^\top
\bbY_{\bmu}^\top \bbY_{\bmu}
\bGamma_0(\bA)\,
\bOmega_0^{-1}}_{(ii)}
\Big]
\\[6pt]
&\quad \;+\tr\!\Big[
\bGamma(\bA)^\top
\Big(
\bbY_{\bmu}^\top \bbX \boldsymbol{\eta}^\top
+ \bB_0 \boldsymbol{M} \boldsymbol{\eta}^\top
\Big)
\bOmega^{-1}
\Big] \;-\frac{1}{2}\,
\tr\!\Big[
\boldsymbol{V}_0^{(A)-1}
\bA^\top
\boldsymbol{U}_0^{(A)-1}
\bA
\Big]
\\[6pt]
&\quad \;+\tr\!\Big[
\boldsymbol{V}_0^{(A)-1}
\bA_0^\top
\boldsymbol{U}_0^{(A)-1}
\bA 
\Big] \Bigg]
\;+\; \text{\emph{const.}}, \tag{S1} \label{eq:fA}
\end{align*}
where $\bbY_{\bmu}=\bbY-\boldsymbol{1}_n\bmu^\top$ and $\boldsymbol{1}_n$ is the $n$-vector of ones.
The objective $f^{(t)}(\bA)$ is obtained from the unnormalized log-posterior
$\log p(\bmu,\boldsymbol{\eta},\bOmega,\bOmega_0,\bA\mid\bbY)$ in \eqref{eq:7}
by collecting all $\bA$-dependent terms and taking expectation with respect to
$q^{(t)}(\bmu,\boldsymbol{\eta},\bOmega,\bOmega_0)$.

\subsection{Justification for exchanging differentiation and expectation}
\label{sec:A.2}

The dominant bottleneck is algebraic and already appears at the complete-data level.
Proposition~\ref{prop:DCT} justifies interchanging differentiation and expectation so that
the same structure carries over to $f^{(t)}(\bA)$.

\begin{proposition}[Exchange of differentiation and expectation (first order)]
\label{prop:DCT}
Let $\bZ:=(\bmu,\boldsymbol\eta,\bOmega,\bOmega_0)$ and consider the complete-data log-posterior
$\log p(\bZ,\bA\mid \bbY_n)$ in \eqref{eq:7}. Fix a compact neighborhood
$\mathcal N := \{ \bA: \| \bA - \bA^{\star}\|_{F} \le \rho \}$.
Then there exists a measurable function $h_1(\bZ)$, independent of $\bA$, such that for all $\bA\in\mathcal N$,
\[
\big\|\nabla_{\vecr(\bA)} \log p(\bZ,\bA\mid \bbY_n) \big\|_{op}\ \le\ h_1(\bZ),
\qquad
\mathbb{E}_{q(\bZ)}[\,h_1(\bZ)\,]<\infty,
\]
for the variational factors $q(\bZ)=q(\bmu)\,q(\boldsymbol\eta)\,q(\bOmega)\,q(\bOmega_0)$ used in CALVI.
Consequently, for every $\bA\in\mathcal N$,
\[
\nabla_{\vecr(\bA)} \int q(\bZ)\log p(\bZ,\bA\mid \bbY_n)\,d\bZ
\;=\; \int q(\bZ)\,\nabla_{\vecr(\bA)} \log p(\bZ,\bA\mid \bbY_n)\,d\bZ.
\]
\end{proposition}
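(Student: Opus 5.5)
The plan is to write out the $\bA$-dependent part of the complete-data log-posterior~\eqref{eq:7} explicitly, following the decomposition~\eqref{eq:fA}. It consists of three types of terms. First, quadratic forms $\tr[\bGamma(\bA)^\top \bS \bGamma(\bA)\bOmega^{-1}]$ and $\tr[\bGamma_0(\bA)^\top \bbY_{\bmu}^\top\bbY_{\bmu}\bGamma_0(\bA)\bOmega_0^{-1}]$, where $\bS=\bbY_{\bmu}^\top\bbY_{\bmu}+\bB_0\bM\bB_0^\top$. Second, a linear form $\tr[\bGamma(\bA)^\top(\bbY_{\bmu}^\top\bbX\boldsymbol\eta^\top+\bB_0\bM\boldsymbol\eta^\top)\bOmega^{-1}]$. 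Third, the Gaussian prior terms in $\bA$, which are free of $\bZ$.

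I will differentiate each term in $\vecr(\bA)$ via the chain rule through $\bGamma(\bA)=\bC_{\bA}\bJ(\bA)^{-1/2}$ and $\bGamma_0(\bA)=\bD_{\bA}\bJ_0(\bA)^{-1/2}$. Each resulting derivative factors into two pieces:
\begin{itemize}
\item[(a)] A deterministic matrix depending only on $\bA$. This covers $\partial\vecr\bGamma(\bA)/\partial\vecr(\bA)^\top$ and $\partial\vecr\bGamma_0(\bA)/\partial\vecr(\bA)^\top$, including the Kronecker-sum factors of~\eqref{eq:12}, together with $\bGamma(\bA)$ and $\bGamma_0(\bA)$ themselves.
\item[(b)] A random matrix depending only on $\bZ$. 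This is one of $\bS\,\bOmega^{-1}$, $\bbY_{\bmu}^\top\bbY_{\bmu}\bOmega_0^{-1}$, or $(\bbY_{\bmu}^\top\bbX\boldsymbol\eta^\top+\bB_0\bM\boldsymbol\eta^\top)\bOmega^{-1}$.
\end{itemize}

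The deterministic piece is continuous in $\bA$, because $\bJ(\bA)\succeq\bI_u$ and $\bJ_0(\bA)\succeq\bI_{r-u}$ keep the inverse square roots and the Kronecker-sum inverses smooth and bounded. It is therefore bounded uniformly on the compact set $\mathcal N$ by a constant $K_\rho$. Submultiplicativity of norms then gives
\[
\|\nabla_{\vecr(\bA)}\log p(\bZ,\bA\mid\bbY_n)\|\le K_\rho\big(\|\bS\|\,\|\bOmega^{-1}\|+\|\bbY_{\bmu}\|^2\|\bOmega_0^{-1}\|+\|\bbY_{\bmu}^\top\bbX\boldsymbol\eta^\top+\bB_0\bM\boldsymbol\eta^\top\|\,\|\bOmega^{-1}\|\big)+K'_\rho,
\]
where $K'_\rho$ bounds the prior gradient on $\mathcal N$. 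I will take $h_1(\bZ)$ to be this right-hand side, which does not depend on $\bA$.

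The main step is showing $\E_q[h_1(\bZ)]<\infty$. By the mean-field factorization $q(\bZ)=q(\bmu)q(\boldsymbol\eta)q(\bOmega)q(\bOmega_0)$, each product in $h_1$ splits into a product of expectations over independent factors. Since $\|\bbY_{\bmu}\|^2\le 2\|\bbY\|^2+2n\|\bmu\|^2$, I need second moments of the Gaussian $q(\bmu)$ and first moments of the matrix-normal $q(\boldsymbol\eta)$, both finite. I also need $\E\|\bOmega^{-1}\|$ and $\E\|\bOmega_0^{-1}\|$ under the inverse-Wishart factors. Here $\bOmega^{-1}$ is Wishart, so its first moment is finite for any valid degrees of freedom. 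In the reparameterized model the factor is on $\btOmega$ rather than $\bOmega$, but $\bOmega^{-1}=\bJ^{1/2}\btOmega^{-1}\bJ^{1/2}$ and $\|\bJ(\bA)\|$ is bounded on $\mathcal N$, so the same bound holds after enlarging $K_\rho$. This is the point I expect to need the most care: the intercept shift in~\eqref{eq:14} introduces $\bA$-dependence through $\bC_{\bA}\bJ^{-1}\boldsymbol{\tilde\eta}\bar{\bX}$, which adds a term linear in $\boldsymbol{\tilde\eta}$ with bounded coefficients on $\mathcal N$. That term still has finite moments, provided only first and second moments of the Gaussian, matrix-normal and Wishart factors are involved.

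Finally, I will apply dominated convergence in its standard differentiation-under-the-integral form, using the mean value theorem along line segments within the convex set $\mathcal N$. Difference quotients of $\log p$ in any coordinate direction are bounded by $h_1(\bZ)$, which is integrable and independent of $\bA$, so the derivative can be exchanged with $\int q(\bZ)\,\cdot\,d\bZ$ at every interior point of $\mathcal N$. At boundary points the same argument applies with one-sided derivatives, or by enlarging $\rho$ slightly.
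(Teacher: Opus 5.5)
Your proposal is correct and follows essentially the same route as the paper. The shared steps are the chain rule through $\bGamma(\bA)$ and $\bGamma_0(\bA)$ with their Jacobians bounded uniformly on the compact set $\mathcal N$ (using $\bJ(\bA)\succeq\bI_u$ and $\bJ_0(\bA)\succeq\bI_{r-u}$); a dominating $h_1(\bZ)$ built from $\|\bOmega^{-1}\|_{op}$, $\|\bOmega_0^{-1}\|_{op}$, $\|\bbY_{\bmu}\|_F^2$ and $\|\boldsymbol\eta\|_F$, integrable by Gaussian and Wishart moments; and dominated convergence via the mean value theorem. Your explicit use of mean-field independence to split product moments, your enlargement of $\rho$ to handle boundary points, and your side remark on the reparameterized factors all refine points the paper leaves implicit; they do not change the argument.
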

The proof is given in Section~\ref{sec:A.5}.

\subsubsection*{Common notation for inverse square-root factors}
\label{sec:common_invsqrt_notation}

Throughout Section~\ref{sec:A}, the bottleneck originates from differentiating factors of the form
$(\bU^\top\bU)^{-1/2}$. To avoid repeated expansions, we use the following notation.

Let $\bU\in\mathbb{R}^{r\times k}$ have full column rank and define
\[
\bJ(\bU):=\bU^\top\bU,\qquad
\bF(\bU):=\bJ(\bU)^{-1/2},\qquad
\bP(\bU):=\bJ(\bU)^{-1},
\]
together with the Kronecker--sum matrix
\[
\bQ(\bU):=\bF(\bU)\otimes \bI_k + \bI_k\otimes \bF(\bU)\in\mathbb{R}^{k^2\times k^2},
\]
and
\[
\bR(\bU)
:= \bP(\bU)\otimes \bP(\bU)\bU^\top
\;+\;
(\bP(\bU)\bU^\top\otimes \bP(\bU))\,\mathcal{K}_{r,k}
\in \mathbb{R}^{k^2\times (rk)}.
\]

In the response envelope model, this notation is used twice:
(i) $\bU=\bC_{\bA}$ with $k=u$ and $\bJ(\bC_{\bA})=\bC_{\bA}^\top\bC_{\bA}$,
(ii) $\bU=\bD_{\bA}$ with $k=r-u$ and $\bJ(\bD_{\bA})=\bD_{\bA}^\top\bD_{\bA}$.

\subsection{First-order derivative and the computational bottleneck}
\label{sec:A.3}

Through Proposition~\ref{prop:DCT}, the dominant bottleneck in $\nabla_{\vecr(\bA)} f^{(t)}(\bA)$ arises from
the envelope terms \eqref{eq:fA}(i)--(ii), which depend on $\bA$ through the orthonormal factors
\[
\bGamma(\bA)=\bC_{\bA}\big(\bC_{\bA}^\top\bC_{\bA}\big)^{-1/2},
\qquad
\bGamma_0(\bA)=\bD_{\bA}\big(\bD_{\bA}^\top\bD_{\bA}\big)^{-1/2}.
\]
Accordingly, the derivative takes the generic chain-rule form
\begin{align*}
    \frac{\partial}{\partial \vecr(\bA)^\top}
    \tr\!\left[
        \bS\, \bGamma(\bA)\, \bM\, \bGamma^\top(\bA)
    \right]
    &=
    \frac{\partial}{\partial \vecr\!\left(\bGamma(\bA)\right)^\top}
    \tr\!\left[
        \bS\, \bGamma(\bA)\, \bM\, \bGamma^\top(\bA)
    \right]
    \\
    &\quad\times
    \frac{\partial \vecr\!\left(
        \bC_{\bA}(\bC_{\bA}^\top \bC_{\bA})^{-1/2}
    \right)}{\partial \vecr(\bC_{\bA})^\top}
    \frac{\partial \vecr(\bC_{\bA})}{\partial \vecr(\bA)^\top},
    \tag{S2}\label{eq:dGamma-generic}
    \\[6pt]
    \frac{\partial}{\partial \vecr(\bA)^\top}
    \tr\!\left[
        \bS_0\, \bGamma_0(\bA)\, \bM_0\, \bGamma_0^\top(\bA)
    \right]
    &=
    \frac{\partial}{\partial \vecr\!\left(\bGamma_0(\bA)\right)^\top}
    \tr\!\left[
        \bS_0\, \bGamma_0(\bA)\, \bM_0\, \bGamma_0^\top(\bA)
    \right]
    \\
    &\quad\times
    \frac{\partial \vecr\!\left(
        \bD_{\bA}(\bD_{\bA}^\top \bD_{\bA})^{-1/2}
    \right)}{\partial \vecr(\bD_{\bA})^\top}
    \frac{\partial \vecr(\bD_{\bA})}{\partial \vecr(\bA)^\top}.
    \tag{S3}\label{eq:dGamma0-generic}
\end{align*}
The first factor in \eqref{eq:dGamma-generic} is explicit:
\[
\frac{\partial \tr\!\left[\bS\,\bGamma(\bA)\,\bM\,\bGamma^\top(\bA)\right]}
{\partial \vecr\!\left(\bGamma(\bA)\right)^\top}
=
2\,\bS\,\bGamma(\bA)\,\bM,
\]
(and analogously for the $\bGamma_0$ term).
The bottleneck is therefore driven by the Jacobians of the inverse square-root factors
$(\bC_{\bA}^\top\bC_{\bA})^{-1/2}$ and $(\bD_{\bA}^\top\bD_{\bA})^{-1/2}$.

\begin{lemma}[Compact Jacobian for the inverse square-root factor]
\label{lem:invsqrt_first_compact}
Adopt the common notation in Section~\ref{sec:common_invsqrt_notation}.
For any full-column-rank $\bU\in\mathbb{R}^{r\times k}$,
\[
\frac{\partial\,\vecr\!\big(\bJ(\bU)^{-1/2}\big)}{\partial\,\vecr(\bU)^\top}
=
-\,\bQ(\bU)^{-1}\,\bR(\bU).
\]
\end{lemma}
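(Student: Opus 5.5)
The plan is to differentiate the defining identity of the inverse square root implicitly, rather than the square-root map itself. Write $\bF=\bF(\bU)$, $\bP=\bP(\bU)$ and $\bJ=\bJ(\bU)$, so that $\bF\bF=\bP=\bJ^{-1}$ with $\bF$ symmetric positive definite.

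First I will settle differentiability. Since $\bU$ has full column rank, $\bJ(\bU)=\bU^\top\bU\in\calC_k$. The map $\bJ\mapsto\bJ^{-1/2}$ is smooth (indeed real-analytic) on $\calC_k$, for instance via the holomorphic functional calculus or the inverse function theorem applied to $\bF\mapsto\bF^{-2}$. The map $\bU\mapsto\bU^\top\bU$ is polynomial. Hence $\bU\mapsto\bF(\bU)$ is differentiable, and it suffices to identify its differential $d\bF$.

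Next I differentiate $\bF\bF=\bP$. The product rule gives $d\bF\,\bF+\bF\,d\bF=d\bP$. Differentiating $\bJ\bP=\bI_k$ gives $d\bP=-\bP\,d\bJ\,\bP$, and directly $d\bJ=d\bU^\top\bU+\bU^\top d\bU$. Then I vectorize with $\vecr(\bA\bX\bB)=(\bB^\top\otimes\bA)\vecr(\bX)$, using the symmetry of $\bF$ and $\bP$:
\[
\vecr(d\bF\,\bF+\bF\,d\bF)=\bQ(\bU)\,\vecr(d\bF),\qquad \vecr(d\bP)=-(\bP\otimes\bP)\,\vecr(d\bJ).
\]
For $d\bJ$ I use $\vecr(\bU^\top d\bU)=(\bI_k\otimes\bU^\top)\vecr(d\bU)$ and $\vecr(d\bU^\top\bU)=(\bU^\top\otimes\bI_k)\,\mathcal K_{r,k}\,\vecr(d\bU)$. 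The second identity uses the stated convention $\vecr(d\bU^\top)=\mathcal K_{r,k}\vecr(d\bU)$ for $d\bU\in\bbR^{r\times k}$. Combining these via the mixed-product rule, $(\bP\otimes\bP)(\bI_k\otimes\bU^\top)=\bP\otimes\bP\bU^\top$ and $(\bP\otimes\bP)(\bU^\top\otimes\bI_k)=\bP\bU^\top\otimes\bP$. Therefore
\[
\bQ(\bU)\,\vecr(d\bF)=-\bR(\bU)\,\vecr(d\bU).
\]

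The remaining step, and the only non-bookkeeping point, is to show that $\bQ(\bU)$ is invertible so that $\vecr(d\bF)$ can be solved for. Take an orthonormal eigenbasis $\bF=\sum_i f_i\,\mathbf v_i\mathbf v_i^\top$ with all $f_i>0$. The Kronecker sum $\bF\otimes\bI_k+\bI_k\otimes\bF$ has eigenvectors $\mathbf v_j\otimes\mathbf v_i$ with eigenvalues $f_i+f_j>0$. Hence $\bQ(\bU)$ is symmetric positive definite and invertible, which gives $\vecr(d\bF)=-\bQ(\bU)^{-1}\bR(\bU)\vecr(d\bU)$. Since this holds for every direction $d\bU$, the Jacobian is $-\bQ(\bU)^{-1}\bR(\bU)$, as claimed.

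I expect the main care to lie in the ordering of Kronecker factors and in the placement of the commutation matrix. I will check both against the paper's convention for $\mathcal K_{r,k}$, so that the two terms of $\bR(\bU)$ come out exactly in the stated order.
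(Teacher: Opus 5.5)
Your proposal is correct and follows essentially the paper's route. The paper gives no written proof of this lemma, but Lemma~\ref{lem:invsqrt_second} defines $\vecr(\Delta\bF(\bU))=-\bQ(\bU)^{-1}\vecr\big(\bP(\bU)\,\Delta\bJ(\bU)\,\bP(\bU)\big)$, which is exactly your step of differentiating $\bF\bF=\bP$ into the Kronecker-sum equation with $d\bP=-\bP\,d\bJ\,\bP$; your Kronecker orderings and placement of $\mathcal{K}_{r,k}$ match the stated $\bR(\bU)$, and your eigenvalue argument for the invertibility of $\bQ(\bU)$ fills in a detail the paper leaves implicit.
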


\noindent
Applying Lemma~\ref{lem:invsqrt_first_compact} with $\bU=\bC_{\bA}$ (so $k=u$) yields the expanded form
\begin{align*}
    \frac{\partial \, \vecr \left( (\boldsymbol{C}_{\bA}^\top \boldsymbol{C}_{\bA})^{-1/2} \right)}{\partial \, \vecr(\boldsymbol{C}_{\bA})}
    =&
    - \Big[
    (\boldsymbol{C}_{\bA}^\top \boldsymbol{C}_{\bA})^{-1/2} \otimes \boldsymbol{I}_{u}
    + \boldsymbol{I}_{u} \otimes (\boldsymbol{C}_{\bA}^\top \boldsymbol{C}_{\bA})^{-1/2}
    \Big]^{-1}
\\
    &\times \Big[
    (\boldsymbol{C}_{\bA}^\top \boldsymbol{C}_{\bA})^{-1} \otimes (\boldsymbol{C}_{\bA}^\top \boldsymbol{C}_{\bA})^{-1} \boldsymbol{C}_{\bA}^\top
\\
    &\qquad\qquad
    +(\boldsymbol{C}_{\bA}^\top \boldsymbol{C}_{\bA})^{-1} \boldsymbol{C}_{\bA}^\top \otimes (\boldsymbol{C}_{\bA}^\top \boldsymbol{C}_{\bA})^{-1} \mathcal{K}_{r,u}
    \Big],
    \tag{S4}\label{eq:bottleneck}
\end{align*}
and applying it with $\bU=\bD_{\bA}$ (so $k=r-u$) yields
\begin{align*}
    \frac{\partial \, \vecr \left( (\boldsymbol{D}_{\bA}^\top \boldsymbol{D}_{\bA})^{-1/2} \right)}{\partial \, \vecr(\boldsymbol{D}_{\bA})}
    =&
    - \Big[
    (\boldsymbol{D}_{\bA}^\top \boldsymbol{D}_{\bA})^{-1/2} \otimes \boldsymbol{I}_{r-u}
    + \boldsymbol{I}_{r-u} \otimes (\boldsymbol{D}_{\bA}^\top \boldsymbol{D}_{\bA})^{-1/2}
    \Big]^{-1}
\\
    &\times \Big[
    (\boldsymbol{D}_{\bA}^\top \boldsymbol{D}_{\bA})^{-1} \otimes (\boldsymbol{D}_{\bA}^\top \boldsymbol{D}_{\bA})^{-1} \boldsymbol{D}_{\bA}^\top
\\
    &\qquad\qquad
    +(\boldsymbol{D}_{\bA}^\top \boldsymbol{D}_{\bA})^{-1} \boldsymbol{D}_{\bA}^\top \otimes (\boldsymbol{D}_{\bA}^\top \boldsymbol{D}_{\bA})^{-1} \mathcal{K}_{r,(r-u)}
    \Big],
    \tag{S5}\label{eq:bottleneck0}
\end{align*}
which matches the bottleneck expression \eqref{eq:12} in the main paper.

Finally, the remaining linear Jacobians are
\[
\frac{\partial \vecr\left(\boldsymbol{C}_{\bA}\right)}{\partial \vecr\left(\bA\right)^\top}
=
\bI_u \otimes 
\begin{pmatrix}
\boldsymbol{0}_{u \times (r-u)} \\
\boldsymbol{I}_{r-u}
\end{pmatrix},
\qquad
\frac{\partial \vecr\left(\boldsymbol{D}_{\bA}\right)}{\partial \vecr\left(\bA\right)^\top}
=
\mathcal{K}_{(r-u),r} 
\begin{pmatrix}
-\boldsymbol{I}_{u(r-u)} \\
\boldsymbol{0}_{(r-u)^2 \times u(r-u)}
\end{pmatrix}.
\]

\paragraph{Complexity of the first-order derivative.}
The dominant cost stems from solving linear systems involving the Kronecker--sum matrices
$\bQ(\bC_{\bA})\in\mathbb{R}^{u^2\times u^2}$ and $\bQ(\bD_{\bA})\in\mathbb{R}^{(r-u)^2\times (r-u)^2}$
appearing in \eqref{eq:bottleneck}--\eqref{eq:bottleneck0}.
Using generic fast matrix inversion with exponent $\omega\approx 2.376$ \citep{10.1145/28395.28396}, these solves scale as $\mathcal{O}(u^{2\omega})=\mathcal{O}(u^{4.752})$ and
$\mathcal{O}((r-u)^{2\omega})=\mathcal{O}((r-u)^{4.752})$, respectively.
Therefore, the overall complexity of evaluating $\nabla_{\vecr(\bA)} f^{(t)}(\bA)$ is
\[
\mathcal{O}\!\big(\max\{u,r-u\}^{4.752}\big),
\]
up to lower-order matrix multiplications.

\subsection{Second-order derivative and cost of assembling the full Hessian}
\label{sec:A.4}

Differentiating \eqref{eq:bottleneck}--\eqref{eq:bottleneck0} once more is feasible and yields an explicit
second-order derivative. The resulting expressions remain tractable in closed form, but involve additional
Kronecker--sum solves and Kronecker products, which makes the full Hessian substantially more expensive
to assemble than the gradient.

\begin{lemma}[Directional second derivative of the inverse square-root Jacobian]
\label{lem:invsqrt_second}
Adopt the common notation in Section~\ref{sec:common_invsqrt_notation}.
Let $\bU\in\mathbb{R}^{r\times k}$ be full column rank and let $\Delta\bU$ be a perturbation.
Define
\[
\Delta\bJ(\bU):=\bU^\top\Delta\bU+(\Delta\bU)^\top\bU,
\qquad
\Delta\bP(\bU):=-\bP(\bU)\,\Delta\bJ(\bU)\,\bP(\bU).
\]
Let $\Delta\bF(\bU)$ be defined via its vectorization
\[
\vecr(\Delta\bF(\bU))
=
-\,\bQ(\bU)^{-1}\,\vecr\!\big(\bP(\bU)\,\Delta\bJ(\bU)\,\bP(\bU)\big).
\]
Then the directional derivative of the Jacobian in Lemma~\ref{lem:invsqrt_first_compact} satisfies
\[
\Delta\!\left(
\frac{\partial\,\vecr\!\big(\bJ(\bU)^{-1/2}\big)}{\partial\,\vecr(\bU)^\top}
\right)
=
\bQ(\bU)^{-1}\Big\{\Delta\bQ(\bU)\,\bQ(\bU)^{-1}\bR(\bU)-\Delta\bR(\bU)\Big\},
\]
where
\[
\Delta\bQ(\bU)
:=
\Delta\bF(\bU)\otimes \bI_k + \bI_k\otimes \Delta\bF(\bU),
\]
and
\begin{align*}
\Delta\bR(\bU)
&=
(\Delta\bP(\bU))\otimes (\bP(\bU)\bU^\top)
+\bP(\bU)\otimes(\Delta\bP(\bU)\,\bU^\top+\bP(\bU)\,\Delta\bU^\top) \\
&\quad
+\big(\Delta\bP(\bU)\,\bU^\top+\bP(\bU)\,\Delta\bU^\top\big)\otimes \bP(\bU)\,\mathcal{K}_{r,k}
+(\bP(\bU)\bU^\top)\otimes(\Delta\bP(\bU))\,\mathcal{K}_{r,k}.
\end{align*}
\end{lemma}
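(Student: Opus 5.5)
The plan is to differentiate the identity from Lemma~\ref{lem:invsqrt_first_compact} once more. Write $\bG(\bU):=\partial\,\vecr(\bF(\bU))/\partial\,\vecr(\bU)^\top=-\bQ(\bU)^{-1}\bR(\bU)$. Then $\bQ(\bU)\bG(\bU)=-\bR(\bU)$ holds identically on the open set of full-column-rank matrices, and every factor there is smooth. This is because $\bJ(\bU)\in\calC_k$, the map $\bJ\mapsto\bJ^{-1/2}$ is real-analytic on $\calC_k$, and $\bQ(\bU)$ is invertible: its eigenvalues are the sums $\lambda_i^{-1/2}+\lambda_j^{-1/2}>0$ of eigenvalues of $\bF(\bU)$. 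A first-order perturbation $\bU\mapsto\bU+\Delta\bU$ of the identity, followed by the product rule, gives
\[
\Delta\bQ(\bU)\,\bG(\bU)+\bQ(\bU)\,\Delta\bG(\bU)=-\Delta\bR(\bU).
\]
Solving for $\Delta\bG$ and substituting $\bG=-\bQ^{-1}\bR$ gives $\Delta\bG=\bQ^{-1}\{\Delta\bQ\,\bQ^{-1}\bR-\Delta\bR\}$, which is the claimed formula. All the work therefore lies in identifying $\Delta\bQ$ and $\Delta\bR$.

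\textbf{Step 1: the pieces $\Delta\bJ$, $\Delta\bP$, $\Delta\bF$.} Expanding $\bJ(\bU+\Delta\bU)$ gives $\Delta\bJ=\bU^\top\Delta\bU+\Delta\bU^\top\bU$. The resolvent identity gives $\Delta\bP=-\bP\,\Delta\bJ\,\bP$. For $\bF$, differentiate $\bF\bJ\bF=\bI_k$, or equivalently differentiate $\bF^2=\bP$:
\[
\Delta\bF\,\bF+\bF\,\Delta\bF=\Delta\bP=-\bP\,\Delta\bJ\,\bP.
\]
Vectorizing with $\vecr(\bA\bX\bB)=(\bB^\top\otimes\bA)\vecr(\bX)$ and the symmetry of $\bF$ turns the left side into $\bQ(\bU)\vecr(\Delta\bF)$. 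This Sylvester equation has a unique solution by the invertibility of $\bQ$, and it matches the definition of $\Delta\bF$ in the statement. The same Sylvester argument, applied with $\Delta\bU$ arbitrary, is what underlies Lemma~\ref{lem:invsqrt_first_compact}, so the two are consistent. Since $\bQ$ is linear in $\bF$, we immediately get $\Delta\bQ=\Delta\bF\otimes\bI_k+\bI_k\otimes\Delta\bF$.

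\textbf{Step 2: $\Delta\bR$.} Apply the product rule to each Kronecker factor of $\bR=\bP\otimes\bP\bU^\top+(\bP\bU^\top\otimes\bP)\mathcal K_{r,k}$. The commutation matrix is constant. We have $\Delta(\bP\bU^\top)=\Delta\bP\,\bU^\top+\bP\,\Delta\bU^\top$, and bilinearity of $\otimes$ gives $\Delta(\bX\otimes\bY)=\Delta\bX\otimes\bY+\bX\otimes\Delta\bY$. Collecting the resulting terms gives four groups, which should reproduce the displayed expression for $\Delta\bR$.

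\textbf{Expected obstacle.} The step I expect to be the main source of trouble is this last bookkeeping. The ordering of the Kronecker factors in the first term as displayed ($\Delta\bP\otimes\bP\bU^\top$ versus the $\bP\otimes\bP\bU^\top$ structure of $\bR$) must be checked against the exact convention used in Lemma~\ref{lem:invsqrt_first_compact} and the definition of $\bR$. My first move would be to verify the identity on a generic direction by testing $\vecr(\Delta\bU)$ on both sides, and to reconcile the conventions term by term, rather than trusting the symbolic form.
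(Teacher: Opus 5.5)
Your proposal is correct and follows the same route as the paper, which obtains the lemma by differentiating the first-order Jacobian $-\bQ(\bU)^{-1}\bR(\bU)$ once more; differentiating $\bQ\bG=-\bR$, as you do, is equivalent to using $\Delta(\bQ^{-1})=-\bQ^{-1}\,\Delta\bQ\,\bQ^{-1}$. The Kronecker-ordering obstacle you anticipate does not arise, because the product rule gives $\Delta(\bP\otimes\bP\bU^\top)=\Delta\bP\otimes\bP\bU^\top+\bP\otimes(\Delta\bP\,\bU^\top+\bP\,\Delta\bU^\top)$ and $\Delta\{(\bP\bU^\top\otimes\bP)\mathcal{K}_{r,k}\}=\{(\Delta\bP\,\bU^\top+\bP\,\Delta\bU^\top)\otimes\bP\}\mathcal{K}_{r,k}+(\bP\bU^\top\otimes\Delta\bP)\mathcal{K}_{r,k}$, which are exactly the four displayed terms (dimensions force $\mathcal{K}_{r,k}$ to act on the whole Kronecker product).
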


\paragraph{Implication for assembling $\nabla^2_{\vecr(\bA)} f^{(t)}(\bA)$.}
Let $d:=u(r-u)$ be the dimension of $\vecr(\bA)$.
In the Euclidean envelope parameterization, assembling the full Hessian matrix
$\nabla^2_{\vecr(\bA)} f^{(t)}(\bA)\in\mathbb{R}^{d\times d}$ requires applying
Lemma~\ref{lem:invsqrt_second} to $\bU=\bC_{\bA}$ (with $k=u$) and $\bU=\bD_{\bA}$ (with $k=r-u$),
combined with the linear Jacobians $\partial\vecr(\bC_{\bA})/\partial\vecr(\bA)^\top$ and
$\partial\vecr(\bD_{\bA})/\partial\vecr(\bA)^\top$.

Each Hessian column (corresponding to a coordinate direction in $\vecr(\bA)$) requires a constant number of
solves with $\bQ(\bC_{\bA})$ and $\bQ(\bD_{\bA})$. Using generic fast inversion with exponent
$\omega\approx 2.376$, the resulting cost of assembling the full Hessian is
\[
\mathcal{O}\!\left(
d\cdot\big(u^{2\omega}+(r-u)^{2\omega}\big)
\right)
=
\mathcal{O}\!\left(
u(r-u)\cdot\big(u^{4.752}+(r-u)^{4.752}\big)
\right),
\]
up to lower-order matrix multiplications and bookkeeping. In the balanced regime $u\asymp r-u$,
this scales as $\mathcal{O}(u^{6.752})$.

\paragraph{Remark (link to the reparameterized objective).}
The above shows that the \citet{cook2016note}'s parameterization admits explicit first- and second-order derivatives,
but the Hessian assembly involves repeated Kronecker--sum solves of dimension $u^2$ and $(r-u)^2$.
This motivates the reparameterized objective \eqref{eq:16}, whose Hessian admits an explicit
closed form (see \eqref{eq:tAobj2}) and only requires inverting a $(r-u)u\times (r-u)u$ matrix to form
the Laplace covariance, which is substantially less burdensome in the dimensions considered here.

\subsection{Proof of Proposition~\ref{prop:DCT}}
\label{sec:A.5}

\begin{proof}[Proof of Proposition~\ref{prop:DCT}]
In the complete-data log-posterior \eqref{eq:7}, the parameter $\bA$ appears only through
$\bGamma(\bA)$, $\bGamma_0(\bA)$, and the prior term; all other parts are $\bA$--free.

\paragraph{Step 1: Uniform bounds on $\mathcal N$ and on $\nabla_{\vecr(\bA)}\bGamma,\ \nabla_{\vecr(\bA)}\bGamma_0$.}
As in the local analysis, take $\mathcal N := \{ \bA: \| \bA - \bA^{\star}\|_{F} \le \rho \}$ and define
\[
B_{\mathcal N}:=\|\bA^\star\|_F+\rho,
\qquad\text{so that}\qquad
\sup_{\bA\in\mathcal N}\|\bA\|_F\le B_{\mathcal N}.
\]
Recall that $\bC_{\bA}^\top\bC_{\bA}=\bI_u+\bA^\top\bA\succeq \bI_u$ (and analogously for $\bD_{\bA}$),
hence $\lambda_{\min}(\bC_{\bA}^\top\bC_{\bA})\ge 1$ for all $\bA$.
Restricting $\bA$ to $\mathcal N$ yields
\[
\lambda_{\max}(\bC_{\bA}^\top\bC_{\bA})
\le 1+\|\bA\|_{op}^2
\le 1+\|\bA\|_F^2
\le 1+B_{\mathcal N}^2.
\]
Therefore $(\bC_{\bA}^\top\bC_{\bA})^{-1}$ and $(\bC_{\bA}^\top\bC_{\bA})^{-1/2}$ are uniformly bounded on
$\mathcal N$ (and likewise with $\bD_{\bA}$ in place of $\bC_{\bA}$).
Using the parameterization~\eqref{eq:3} together with Lemma~\ref{lem:invsqrt_first_compact} (and standard matrix-calculus rules), we obtain a finite constant $K_1=K_1(B_{\mathcal N})<\infty$ such that
\[
\sup_{\bA\in\mathcal N}\|\nabla_{\vecr(\bA)}\bGamma(\bA)\|_{op} \le K_1,
\qquad
\sup_{\bA\in\mathcal N}\|\nabla_{\vecr(\bA)}\bGamma_0(\bA)\|_{op} \le K_1 .
\]

\paragraph{Step 2: Construct an integrable dominating function for the gradient.}
\emph{(a) $\bA$--uniform gradient bound.}
Write $\log p(\bZ,\bA\mid\bbY_n)$ as in \eqref{eq:7}.
Only the terms involving $\bGamma(\bA)$, $\bGamma_0(\bA)$ and the prior contribute to
$\nabla_{\vecr(\bA)}$.

Consider first the term
\[
T_1(\bA):=-\frac12\tr\!\Big[(\bbY_{\bmu}\bGamma(\bA)-\bbX\boldsymbol\eta^\top)\,\bOmega^{-1}\,
(\bbY_{\bmu}\bGamma(\bA)-\bbX\boldsymbol\eta^\top)^\top\Big].
\]
Let $\bE(\bA):=\bbY_{\bmu}\bGamma(\bA)-\bbX\boldsymbol\eta^\top$.
By standard matrix calculus, the derivative of $T_1$ with respect to $\bGamma$ satisfies
\[
\nabla_{\bGamma}T_1(\bA)= -\,\bbY_{\bmu}^\top \bE(\bA)\,\bOmega^{-1}.
\]
Using the chain rule and the operator bound on $\nabla_{\vecr(\bA)}\bGamma(\bA)$,
\[
\|\nabla_{\vecr(\bA)}T_1(\bA)\|_2
\le \|\nabla_{\vecr(\bA)}\bGamma(\bA)\|_{op}\,\|\nabla_{\bGamma}T_1(\bA)\|_F
\le K_1\,\|\bbY_{\bmu}^\top \bE(\bA)\bOmega^{-1}\|_F.
\]
Moreover,
\[
\|\bbY_{\bmu}^\top \bE(\bA)\bOmega^{-1}\|_F
\le \|\bbY_{\bmu}\|_F\,\|\bE(\bA)\|_F\,\|\bOmega^{-1}\|_{op},
\qquad
\|\bE(\bA)\|_F\le \|\bbY_{\bmu}\|_F + \|\bbX\|_{op}\,\|\boldsymbol\eta\|_F,
\]
(where we used $\|\bGamma(\bA)\|_{op}=1$).
Hence, for a universal constant $C$,
\[
\sup_{\bA\in\mathcal N}\|\nabla_{\vecr(\bA)}T_1(\bA)\|_2
\le C K_1\,\|\bOmega^{-1}\|_{op}\Big(\|\bbY_{\bmu}\|_F^2+\|\bbX\|_{op}^2\|\boldsymbol\eta\|_F^2\Big).
\]

Next consider the $\bGamma_0$--term
\[
T_2(\bA):=-\frac12\tr\!\Big[ \bGamma_0^\top(\bA)\,\bbY_{\bmu}^\top\bbY_{\bmu}\,\bGamma_0(\bA)\,\bOmega_0^{-1}\Big].
\]
Again by the chain rule and $\sup_{\bA\in\mathcal N}\|\nabla_{\vecr(\bA)}\bGamma_0(\bA)\|_{op}\le K_1$,
\[
\sup_{\bA\in\mathcal N}\|\nabla_{\vecr(\bA)}T_2(\bA)\|_2
\le C K_1\,\|\bOmega_0^{-1}\|_{op}\,\|\bbY_{\bmu}\|_F^2,
\]
for a universal constant $C$.

All remaining $\bA$--dependent likelihood terms in \eqref{eq:7} enter through $\bGamma(\bA)$ in an
affine/quadratic way and can be bounded similarly by $CK_1\|\bOmega^{-1}\|_{op}(1+\|\boldsymbol\eta\|_F^2)$,
and the prior gradient is $\bZ$--independent and uniformly bounded on $\mathcal N$.
Absorbing all fixed constants into a single constant, we obtain the domination
\[
\sup_{\bA\in\mathcal N}\big\|\nabla_{\vecr(\bA)}\log p(\bZ,\bA\mid \bbY_n)\big\|_{2}
\ \le\ h_1(\bZ),
\]
where one may take
\[
h_1(\bZ)
:= C_1(B_{\mathcal N})\Big\{
\|\bOmega^{-1}\|_{op}\big(\|\bbY_{\bmu}\|_F^2+\|\bbX\|_{op}^2\|\boldsymbol\eta\|_F^2+1\big)
+ \|\bOmega_0^{-1}\|_{op}\,\|\bbY_{\bmu}\|_F^2 + 1\Big\},
\]
for a finite constant $C_1(B_{\mathcal N})$ depending only on $\mathcal N$ through $B_{\mathcal N}$ and fixed
hyperparameters.

\emph{(b) Integrability under $q(\bZ)$.}
Because $q(\bmu),q(\boldsymbol\eta)$ are Gaussian, the moments
$\E_q\|\bbY_{\bmu}\|_F^2<\infty$ and $\E_q\|\boldsymbol\eta\|_F^2<\infty$ are finite.
Because $q(\bOmega),q(\bOmega_0)$ are inverse-Wishart, we have
$\E_q\tr(\bOmega^{-1})<\infty$ and $\E_q\tr(\bOmega_0^{-1})<\infty$, hence
$\E_q\|\bOmega^{-1}\|_{op}<\infty$ and $\E_q\|\bOmega_0^{-1}\|_{op}<\infty$.
Therefore $\E_q[h_1(\bZ)]<\infty$.

\paragraph{Step 3: Dominated convergence in a coordinate direction.}
Fix $\bA\in\mathcal N$ and consider the $j$th coordinate direction $e_j$ of $\vecr(\bA)$.
For $t$ small enough so that $\bA_t:=\bA+t\,\vecr^{-1}(e_j)\in\mathcal N$, define
\[
\Delta_t(\bZ):=\frac{\log p(\bZ,\bA_t\mid \bbY_n)-\log p(\bZ,\bA\mid \bbY_n)}{t}.
\]
By the mean value theorem, for each $\bZ$ there exists $\btA=\btA(\bZ,t)$ on the line segment between
$\bA$ and $\bA_t$ such that
\[
\Delta_t(\bZ)=\big\langle \nabla_{\vecr(\bA)}\log p(\bZ,\btA\mid \bbY_n),\, e_j\big\rangle,
\]
hence $|\Delta_t(\bZ)|\le \|\nabla_{\vecr(\bA)}\log p(\bZ,\btA\mid \bbY_n)\|_{2}\le h_1(\bZ)$.
Moreover, $\Delta_t(\bZ)\to \partial_{\vecr(\bA)_j}\log p(\bZ,\bA\mid \bbY_n)$ pointwise as $t\to0$.
Since $h_1$ is $q$--integrable, dominated convergence yields
\[
\partial_{\vecr(\bA)_j}\int q(\bZ)\log p(\bZ,\bA\mid \bbY_n)\,d\bZ
=
\int q(\bZ)\,\partial_{\vecr(\bA)_j}\log p(\bZ,\bA\mid \bbY_n)\,d\bZ.
\]
Doing this for all coordinates gives the vector identity
\[
\nabla_{\vecr(\bA)} \int q(\bZ)\log p(\bZ,\bA\mid \bbY_n)\,d\bZ
=
\int q(\bZ)\,\nabla_{\vecr(\bA)} \log p(\bZ,\bA\mid \bbY_n)\,d\bZ,
\]
which completes the proof.
\end{proof}

\section{Details of CALVI algorithm}
\label{sec:B}
This section collects the explicit expressions required to implement the reparameterized CALVI procedure in Algorithm~\ref{alg1}, with particular emphasis on the Laplace update for the nonconjugate envelope block $\bA$ and the closed-form CAVI updates for the conjugate blocks.

In Section~\ref{sec:B.1}, we provide a moment-expanded expression of the reparameterized $\bA$-coordinate objective $\tf^{(t)}(\bA)$ in~\eqref{eq:16} (up to an additive constant independent of $\bA$), written in terms of the centered design $\bbX_c$ and variational parameters under $q(\btmu)\,q(\boldsymbol{\tilde\eta})\,q(\btOmega)\,q(\btOmega_0)$.

In Section~\ref{sec:B.2}, we derive the coefficient matrices and scalars appearing in the gradient and Hessian representations~\eqref{eq:17}--\eqref{eq:18} (i.e., $\kappa$, $\boldsymbol{C}_1^{(t)}$, $\boldsymbol{C}_2^{(t)}$, and $\boldsymbol{C}_3^{(t)}$), and we record the resulting closed-form coordinate updates for the conjugate variational factors. Since the Laplace factor $q(\vecr(\bA))$ is Gaussian, these updates repeatedly involve Gaussian expectations of trace-quadratic forms in $\bC_{\bA}$ and $\bD_{\bA}$; to streamline the presentation, we introduce a Kronecker--trace operator and related identities used throughout the derivations.

Finally, Section~\ref{sec:B.3} provides the complete expression (including constants) of the approximate ELBO $\tilde{\mathcal L}(q)$ in~\eqref{eq:21} used for convergence monitoring in Algorithm~\ref{alg1}.

\subsection{Explicit form of the  \texorpdfstring{$\tilde f(\bA)$}{f~tilde(A)}.}
\label{sec:B.1}
We provide an explicit form of the local objective
$\tilde f^{(t)}(\bA)$ in~\eqref{eq:16} under the Bayesian response
envelope model with reparameterization~\eqref{eq:13}--\eqref{eq:14} and the MFVF scheme~\eqref{eq:15}.
Recall that, at iteration $t$, the MFVF factorizes as
\[
q(\btmu,\tilde{\boldsymbol{\eta}},\btOmega,\btOmega_0,\bA)
=
q(\btmu)\,q(\tilde{\boldsymbol{\eta}})\,q(\btOmega)\,q(\btOmega_0)\,q(\bA),
\]
with the conjugate variational factors 
\begin{align*}
    & q(\btmu) \sim \calN_{r}(\bmu_{q}, \bSigma_{q}), \nonumber\\
    & q(\boldsymbol{\tilde{\eta}}) \sim \mathcal{MN}_{u,p}(\boldsymbol{\eta}_{q}, \bU^{(\boldsymbol{\eta})}_{q}, \bV^{(\boldsymbol{\eta})}_{q}), \nonumber\\
    & q(\btOmega) \sim \mathcal{IW}_{u}(\bPsi^{(1)}_{q}, \nu^{(1)}_{q}), \nonumber\\
    & q(\btOmega_0) \sim \mathcal{IW}_{r-u}(\bPsi^{(0)}_{q}, \nu^{(0)}_{q}). \tag{S6} \label{eq:VF}
\end{align*}
The $\bA$--coordinate objective is defined by
\[
\tilde f(\bA)
:= \E_{ q(\btmu)\,q(\tilde{\boldsymbol{\eta}})\,q(\btOmega)\,q(\btOmega_0)}\!\left[\log p\!\left(\bA \mid \btmu,\tilde{\boldsymbol{\eta}},\btOmega,\btOmega_0,\bbY\right)\right].
\]
Since the normalizing constant in $p(\bA \mid \btmu,\tilde{\boldsymbol{\eta}},\btOmega,\btOmega_0,\bbY)$
does not depend on $\bA$, $\tilde f^{(t)}(\bA)$ can be written equivalently (up to an additive constant
independent of $\bA$) using the expected log-joint density.

We use the linear representations
\[
\bC_{\bA} = \bK + \bL\bA \in \mathbb{R}^{r\times u},
\qquad
\bD_{\bA} = \bL - \bK\bA^\top \in \mathbb{R}^{r\times (r-u)},
\]
where the fixed selection matrices $\bK\in\mathbb{R}^{r\times u}$ and $\bL\in\mathbb{R}^{r\times(r-u)}$ are
\[
\bK :=
\begin{pmatrix}
\boldsymbol{I}_u\\
\boldsymbol{0}_{(r-u)\times u}
\end{pmatrix},
\qquad
\bL :=
\begin{pmatrix}
\boldsymbol{0}_{u\times (r-u)}\\
\boldsymbol{I}_{r-u}
\end{pmatrix}.
\]
Then, using the variational factors implied by~\eqref{eq:VF} as follows:
\begin{align*}
    \E_{q(\btOmega)}[\btOmega^{-1}] &= \nu^{(1)}_{q}\,\bPsi_{q}^{(1)-1}, \\
    \E_{q(\btOmega_0)}[\btOmega_0^{-1}] &= \nu^{(0)}_{q}\,\bPsi_{q}^{(0)-1}, \\
    \E_{q(\boldsymbol{\tilde{\eta}})}[\boldsymbol{\tilde{\eta}}] &= \boldsymbol{\eta}_q, \\
    \E_{q(\btmu)}[\bbY_{\btmu}^\top \bbY_{\btmu}]
&= (\bbY-\boldsymbol{1}_n\bmu_q^\top)^\top(\bbY-\boldsymbol{1}_n\bmu_q^\top)+n\bSigma_q,
\end{align*}
we obtain
\begin{align}
\tf(\bA)
&=
\frac{2n+\nu^{(1)}+\nu^{(0)}}{2}\,\log\big|\bI_{r-u}+\bA\bA^\top\big|
\notag\\
&\quad
-\frac{\nu^{(1)}_q}{2}\,
\tr\!\Big( \bPsi_q^{(1)-1}\,\bA^\top \bL^\top \bG_{r,1}\,\bL\,\bA \Big)
-\frac{\nu^{(0)}_q}{2}\,
\tr\!\Big( \bK^\top \bG_{r,2}\,\bK\,\bA^\top \bPsi_q^{(0)-1}\,\bA \Big)
\notag\\
&\quad
-\frac{1}{2}\,\tr\!\Big( \bV_0^{(A)-1}\,\bA^\top \bU_0^{(A)-1}\,\bA \Big)
\notag\\
&\quad
-\tr\!\Big( \nu^{(1)}_q\,\bPsi_q^{(1)-1}\,(\bK^\top \bG_{r,1}\,\bL - \boldsymbol{\eta}_q \bH\,\bL)\,\bA \Big)
+\tr\!\Big( \nu^{(0)}_q\,\bK^\top \bG_{r,2}\,\bL\,\bPsi_q^{(0)-1}\,\bA \Big)
\notag\\
&\quad
+\tr\!\Big( \bV_0^{(A)-1}\,\bA_0^\top \bU_0^{(A)-1}\,\bA \Big)
+\textit{const.},
\tag{S7}\label{eq:tAobj}
\end{align}
where 
\begin{align*}
\bG_{r,1}
&:=
(\bbY-\boldsymbol{1}_n\bmu_q^\top)^\top(\bbY-\boldsymbol{1}_n\bmu_q^\top)
+ n\bSigma_q
+ \psi^{(1)}\bI_r
+ \bB_0 \bM \bB_0^\top,\\
\bG_{r,2}
&:=
(\bbY-\boldsymbol{1}_n\bmu_q^\top)^\top(\bbY-\boldsymbol{1}_n\bmu_q^\top)
+ n\bSigma_q
+ \psi^{(0)}\bI_r,\\
\bH
&:=
\bbX_c^\top \bbY + \bM \bB_0^\top, \tag{S8} \label{eq:notations}
\end{align*}
with $\boldsymbol{1}_n$ the $n$-vector of ones and $\bbX_c=\bbX - \boldsymbol{1}_n \bar{X}^\top$ the centered design matrix. Note that $\bbX_c^\top \boldsymbol{1}_n=\boldsymbol{0}$, hence
$\bbX_c^\top(\bbY-\boldsymbol{1}_n\btmu^\top)=\bbX_c^\top\bbY$. \textit{const.} denotes terms that do not depend on $\bA$ (but may depend on $t$).

\subsubsection*{Explicit form of the $\nabla_{\bA} \tf(\bA)$ and $\nabla_{\vecr(\bA)} \tf(\bA)$ }
Let $\bJ_0(\bA) := \bD_{\bA}^\top \bD_{\bA} = \bI_{r-u} + \bA \bA^\top$. The gradient of $\tf(\bA)$ with respect to $\bA$ is as follows: 
\begin{align*}
\nabla_{\bA}\tf(\bA)
&=
(2n+\nu^{(1)}+\nu^{(0)})\,\bJ_0(\bA)^{-1} \bA
\\
&\quad
-\nu^{(1)}_q\,\bL^\top \bG_{r,1}\,\bL\,\bA \bPsi_q^{(1)-1}\,
-\nu^{(0)}_q\, \bPsi_q^{(0)-1}\,\bA \bK^\top \bG_{r,2}\,\bK\,
-\bU_0^{(A)-1}\,\bA\ \bV_0^{(A)-1}
\\
&\quad
-\nu^{(1)}_q\,(\bK^\top \bG_{r,1}\,\bL - \boldsymbol{\eta}_q \bH\,\bL)^\top\,\bPsi_q^{(1)-1}
+ \nu^{(0)}_q \,\bPsi_q^{(0)-1} \bL^\top\, \bG_{r,2}^\top \,\bK
\\
&\quad
+ \bU_0^{(A)-1}\,\bA_0\, \bV_0^{(A)-1}.
\tag{S9}\label{eq:tAobj1}
\end{align*}
The second-order derivative characterized as a Hessian with respect to the vectorized parameter $\vecr(\bA)$ is as follows:
\begin{align*}
\nabla^2_{\vecr(\bA)}\tf(\bA)
&= (2n+\nu^{(1)}+\nu^{(0)}) \\
&\qquad\times\Big[(\bI_u - \bA^\top \bJ_0(\bA)^{-1} \bA) \otimes \bJ_0(\bA)^{-1} -\big((\bJ_0(\bA)^{-1}\bA)^\top\otimes (\bJ_0(\bA)^{-1}\bA)\big)\mathcal{K}_{(r-u),u} \Big] \\
&\quad
-\nu^{(1)}_q\,
\bPsi_q^{(1)-1} \otimes (\bL^\top \bG_{r,1}\,\bL) - \nu^{(0)}_q\,
(\bK^\top \bG_{r,2}\,\bK) \otimes \bPsi_q^{(0)-1} - \bV_0^{(A)-1} \otimes \bU_0^{(A)-1}.
\tag{S10}\label{eq:tAobj2}
\end{align*}

\subsection{Update of variational factors}
\label{sec:B.2}
This subsection provides the explicit closed-form coordinate updates for the conjugate
variational factors in Algorithm~\ref{alg1}.
These updates are obtained by applying the generic CAVI formula~\eqref{eq:8} to the
reparameterized Bayesian response envelope model under the MFVF assumption.

\subsubsection*{An operator for Gaussian expectations over $q(\vecr(\bA))$.}
In CALVI, the nonconjugate factor $q(\vecr(\bA))$ is updated via a Laplace approximation and is therefore
Gaussian. When deriving the conjugate coordinate updates for $q(\btOmega)$, $q(\btOmega_0)$, and $q(\btmu)$,
we repeatedly encounter expectations of trace-quadratic terms involving
$\bC_{\bA}$ and $\bD_{\bA}$ with respect to this Gaussian factor.
To preserve the standard inverse-Wishart and normal forms after taking these expectations,
we introduce the Kronecker--trace operator $\Ktr[\cdot]$ and provide
two identities (Proposition~\ref{Ktr proposition} and Corollary~\ref{Ktr corollary})
that are used throughout the subsequent updates.

\subsubsection*{Kronecker--trace operator and its properties.}
Let $d,n\in\mathbb{N}$. For a matrix $\boldsymbol{R}\in\mathbb{R}^{d\times d}$ and a block matrix
$\boldsymbol{B}\in\mathbb{R}^{dn\times dn}$ partitioned into $n\times n$ blocks
$\boldsymbol{B}_{ij}\in\mathbb{R}^{d\times d}$, define the Kronecker--trace operator
$\Ktr[\boldsymbol{R}\odot \bB]\in\mathbb{R}^{n\times n}$ by
\[
\big[\Ktr[\boldsymbol{R}\odot \bB]\big]_{ij}:=\tr(\boldsymbol{R}\,\bB_{ij}),\qquad i,j=1,\dots,n.
\]
(Here $\odot$ is a notational separator indicating blockwise products inside the trace; it is not the Hadamard product.)

\begin{proposition}[Kronecker--trace identity]
\label{Ktr proposition}
Let $\bR\in\mathbb{R}^{d\times d}$, $\bS\in\mathbb{R}^{n\times n}$, and
$\bH\in\mathbb{R}^{dn\times dn}$, where $\bH$ is partitioned into $n\times n$ blocks
$\bH_{ij}\in\mathbb{R}^{d\times d}$. Then
\[
\tr\!\big[(\bS\otimes \bR)\bH\big] \;=\; \tr\!\big(\bS\,\Ktr[\bR\odot \bH]\big).
\]
Equivalently,
\[
\tr\!\big[(\bS^\top\otimes \bR)\bH\big]
\;=\;
\tr\!\big(\Ktr[\bR\odot \bH]\,\bS^\top\big)
\;=\;
\tr\!\big(\Ktr[\bR\odot \bH]^\top \bS\big).
\]
In our applications, $\bS$ (e.g., $\btOmega^{-1}$ or $\btOmega_0^{-1}$) is symmetric, so the transpose can be dropped.
\end{proposition}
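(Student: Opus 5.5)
The statement to prove is the Kronecker--trace identity (Proposition~\ref{Ktr proposition}). I will verify it by direct block-matrix computation; no earlier result is needed beyond the definitions of the Kronecker product and of $\Ktr$.

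First I write $\bS\otimes\bR$ in block form: it is the $n\times n$ block matrix whose $(i,j)$ block is $s_{ij}\bR$, with each block $d\times d$. This matches the partition of $\bH$ into blocks $\bH_{ij}$. The product $(\bS\otimes\bR)\bH$ then has $(i,i)$ diagonal block $\sum_{j} s_{ij}\bR\bH_{ji}$.

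Taking the trace block by block, and using linearity of the trace, gives
\[
\tr\!\big[(\bS\otimes\bR)\bH\big]=\sum_{i,j} s_{ij}\,\tr(\bR\bH_{ji})=\sum_{i,j} s_{ij}\,\big[\Ktr[\bR\odot\bH]\big]_{ji}.
\]
The right-hand side is exactly $\tr\!\big(\bS\,\Ktr[\bR\odot\bH]\big)$, since $\tr(\bS\bT)=\sum_{i,j}s_{ij}t_{ji}$ for any square $\bT$. This proves the first identity.

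For the equivalent forms, I apply the first identity with $\bS^\top$ in place of $\bS$. This gives $\tr\!\big(\bS^\top\Ktr[\bR\odot\bH]\big)$. Cyclicity of the trace turns this into $\tr\!\big(\Ktr[\bR\odot\bH]\,\bS^\top\big)$. Invariance of the trace under transposition, $\tr(\bT\bS^\top)=\tr(\bS\bT^\top)=\tr(\bT^\top\bS)$, then gives the last expression. When $\bS$ is symmetric, $\bS^\top=\bS$, which justifies dropping the transpose.

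The only point needing care is the index bookkeeping. The diagonal block of the product involves $\bH_{ji}$, not $\bH_{ij}$, and this transposition of indices is what produces $\tr(\bS\,\Ktr)$ rather than $\tr(\bS^\top\Ktr)$. I will check this against the convention that $\bS\otimes\bR$ has blocks $s_{ij}\bR$. With the index order pinned down, the rest is immediate.
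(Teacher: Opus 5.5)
Your proof is correct. The $(i,i)$ diagonal block of $(\bS\otimes\bR)\bH$ is $\sum_j s_{ij}\bR\bH_{ji}$, so the trace is $\sum_{i,j}s_{ij}\,[\Ktr[\bR\odot\bH]]_{ji}=\tr(\bS\,\Ktr[\bR\odot\bH])$, and the transposed forms follow by substituting $\bS^\top$ and using cyclicity and transpose-invariance of the trace. The paper states this proposition without a proof and uses it directly in the proof of Corollary~\ref{Ktr corollary}; your blockwise computation, with the index order handled correctly, supplies exactly the elementary verification it leaves implicit.
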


Corollary~\ref{Ktr corollary} provides closed-form expressions for the Gaussian expectations of
trace-quadratic terms in $\bC_{\bA}$ and $\bD_{\bA}$ that appear in the conjugate coordinate updates.
These identities allow us to rewrite the expected quadratic forms in the standard inverse-Wishart/normal
update forms, using only the Laplace mean $\hbA$ and covariance $\bSigma$.
\begin{corollary}
\label{Ktr corollary}
Let $m:=r-u$ and suppose
\[
\vecr(\bA)\sim \calN\!\big(\vecr(\hbA),\,\bSigma\big),
\qquad
\bSigma\in\mathbb{R}^{mu\times mu}.
\]
Define
\[
\bSigma_T := \mathcal{K}_{m,u}\bSigma \mathcal{K}_{m,u}^\top,
\]
which is the covariance of $\vecr(\bA^\top)$.

Assume $\bC_1\in\mathbb{S}^r$ and $\bC_2$ is symmetric of conformable size.
Then:
\begin{enumerate}
\item[(1)]
\[
\E\!\left[\tr\!\{\bC_{\bA}^\top \bC_1 \bC_{\bA}\bC_2\}\right]
=
\tr\!\left[\left(
\bC_{\hbA}^\top \bC_1 \bC_{\hbA}
+
\Ktr[\bL^\top \bC_1 \bL \odot \bSigma]
\right)\bC_2\right].
\]

\item[(2)]
\[
\E\!\left[\tr\!\{\bD_{\bA}^\top \bC_1 \bD_{\bA}\bC_2\}\right]
=
\tr\!\left[\left(
\bD_{\hbA}^\top \bC_1 \bD_{\hbA}
+
\Ktr[\bK^\top \bC_1 \bK \odot \bSigma_T]
\right)\bC_2\right].
\]
\end{enumerate}
\end{corollary}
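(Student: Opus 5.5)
The plan is a direct second-moment computation. Write $\bA=\hbA+\bE$, where $\vecr(\bE)\sim\calN(\boldsymbol{0},\bSigma)$. Then use the affine representations $\bC_{\bA}=\bK+\bL\bA$ and $\bD_{\bA}=\bL-\bK\bA^\top$ from Section~\ref{sec:B.1}. These give
\[
\bC_{\bA}=\bC_{\hbA}+\bL\bE,\qquad \bD_{\bA}=\bD_{\hbA}-\bK\bE^\top .
\]

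For part (1), substitute into $\tr\{\bC_{\bA}^\top\bC_1\bC_{\bA}\bC_2\}$ and expand into three pieces:
\begin{itemize}
\item the deterministic term $\tr\{\bC_{\hbA}^\top\bC_1\bC_{\hbA}\bC_2\}$;
\item two cross terms that are linear in $\bE$, which vanish in expectation because $\E[\bE]=\boldsymbol{0}$;
\item the quadratic term $\tr\{\bE^\top\bM\bE\bC_2\}$ with $\bM:=\bL^\top\bC_1\bL\in\bbR^{m\times m}$.
\end{itemize}
For the quadratic term, use $\tr(\bX^\top\bY)=\vecr(\bX)^\top\vecr(\bY)$ and $\vecr(\bM\bE\bC_2)=(\bC_2^\top\otimes\bM)\vecr(\bE)$ to write it as $\vecr(\bE)^\top(\bC_2\otimes\bM)\vecr(\bE)$, using that $\bC_2$ is symmetric. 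Its expectation is $\tr[(\bC_2\otimes\bM)\bSigma]$.

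Now view $\bSigma$ as a $u\times u$ array of $m\times m$ blocks. This is consistent with column-wise vectorization: $\vecr(\bE)$ stacks the $u$ columns of $\bE$, each of length $m$. Proposition~\ref{Ktr proposition}, applied with $\bS=\bC_2$, $\bR=\bM$ and $\bH=\bSigma$, then gives $\tr(\bC_2\,\Ktr[\bM\odot\bSigma])$. Collecting the deterministic and quadratic contributions yields (1).

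Part (2) goes the same way with $\bE^\top$ in place of $\bE$. The quadratic piece is $\tr\{\bE\,\bK^\top\bC_1\bK\,\bE^\top\bC_2\}=\tr\{(\bE^\top)^\top\bM'\bE^\top\bC_2\}$ with $\bM':=\bK^\top\bC_1\bK\in\bbR^{u\times u}$. By the same vec identity this equals $\vecr(\bE^\top)^\top(\bC_2\otimes\bM')\vecr(\bE^\top)$. Since $\vecr(\bE^\top)=\mathcal K_{m,u}\vecr(\bE)$ has covariance $\bSigma_T$, the expectation is $\tr[(\bC_2\otimes\bM')\bSigma_T]$. Proposition~\ref{Ktr proposition}, now with $\bSigma_T$ partitioned into $m\times m$ blocks of size $u\times u$, turns this into $\tr(\bC_2\,\Ktr[\bK^\top\bC_1\bK\odot\bSigma_T])$. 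Again the cross terms vanish by zero mean.

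The algebra itself is routine. The main obstacle I expect is the bookkeeping: checking that the block partition of $\bSigma$ (respectively $\bSigma_T$) induced by column-wise vectorization matches the Kronecker ordering $\bC_2\otimes\bM$ required by Proposition~\ref{Ktr proposition}. I would handle this by verifying the identity entrywise, $\E[\vecr(\bE)^\top(\bC_2\otimes\bM)\vecr(\bE)]=\sum_{i,j}(\bC_2)_{ij}\tr(\bM\bSigma_{ji})$, and then using the symmetry of $\bC_2$ to match the stated transpose convention.
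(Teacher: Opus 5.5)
Your proposal is correct and follows essentially the same route as the paper's proof: center at $\hbA$, drop the linear cross terms by zero mean, rewrite the quadratic term as $\tr[(\bC_2\otimes\bL^\top\bC_1\bL)\bSigma]$ via the vec--Kronecker identity, and invoke Proposition~\ref{Ktr proposition}. Part (2) repeats this with $\vecr(\bE^\top)$ and $\bSigma_T$, and your entrywise check of the block ordering makes explicit a step the paper leaves implicit.
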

\begin{proof}
We prove (1); the proof of (2) follows analogously.

Recall that $\bC_{\bA} = \bK + \bL \bA$ and that
$\vecr(\bA)\sim \calN(\vecr(\hbA),\,\bSigma)$.
Write $\bA = \hbA + (\bA-\hbA)$ and expand:
\begin{align*}
\E\!\left[\tr\!\{\bC_{\bA}^\top \bC_1 \bC_{\bA}\bC_2\}\right]
&= \E\!\left[\tr\!\{(\bK+\bL\bA)^\top \bC_1 (\bK+\bL\bA)\bC_2\}\right] \\
&= \E\!\Big[
\tr\!\{(\bC_{\hbA}+\bL(\bA-\hbA))^\top \bC_1
(\bC_{\hbA}+\bL(\bA-\hbA))\bC_2\}
\Big].
\end{align*}
The cross terms vanish by centering, yielding
\begin{align*}
\E\!\left[\tr\!\{\bC_{\bA}^\top \bC_1 \bC_{\bA}\bC_2\}\right]
&= \tr\!\{\bC_{\hbA}^\top \bC_1 \bC_{\hbA}\bC_2\}
+ \E\!\left[\tr\!\{(\bA-\hbA)^\top \bL^\top \bC_1 \bL (\bA-\hbA)\bC_2\}\right].
\end{align*}
Using the identity
\[
\tr\!\{ \bX^\top \bM \bX \bN \}
= \vecr(\bX)^\top (\bN^\top \otimes \bM)\,\vecr(\bX),
\]
the second term becomes
\begin{align*}
\E\!\left[
\vecr(\bA-\hbA)^\top
(\bC_2^\top \otimes \bL^\top \bC_1 \bL)
\vecr(\bA-\hbA)
\right]
= \tr\!\left[
(\bC_2^\top \otimes \bL^\top \bC_1 \bL)\,\bSigma
\right].
\end{align*}
By Proposition~\ref{Ktr proposition}, this trace can be written as
\[
\tr\!\left[
\Ktr[\bL^\top \bC_1 \bL \odot \bSigma]\;\bC_2
\right].
\]
Combining terms yields
\[
\E\!\left[\tr\!\{\bC_{\bA}^\top \bC_1 \bC_{\bA}\bC_2\}\right]
=
\tr\!\left[
\left(
\bC_{\hbA}^\top \bC_1 \bC_{\hbA}
+
\Ktr[\bL^\top \bC_1 \bL \odot \bSigma]
\right)\bC_2
\right],
\]
which proves (1).

For (2), note that $\bD_{\bA} = \bL - \bK\bA^\top$ and
$\vecr(\bA^\top)\sim \calN(\vecr(\hbA^\top),\,\bSigma_T)$ with
$\bSigma_T=\mathcal{K}_{m,u}\bSigma\mathcal{K}_{m,u}^\top$.
Repeating the same argument with $\bA^\top$ in place of $\bA$
yields the stated result.
\end{proof}

\subsubsection*{Update for $q(\boldsymbol{\tilde{\eta}})$.}
\begin{align*}
    \log q(\boldsymbol{\tilde{\eta}}) &\propto \bbE_{-q(\boldsymbol{\tilde \eta})} \left[ \log p(\btmu, \boldsymbol{\tilde \eta}, \btOmega, \btOmega_0, \bA \mid \mathbb{X}, \mathbb{Y}) \right], \\
    & \propto -\frac{1}{2} \bbE_{q(\btmu) q(\btOmega) q(\vecr(\bA))} \left[ \tr \left[ \left( \boldsymbol{\tilde \eta}(\mathbb{X}_c^\top \mathbb{X}_c + \boldsymbol{M}) \boldsymbol{\tilde \eta}^\top -2  \boldsymbol{\tilde \eta}(\mathbb{X}_c^\top \mathbb{\tilde Y}_{\bmu} + \boldsymbol{M} \boldsymbol{B}^\top_0) \boldsymbol{C}_{\bA} \right) \btOmega^{-1} \right] \right] \\
    & = -\frac{1}{2} \bbE_{ q(\btOmega) q(\vecr(\bA))} \left[ \tr \left[ \left( \boldsymbol{\tilde \eta}(\mathbb{X}_c^\top \mathbb{X}_c + \boldsymbol{M}) \boldsymbol{\tilde \eta}^\top -2  \boldsymbol{\tilde \eta}(\mathbb{X}_c^\top \bbY + \boldsymbol{M} \boldsymbol{B}^\top_0) \boldsymbol{C}_{\bA} \right) \btOmega^{-1} \right] \right] \\
    & = -\frac{1}{2} \bbE_{ q(\vecr(\bA))} \left[ \tr \left[ \nu^{(1)}_q \left( \boldsymbol{\tilde \eta}(\mathbb{X}_c^\top \mathbb{X}_c + \boldsymbol{M}) \boldsymbol{\tilde \eta}^\top -2  \boldsymbol{\tilde \eta}(\mathbb{X}_c^\top \bbY + \boldsymbol{M} \boldsymbol{B}^\top_0) \boldsymbol{C}_{\bA} \right) \bPsi^{(1)-1}_q \right] \right] \\
    & = -\frac{1}{2} \tr \left[ \nu^{(1)}_q \left( \boldsymbol{\tilde \eta}(\mathbb{X}_c^\top \mathbb{X}_c + \boldsymbol{M}) \boldsymbol{\tilde \eta}^\top -2  \boldsymbol{\tilde \eta}(\mathbb{X}_c^\top \bbY + \boldsymbol{M} \boldsymbol{B}^\top_0) \bC_{\hbA} \right) \bPsi^{(1)-1}_q \right] \\
    & \propto
    \text{const.} -\frac{1}{2} \tr \left[\boldsymbol{V}^{(\boldsymbol{\eta}) -1}_q (\boldsymbol{\tilde{\eta}} - \boldsymbol{\eta}_q)^\top \boldsymbol{U}^{(\boldsymbol{\eta}) -1}_q (\boldsymbol{\tilde{\eta}} - \boldsymbol{\eta}_q) \right],
\end{align*}
where $\mathbb{\tilde Y}_{\btmu} = (\bbY-\boldsymbol{1}_n\btmu^\top)$.
\begin{align*}
    q(\boldsymbol{\tilde{\eta}}) & \propto \mathcal{MN}_{u,p}(\boldsymbol{\eta}_q, \boldsymbol{U}^{(\boldsymbol{\eta})}_q, \boldsymbol{V}^{(\boldsymbol{\eta})}_q),\\
    \boldsymbol{\eta}_q &= \bC_{\hbA}^\top (\mathbb{X}_c^\top \mathbb{Y} + \boldsymbol{M} \boldsymbol{B}_0^\top)^\top (\mathbb{X}_c^\top \mathbb{X}_c + \boldsymbol{M})^{-1}, \\
    \boldsymbol{U}^{(\boldsymbol{\eta})}_q &= \bPsi^{(1)}_q/\nu^{(1)}_q, \\
    \boldsymbol{V}^{(\boldsymbol{\eta})}_q &= (\mathbb{X}_c^\top \mathbb{X}_c + \boldsymbol{M})^{-1}.
\end{align*}

\subsubsection*{Update for $q(\btOmega)$.}
\begin{align*}
    \log q(\btOmega) & \propto  \bbE_{-q(\btOmega)} \left[ \log p(\btmu, \boldsymbol{\tilde \eta}, \btOmega, \btOmega_0, \bA \mid \mathbb{X}, \mathbb{Y}) \right] \\
    & \propto \text{const.} -\frac{n + \nu^{(1)} + u + p + 1}{2} \log \left| \mathbf{\tilde{\Omega}} \right| \\
    &\quad -\frac{1}{2} \bbE_{q(\btmu)q(\vecr(\bA))} \left[ \tr \left[ \boldsymbol{C}^\top_{\bA} (\mathbb{\tilde Y}_{\bmu}^\top \mathbb{\tilde Y}_{\bmu} + \boldsymbol{B}_0 \boldsymbol{M} \boldsymbol{B}^\top_0 + \psi^{(1)} \boldsymbol{I}_r) \boldsymbol{C}_{\bA} \btOmega^{-1} \right] \right] \\
    &\quad -\frac{1}{2} \bbE_{q(\btmu)q(\boldsymbol{\tilde{\eta}})q(\vecr(\bA))} \left[ -2 \tr \left[ \boldsymbol{\tilde \eta}(\mathbb{X}_c^\top \mathbb{\tilde Y}_{\bmu} + \boldsymbol{M} \boldsymbol{B}_0^\top) \boldsymbol{C}_{\bA} \btOmega^{-1} \right] \right] \\
    &\quad -\frac{1}{2} \bbE_{q(\boldsymbol{\tilde{\eta}})} \left[ \tr\left[ \boldsymbol{\tilde \eta}(\mathbb{X}_c^\top \mathbb{X}_c + \boldsymbol{M}) \boldsymbol{\tilde \eta}^\top \btOmega^{-1} \right] \right] \\
    & = \text{const.} -\frac{n + \nu^{(1)} + u + p + 1}{2} \log \left| \mathbf{\tilde{\Omega}} \right| \\
    &\quad -\frac{1}{2} \bbE_{q(\vecr(\bA))} \left[ \tr \left[ \boldsymbol{C}^\top_{\bA} \boldsymbol{G}_{r,1} \boldsymbol{C}_{\bA} \btOmega^{-1} \right] \right] \\
    &\quad -\frac{1}{2}  \bbE_{q(\boldsymbol{\tilde{\eta}})q(\vecr(\bA))} \left[ -2 \tr \left[ \boldsymbol{\tilde \eta}(\mathbb{X}_c^\top \bbY + \boldsymbol{M} \boldsymbol{B}_0^\top) \boldsymbol{C}_{\bA} \btOmega^{-1} \right] \right] \\
    &\quad -\frac{1}{2} \bbE_{q(\boldsymbol{\tilde{\eta}})} \left[ \tr\left[ \boldsymbol{\tilde \eta}(\mathbb{X}_c^\top \mathbb{X}_c + \boldsymbol{M}) \boldsymbol{\tilde \eta}^\top \btOmega^{-1} \right] \right] \\
    & = \text{const.} -\frac{n + \nu^{(1)} + u + p + 1}{2} \log \left| \mathbf{\tilde{\Omega}} \right| \\
    &\quad -\frac{1}{2} \bbE_{q(\vecr(\bA))} \left[ \tr \left[ \boldsymbol{C}^\top_{\bA} \boldsymbol{G}_{r,1} \boldsymbol{C}_{\bA} \btOmega^{-1} \right] \right] \\
    &\quad -\frac{1}{2} \bbE_{q(\vecr(\bA))} \left[ -2 \tr \left[ \boldsymbol{\eta}_q(\mathbb{X}_c^\top \bbY + \boldsymbol{M} \boldsymbol{B}_0^\top) \boldsymbol{C}_{\bA} \btOmega^{-1} \right] \right] \\
    &\quad -\frac{1}{2} \tr\left[ \left( \tr\left[(\mathbb{X}_c^\top \mathbb{X}_c + M) \boldsymbol{V}_q^{(\eta)}\right]\boldsymbol{U}_q^{(\eta)} + \boldsymbol{\eta}_q(\mathbb{X}_c^\top \mathbb{X}_c + \boldsymbol{M}) \boldsymbol{\eta}_q^\top \right) \btOmega^{-1} \right] \\
    & = \text{const.} -\frac{n + \nu^{(1)} + u + p + 1}{2} \log \left| \mathbf{\tilde{\Omega}} \right| \\
    &\quad -\frac{1}{2} \tr \Big[ \Big( \bC^\top_{\hbA} \boldsymbol{G}_{r,1} \bC_{\hbA} \\
    &\qquad + \Ktr\left[\bL^\top \boldsymbol{G}_{r,1} \bL \odot H(\vecr(\hbA))\right] \Big) \btOmega^{-1} \Big] \tag{S11} \label{eq:upOmega} \\
    &\quad -\frac{1}{2} \tr \left[ -2 \boldsymbol{\eta}_q(\mathbb{X}_c^\top \bbY + \boldsymbol{M} \boldsymbol{B}_0^\top) \bC_{\hbA} \btOmega^{-1} \right] \\
    &\quad -\frac{1}{2} \tr\left[ \left( \tr\left[(\mathbb{X}_c^\top \mathbb{X}_c + M) \boldsymbol{V}_q^{(\eta)}\right]\boldsymbol{U}_q^{(\eta)} + \boldsymbol{\eta}_q(\mathbb{X}_c^\top \mathbb{X}_c + \boldsymbol{M}) \boldsymbol{\eta}_q^\top \right) \btOmega^{-1} \right] \\
    & \propto \text{const.} - \frac{\nu_{q}^{(1)} + u + 1}{2} \log |\btOmega| 
    - \frac{1}{2} \tr \left[ \boldsymbol{\Psi}_{q}^{(1)} \btOmega^{-1} \right], \\
    q(\btOmega) & \propto \mathcal{IW}_u(\boldsymbol{\Psi}_{q}^{(1)}, \nu_{q}^{(1)}),\\
    \boldsymbol{\Psi}_{q}^{(1)} &=   \bC^\top_{\hbA} \boldsymbol{G}_{r,1} \bC_{\hbA} + \Ktr\left[\bL^\top \boldsymbol{G}_{r,1} \bL \odot H(\vecr(\hbA))\right] \\
    & \quad \quad \quad -2 \boldsymbol{\eta}_q \left( \mathbb{X}_c^\top \mathbb{Y} + \boldsymbol{M} \boldsymbol{B}_0^\top \right) \bC_{\hbA} + \tr\left[(\mathbb{X}_c^\top \mathbb{X}_c + \boldsymbol{M}) \boldsymbol{V}_q^{(\eta)}\right]\boldsymbol{U}_q^{(\eta)} + \boldsymbol{\eta}_q (\mathbb{X}_c^\top \mathbb{X}_c + \boldsymbol{M}) \boldsymbol{\eta}_q^\top, \\
    H(\vecr(\hbA)) &:= -\Big[\nabla^2_{\vecr(\bA)}\tf(\hbA)\Big]^{-1}, \\
    \nu_q^{(1)} &=  n + \nu^{(1)} + p.
\end{align*}
Here $\boldsymbol{G}_{r,1}$ in given in~\eqref{eq:notations}. Equation~\eqref{eq:upOmega} is the result of Corollary~\ref{Ktr corollary} (1). 

\subsubsection*{Update for $q(\btOmega_0)$.}
\begin{align*}
    \log q(\btOmega_0) &\propto \bbE_{-q(\btOmega_0)} \left[ \log p(\btmu, \boldsymbol{\tilde \eta}, \btOmega, \btOmega_0, \bA \mid \mathbb{X}, \mathbb{Y}) \right] \\
    &\propto \textit{const.} -\frac{n + \nu^{(0)} + r - u + 1}{2} \log \left| \mathbf{\tilde{\Omega}}_0 \right| \\
    &\quad -\frac{1}{2} \bbE_{q(\btmu) q(\vecr(\bA))} \left[ \tr \left[ \boldsymbol{D}^\top_{\bA} (\mathbb{\tilde Y}_{\bmu}^\top \mathbb{\tilde Y}_{\bmu} + \psi^{(0)} \boldsymbol{I}_{r}) \boldsymbol{D}_{\bA} \btOmega_0^{-1} \right] \right] \\
    &= \textit{const.} -\frac{n + \nu^{(0)} + r - u + 1}{2} \log \left| \mathbf{\tilde{\Omega}}_0 \right| \\
    &\quad -\frac{1}{2} \bbE_{q(\vecr(\bA))} \left[ \tr \left[ \boldsymbol{D}^\top_{\bA} \boldsymbol{G}_{r,2} \boldsymbol{D}_{\bA} \btOmega_0^{-1} \right] \right] \\
    &= \textit{const.} -\frac{n + \nu^{(0)} + r - u + 1}{2} \log \left| \btOmega_0 \right| \\
    &\quad -\frac{1}{2} \tr \left[ \left( \bD^\top_{\hbA} \boldsymbol{G}_{r,2} \bD_{\hbA} + \Ktr[\bK^\top \boldsymbol{G}_{r,2} \bK \odot \tilde{H}(\vecr(\hbA))] \right)\btOmega_0^{-1} \right]  \tag{S12} \label{eq:upOmega0} \\
    &\propto  
    \text{const.} - \frac{\nu^{(0)}_q + r - u + 1}{2} \log |\btOmega_0| 
    - \frac{1}{2} \tr \left[ \boldsymbol{\Psi}_{q}^{(0)} \btOmega_0^{-1} \right], \\
    q(\btOmega_0) & \propto \mathcal{IW}_{r-u}(\boldsymbol{\Psi}_{q}^{(0)}, \nu_{q}^{(0)}),\\
    \boldsymbol{\Psi}_{q}^{(0)} &= \bD^\top_{\hbA} \boldsymbol{G}_{r,2} \bD_{\hbA} + \Ktr[\bK^\top \boldsymbol{G}_{r,2} \bK \odot \tilde{H}(\vecr(\hbA))], \\
    \tilde{H}(\vecr(\hbA)) &= \mathcal{K}_{(r-u), u}H(\vecr(\hbA))\mathcal{K}_{u, (r-u)}, \\
    \nu_q^{(0)} &= n + \nu^{(0)}.
\end{align*}
Here $\boldsymbol{G}_{r,2}$ is given in~\eqref{eq:notations}. Equation~\eqref{eq:upOmega0} is the result of Corollary~\ref{Ktr corollary} (2). 

\subsubsection*{Update for $q(\btmu)$.}
\begin{align*}
    \log q(\btmu) &\propto \bbE_{-q(\btmu)} \left[ \log p(\btmu, \boldsymbol{\tilde \eta}, \btOmega, \btOmega_0, \bA \mid \mathbb{X}, \mathbb{Y}) \right] , \\
    &\propto -\frac{1}{2} \bbE_{q(\btOmega) q(\btOmega_0) q(\vecr(\bA))} \left[ \tr \left[ (\mathbb{\tilde Y}_{\bmu}^\top \mathbb{\tilde Y}_{\bmu}) \left( \bC_{\bA} \btOmega^{-1} \bC^\top_{\bA} + \bD_{\bA} \btOmega_0^{-1} \bD^\top_{\bA} \right) \right] \right] \\
    &= -\frac{1}{2} \bbE_{q(\vecr(\bA))} \left[ \tr \left[ (\btmu \btmu^\top - 2 \bar{\boldsymbol{Y}} \btmu^\top) n \left( \bC_{\bA} \nu^{(1)}_q \bPsi^{(1) -1}_q \bC^\top_{\bA} + \bD_{\bA} \nu^{(0)}_q \bPsi^{(0) -1}_q \bD^\top_{\bA} \right) \right] \right] \\
    &= -\frac{1}{2} \tr \left[ (\btmu \btmu^\top - 2 \bar{\boldsymbol{Y}} \btmu^\top) ( \boldsymbol{S}_1 + \boldsymbol{S}_2 ) \right] \\
    &\propto  
    \text{const.} - \frac{1}{2} \tr\left[ (\btmu-\bmu_q)^\top \bSigma_q^{-1} (\btmu-\bmu_q) \right], \\
    q(\btmu) & \propto \mathcal{N}_{r}(\bmu_q, \bSigma_q),\\
    \bmu_q &= \boldsymbol{\bar{Y}}, \quad \bSigma_q = (\boldsymbol{S}_1 + \boldsymbol{S}_2)^{-1}, \\
    \boldsymbol{S}_1 &= n \nu_q^{(1)} \left(\bC_{\hbA} \boldsymbol{\Psi}_q^{(1) -1} \bC_{\hbA}^\top + \bL \Ktr\left[\boldsymbol{\Psi}_q^{(1) -1} \odot \tilde{H}(\vecr(\hbA)) \right] \bL^\top \right), \\
    \boldsymbol{S}_2 &= n\nu_q^{(0)} \left(\bD_{\hbA} \boldsymbol{\Psi}_q^{(0) -1} \bD_{\hbA}^\top + \bK \Ktr\left[\boldsymbol{\Psi}_q^{(0) -1} \odot H(\vecr(\hbA)) \right] \bK^\top \right).
\end{align*}

\subsection{Approximated ELBO}
\label{sec:B.3}
\begin{align*}
    \tilde\calL(q) = & \mathbb{E}_{q(\cdot)}\Big[\log p(\btmu, \boldsymbol{\tilde \eta}, \btOmega, \btOmega_0, \bA \mid \mathbb{X}, \mathbb{Y}) - \log q(\btmu, \boldsymbol{\tilde \eta}, \btOmega, \btOmega_0, \bA) \Big] \\
    \approx & \ const. + \mathbb{E}_{q(\bA)}\Big[\tf(\hbA) + \frac{1}{2}\vecr(\bA-\hbA)^\top  \tf^{\prime\prime}(\hbA) \vecr(\bA-\hbA) \Big] -\mathbb{E}_{q(\btmu, \boldsymbol{\tilde \eta}, \btOmega, \btOmega_0, \bA)}\Big[\log q(\btmu, \boldsymbol{\tilde \eta}, \btOmega, \btOmega_0, \bA) \Big] \\
    = &-\frac{up}{2} \log (2 \pi) -\frac{u}{2} \log |\boldsymbol{M}^{-1}|  -\frac{\nu^{(1)} u + \nu^{(0)} (r-u)}{2}\log(2) \\ 
    & - \log (\Gamma_u (\frac{\nu^{(1)}}{2})) +\frac{\nu^{(1)}}{2} \log \left| \boldsymbol{\Psi} \right| - \log (\Gamma_{r-u} (\frac{\nu^{(0)}}{2})) +\frac{\nu^{(0)}}{2} \log \left| \boldsymbol{\Psi}_0 \right|  \\
    & -\frac{((r-u)u)}{2} \log (2 \pi) - \frac{u}{2} \log |\boldsymbol{U}_0^{(\bA)}| - \frac{(r-u)}{2} \log |\boldsymbol{V}_0^{(\bA)}| \\
    & + \frac{n + \nu^{(1)} + u + p + 1}{2} \left( \sum_{s=1}^{u}\psi (\frac{\nu^{(1)}_q + (1 - s)}{2}) + u \log(2) + \log \left| \boldsymbol{\Psi}^{(1) -1}_q \right|\right) \\
    & +\frac{n + \nu^{(0)} + r - u + 1}{2}  \left( \sum_{s=1}^{r-u}\psi (\frac{\nu^{(0)}_q + (1 - s)}{2}) + (r-u) \log(2) + \log \left| \boldsymbol{\Psi}^{(0) -1}_q \right|\right) \\ 
    & -\frac{\nu^{(1)}_q}{2} \tr \left[ \left( \tr[\left( \mathbb{X}_c^\top \mathbb{X}_c+ \boldsymbol{M}\right) \boldsymbol{V}^{(\boldsymbol{\eta})}_q] \boldsymbol{U}^{(\boldsymbol{\eta})}_q  + \boldsymbol{\eta}_q \left( \mathbb{X}_c^\top \mathbb{X}_c+ \boldsymbol{M}\right) \boldsymbol{\eta}_q^\top \right) \mathbf{\Psi}_q^{(1)-1} \right] \\
    & + \tf(\hbA) - \frac{(r-u)u}{2} + \frac{r}{2} \log (2 \pi e) + \frac{1}{2} \log |\bSigma_q| + \frac{up}{2} \log (2 \pi e) + \frac{p}{2} \log |\boldsymbol{U}^{(\eta)}_q| + \frac{u}{2} \log |\boldsymbol{V}^{(\eta)}_q| \\ 
    & + \frac{\nu^{(1)}_q}{2} \log |\boldsymbol{\Psi}^{(1) -1}_q| + \frac{\nu^{(1)}_q u}{2} \log(2e) + \log \Gamma_u (\frac{\nu^{(1)}_q}{2}) \\
    & - \frac{(\nu^{(1)}_q+u+1)}{2} \left( \psi_u (\frac{\nu^{(1)}_q}{2}) + u \log(2) + \log |\boldsymbol{\Psi}^{(1) -1}_q| \right) \\ 
    & + \frac{\nu^{(0)}_q}{2} \log |\boldsymbol{\Psi}^{(0) -1}_q| + \frac{\nu^{(0)}_q (r-u)}{2} \log(2e) + \log \Gamma_{r-u} (\frac{\nu^{(0)}_q}{2}) \\
    & - \frac{(\nu^{(0)}_q+(r-u)+1)}{2} \left( \psi_{r-u} (\frac{\nu^{(0)}_q}{2}) + (r-u) \log(2) + \log |\boldsymbol{\Psi}^{(0) -1}_q|  \right) \\
    & + \frac{(r-u)u}{2} \log (2 \pi e) + \frac{1}{2} \log | H(\vecr(\hbA)) |, \label{eq:ELBO} \tag{S13}
\end{align*}
where $\psi_d$ denotes the standard multivariate digamma function and $\Gamma_d$ denotes the standard multivariate gamma function. 

\section{CALVI algorithm for the Bayesian predictor envelope model}
\label{sec:C}

This section extends CALVI to the Bayesian predictor envelope model.
Let $\bX\in\mathbb{R}^p$ denote the predictor and $\bY\in\mathbb{R}^r$ the response, and let
$m<p$ be the envelope dimension. Under the Euclidean envelope parameterization
$\{\bGamma(\bA),\bGamma_0(\bA)\}$, the envelope matrix $\bA\in\mathbb{R}^{(p-m)\times m}$ enters the
likelihood through inverse square-root factors, so that direct Laplace updates for $\bA$
inherit the same Kronecker--sum derivative bottlenecks discussed in Section~\ref{sec4.1}.
We therefore adopt a reparameterization analogous to Section~\ref{sec4.2} to remove the dominant
inverse square-root terms from the $\bA$-coordinate objective, yielding a tractable and numerically
stable Laplace update for $\bA$.

\begin{align*}
    \boldsymbol{Y} &= \bmu_{\bY} + \boldsymbol{\eta}^\top \bGamma^\top(\bA) (\boldsymbol{X} - \bmu_{\boldsymbol{X}}) + \bepsilon, \quad \bepsilon \sim \mathcal{N}_r \left(\boldsymbol{0}, \bSigma_{\boldsymbol{Y} \mid \boldsymbol{X}}\right), \\
    \boldsymbol{X} &\sim \mathcal{N}_p\left(\bmu_{\boldsymbol{X}}, \bGamma(\bA) \boldsymbol{\Omega}^{(1)}_{\boldsymbol{X}} \bGamma^\top(\bA) + \bGamma_0(\bA) \boldsymbol{\Omega}^{(0)}_{\boldsymbol{X}} \bGamma_0^\top(\bA)\right) \label{eq:penv} \tag{S14},
\end{align*}
where $\bmu_{\bY} \in \bbR^{r}, \bmu_{\bX} \in \bbR^{p}, \boldsymbol{\eta} \in \bbR^{m \times r}, \bSigma_{\bY \mid \bX} \in \calC_r, \bOmega^{(1)}_{\bX} \in \calC_{m},$ and $\bOmega^{(0)}_{\bX} \in \calC_{p-m}$. Then 
\begin{align*}
    \bGamma(\bA) &:= \boldsymbol{C}_{\bA}(\boldsymbol{C}_{\bA}^\top \boldsymbol{C}_{\bA})^{-1/2}, \qquad \boldsymbol{C}_{\bA} \equiv \begin{pmatrix} \boldsymbol{I}_m \\ \bA \end{pmatrix} \in \bbR^{p \times m}, \\
    \bGamma_0(\bA) &:= \boldsymbol{D}_{\bA}(\boldsymbol{D}_{\bA}^\top \boldsymbol{D}_{\bA})^{-1/2}, \qquad \boldsymbol{D}_{\bA} \equiv \begin{pmatrix} -\bA^\top \\ \boldsymbol{I}_{p-m} \end{pmatrix} \in \bbR^{p \times (p-m)}, 
\end{align*}
and $\bbeta = \boldsymbol{\eta}^\top \bGamma^\top(\bA)$.

We consider the following prior specifications as follows:
\begin{enumerate}
    \item[(i)] $\pi(\bmu_{\boldsymbol{X}}) \propto 1$ and $\pi(\bmu_{\boldsymbol{Y}}) \propto 1$ are an improper flat prior on $\bbR^{p}$ and $\bbR^{r}$, respectively. 
    \item[(ii)] $\bSigma_{\boldsymbol{Y} \mid \boldsymbol{X}} \sim \mathcal{IW}_{r}(\boldsymbol{\Psi}_{\boldsymbol{Y}}, \nu_{\boldsymbol{Y}})$, where $\boldsymbol{\Psi}_{\boldsymbol{Y}} \in \mathcal{C}_{r}$ and $\nu_{\boldsymbol{Y}} > r-1$. 
    \item[(iii)] $\boldsymbol{\Omega}_{\boldsymbol{X}}^{(1)} \sim \mathcal{IW}_m(\boldsymbol{\Psi}_{\boldsymbol{X}}^{(1)}, \nu_{\boldsymbol{X}}^{(1)})$, where $\boldsymbol{\Psi}_{\boldsymbol{X}}^{(1)} \in \mathcal{C}_{m}$ and $\nu_{\boldsymbol{X}}^{(1)} > m-1$. 
    \item[(iv)] $\boldsymbol{\Omega}_{\boldsymbol{X}}^{(0)} \sim \mathcal{IW}_{p-m}(\boldsymbol{\Psi}_{\boldsymbol{X}}^{(0)}, \nu_{\boldsymbol{X}}^{(0)})$, where $\boldsymbol{\Psi}_{\boldsymbol{X}}^{(0)} \in \mathcal{C}_{p-m}$ and $\nu_{\boldsymbol{X}}^{(0)} > p-m-1$. 
    \item[(v)]   $\boldsymbol{\eta} \mid \bA, \boldsymbol{\Omega}_{\boldsymbol{X}}^{(1)} \sim \mathcal{MN}_{m,r}((\boldsymbol{C}^\top_{\bA} \boldsymbol{C}_{\bA})^{1/2}\boldsymbol{C}^\top_{\bA}\boldsymbol{B}_0, \psi^{(\boldsymbol{\eta})}_0 (\boldsymbol{C}^\top_{\bA} \boldsymbol{C}_{\bA}) \boldsymbol{\Omega}_{\boldsymbol{X}}^{(1)} (\boldsymbol{C}^\top_{\bA} \boldsymbol{C}_{\bA}), \bSigma_{\bY \mid \bX})$, where $\psi_0^{(\boldsymbol{\eta})} \in \bbR$, and $\boldsymbol{B}_0 \in \bbR^{p \times r}$. 
    \item[(vi)] $\bA \sim \mathcal{MN}_{(p-m),m}(\bA_0, \boldsymbol{U}^{(A)}_0, \boldsymbol{V}^{(A)}_0)$, where $\bA_0 \in \bbR^{(p-m) \times m}$,  $\boldsymbol{U}^{(A)}_0 \in \mathcal{C}_{p-m}$, and  $\boldsymbol{V}^{(A)}_0 \in \mathcal{C}_{m}$. 
\end{enumerate}
The prior specifications follow \citet{chakraborty2024comprehensive}, except that we use a slightly different prior for $\boldsymbol{\eta}$ due to the reparameterization.

Let $(\boldsymbol{X}_1, \boldsymbol{Y}_1), \dots, (\boldsymbol{X}_n, \boldsymbol{Y}_n)$ be the $n$ independent observations from the predictor envelope model. Define $\bbY = (\boldsymbol{Y}_1, \dots, \boldsymbol{Y}_n)^\top \in \bbR^{n \times r}$ and $\bbX = (\boldsymbol{X}_1, \dots, \boldsymbol{X}_n)^\top \in \bbR^{n \times p}$. Then unnormalized posterior distribution is:
\begin{align*}
    & \log p(\bmu_Y, \bmu_X, \bSigma_{Y|X}, \boldsymbol{\eta}, \boldsymbol{\Omega}_{\boldsymbol{X}}^{(1)}, \boldsymbol{\Omega}_{\boldsymbol{X}}^{(0)}, \bA \mid \mathbb{X}, \mathbb{Y}) = \\
    &\quad \textit{const.}-\frac{\nu_{\boldsymbol{Y}} + n + m + r + 1}{2} \log |\bSigma_{\boldsymbol{Y|X}}| 
    - \frac{\nu^{(1)}_{\boldsymbol{X}} + n + m + 1}{2} \log |\boldsymbol{\Omega}_{\boldsymbol{X}}^{(1)}| - \frac{\nu^{(0)}_{\boldsymbol{X}} + n + (p - m) + 1}{2} \log |\boldsymbol{\Omega}_{\boldsymbol{X}}^{(0)}| \\
    &\quad - \frac{r}{2} \log | \psi_0^{(\boldsymbol{\eta})} (\boldsymbol{C}^\top_{\bA} \boldsymbol{C}_{\bA})\boldsymbol{\Omega}_{\boldsymbol{X}}^{(1)}(\boldsymbol{C}^\top_{\bA} \boldsymbol{C}_{\bA})| \\
    &\quad - \frac{1}{2} \tr \left[ (\mathbb{Y}_{\bmu} - \mathbb{X}_{\bmu} \bGamma(\bA) \boldsymbol{\eta}) \bSigma_{Y|X}^{-1} (\mathbb{Y}_{\bmu} - \mathbb{X}_{\bmu}\bGamma(\bA) \boldsymbol{\eta})^T \right] - \frac{1}{2} \tr \left[ \bGamma^\top(\bA)\mathbb{X}_{\bmu}^\top \mathbb{X}_{\bmu} \bGamma(\bA)\boldsymbol{\Omega}_{\boldsymbol{X}}^{(1) -1} \right] \label{eq:penvGamma} \tag{S15} \\
    &\quad - \frac{1}{2} \tr \left[ \bGamma_0^\top(\bA)\mathbb{X}_{\bmu}^\top \mathbb{X}_{\bmu} \bGamma_0(\bA) \boldsymbol{\Omega}_{\boldsymbol{X}}^{(0) -1} \right] \label{eq:penvGamma0} \tag{S16} \\
    &\quad - \frac{1}{2} \tr \left[ \bSigma_{\bY|\bX}^{-1} (\boldsymbol{\eta} - (\boldsymbol{C}^\top_{\bA}\boldsymbol{C}_{\bA})^{1/2}\boldsymbol{C}^\top_{\bA} \boldsymbol{B}_0)^\top \psi_0^{(\boldsymbol{\eta}) -1} (\boldsymbol{C}^\top_{\bA}\boldsymbol{C}_{\bA})^{-1} \boldsymbol{\Omega}_{\boldsymbol{X}}^{(1) -1} (\boldsymbol{C}^\top_{\bA}\boldsymbol{C}_{\bA})^{-1} (\boldsymbol{\eta} - (\boldsymbol{C}^\top_{\bA}\boldsymbol{C}_{\bA})^{1/2}\boldsymbol{C}^\top_{\bA}\boldsymbol{B}_0) \right] \\
    &\quad - \frac{1}{2} \tr \left[ \bSigma_{Y|X}^{-1} \boldsymbol{\Psi}_Y \right]
    - \frac{1}{2} \tr \left[ \boldsymbol{\Omega}_{\boldsymbol{X}}^{(1) -1} \boldsymbol{\Psi}^{(1)}_{\boldsymbol{X}} \right] 
    - \frac{1}{2} \tr \left[ \boldsymbol{\Omega}_{\boldsymbol{X}}^{(0) -1} \boldsymbol{\Psi}^{(0)}_{\boldsymbol{X}} \right] - \frac{1}{2} \tr\left[\boldsymbol{V}_0^{(A) -1} (\bA - \bA_0)^\top \boldsymbol{U}_0^{(A) -1} (\bA - \bA_0) \right], \label{eq:penvpost} \tag{S17}
\end{align*}
where $\mathbb{Y}_{\bmu} = (\mathbb{Y} - \boldsymbol{1}_n \bmu_{\boldsymbol{Y}}^\top)$ and $\mathbb{X}_{\bmu} = (\mathbb{X} - \boldsymbol{1}_n \bmu_{\boldsymbol{X}}^\top)$.

As mentioned in Section~\ref{sec4.1}, the derivatives of Equations~\eqref{eq:penvGamma} and~\eqref{eq:penvGamma0} with respect to $\bA$ in the predictor envelope model incur substantial computational cost. Now consider the reparameterization to the posterior distribution. Especially, reparameterization for $\boldsymbol{\eta}$ is different compared to Bayesian response envelope model.
\begin{align*}
    \boldsymbol{\tilde{\eta}} &= (\boldsymbol{C}^\top_{\bA} \boldsymbol{C}_{\bA})^{-1/2} \boldsymbol{\eta}, \\
    \btOmega^{(1)}_{\boldsymbol{X}} &= (\boldsymbol{C}^\top_{\bA} \boldsymbol{C}_{\bA})^{1/2} \boldsymbol{\Omega}^{(1)}_{\boldsymbol{X}} (\boldsymbol{C}^\top_{\bA} \boldsymbol{C}_{\bA})^{1/2} , \\ 
    \btOmega^{(0)}_{\boldsymbol{X}} &=(\boldsymbol{D}^\top_{\bA} \boldsymbol{D}_{\bA})^{1/2} \boldsymbol{\Omega}^{(0)}_{\boldsymbol{X}} 
    (\boldsymbol{D}^\top_{\bA} \boldsymbol{D}_{\bA})^{1/2}.  \label{eq:penvrep} \tag{S18}
\end{align*}
Consider $\boldsymbol{\Psi}^{(1)}_{\boldsymbol{X}} = \psi^{(1)}_{\boldsymbol{X}} \boldsymbol{I}_m$ and $\boldsymbol{\Psi}^{(0)}_{\boldsymbol{X}} = \psi^{(0)}_{\boldsymbol{X}} \boldsymbol{I}_{p-m}$ for some positive constants $\psi^{(1)}_{\boldsymbol{X}}$ and $\psi^{(0)}_{\boldsymbol{X}}$, then the reparameterized posterior distribution of the predictor envelope model is as follows.
\begin{align*}
    & \log p(\bmu_Y, \bmu_X, \bSigma_{Y|X}, \boldsymbol{\tilde \eta}, \btOmega_{\boldsymbol{X}}^{(1)}, \btOmega_{\boldsymbol{X}}^{(0)}, \bA \mid \mathbb{X}, \mathbb{Y}) = \\
    &\quad \text{const.} - \frac{\nu_{\boldsymbol{Y}} + n + m + r + 1}{2} \log |\bSigma_{Y|X}| 
    - \frac{\nu^{(1)}_{\boldsymbol{X}} + n + m + r + 1}{2} \log |\btOmega_{\boldsymbol{X}}^{(1)}| - \frac{\nu^{(0)}_{\boldsymbol{X}} + n + (p - m) + 1}{2} \log |\btOmega_{\boldsymbol{X}}^{(0)}| \\
    &\quad + \frac{2n + \nu^{(1)}_{\boldsymbol{X}} + \nu^{(0)}_{\boldsymbol{X}}}{2} \log |\boldsymbol{I}_{m} + \bA^\top \bA| \\
    &\quad - \frac{1}{2} \tr \left[ \boldsymbol{C}^\top_{\bA}(\mathbb{X}_{\bmu}^\top \mathbb{X}_{\bmu} + \psi_0^{(\boldsymbol{\eta}) -1} \bB_0 \bSigma_{\bY|\bX}^{-1} \bB_0^\top + \psi_{\boldsymbol{X}}^{(1)} \boldsymbol{I}_p) \boldsymbol{C}_{\bA} \btOmega_{\boldsymbol{X}}^{(1) -1} \right] - \frac{1}{2} \tr \left[ \boldsymbol{C}^\top_{\bA} \mathbb{X}_{\bmu}^\top \mathbb{X}_{\bmu} \boldsymbol{C}_{\bA} \boldsymbol{\tilde \eta} \bSigma_{\boldsymbol{Y}|\boldsymbol{X}}^{-1} \boldsymbol{\tilde \eta}^\top \right] \\
    &\quad - \frac{1}{2} \tr \left[ -2 (\psi_0^{(\boldsymbol{\eta}) -1} \btOmega_{\boldsymbol{X}}^{(1) -1} \boldsymbol{\tilde \eta} \bSigma_{\bY|\bX}^{-1} \boldsymbol{B}_0^\top + \boldsymbol{\tilde \eta} \bSigma_{\boldsymbol{Y}|\boldsymbol{X}}^{-1}\mathbb{Y}_{\bmu}^\top \mathbb{X}_{\bmu}) \boldsymbol{C}_{\bA} \right] \\
    &\quad - \frac{1}{2} \tr \left[ \boldsymbol{D}^\top_{\bA}(\mathbb{X}_{\bmu}^\top \mathbb{X}_{\bmu} + \psi_{\boldsymbol{X}}^{(0)} \boldsymbol{I}_p) \boldsymbol{D}_{\bA} \btOmega_{\boldsymbol{X}}^{(0) -1} \right] \\
    &\quad - \frac{1}{2} \tr \left[ (\mathbb{Y}_{\bmu}^\top \mathbb{Y}_{\bmu} + \boldsymbol{\Psi}_Y)\bSigma_{Y|X}^{-1} \right] - \frac{1}{2} \tr \left[ \psi_0^{(\boldsymbol{\eta}) -1} \boldsymbol{\tilde{\eta}} \bSigma_{\bY|\bX}^{-1} \boldsymbol{\tilde{\eta}}^\top \btOmega_{\bX}^{(1) -1} \right] - \frac{1}{2} \tr\left[\boldsymbol{V}_0^{(A) -1} (\bA - \bA_0)^\top \boldsymbol{U}_0^{(A) -1} (\bA - \bA_0) \right], \label{eq:penvrepost} \tag{S19}
\end{align*}
Let MFVF to the reparameterized parameters and choose families that match the full conditional forms except $q(\bA)$: 
\begin{align*}
    q(\bmu_{\bX}, \bmu_{\bY}, \boldsymbol{\tilde{\eta}}, \bSigma_{\bY \mid \bX}, \btOmega_{\bX}^{(1)}, \btOmega_{\bX}^{(0)}, \bA) &= q(\bmu_{\bX}) q(\bmu_{\bY}) q(\boldsymbol{\tilde{\eta}}) q(\bSigma_{\bY \mid \bX}) q(\btOmega_{\bX}^{(1)}) q(\btOmega_{\bX}^{(0)}) q(\bA) \\
    q(\bmu_{\bX}) &\sim \mathcal{N}_{p}(\bmu_{\bX_q}, \bSigma_{\bX_q}) \\
    q(\bmu_{\bY}) &\sim \mathcal{N}_{r}(\bmu_{\bY_q}, \bSigma_{\bY_q}) \\
    q(\vecr (\boldsymbol{\tilde \eta})) &\sim \mathcal{N}_{mr}(\vecr (\boldsymbol{\eta}_{q}), \bSigma_q^{(\boldsymbol{\eta})}) \\
    q(\bSigma_{\bY \mid \bX}) &\sim \mathcal{IW}_{r}(\bPsi_{\bY_q}, \nu_{\bY_q}) \\
    q(\btOmega^{(1)}_{\bX}) &\sim \mathcal{IW}_{m}(\bPsi^{(1)}_{\bX_q}, \nu^{(1)}_{\bX_q}) \\
    q(\btOmega^{(0)}_{\bX}) &\sim \mathcal{IW}_{p-m}(\bPsi^{(0)}_{\bX_q}, \nu^{(0)}_{\bX_q}) \label{eq:penvMFVF} \tag{S20} 
\end{align*}
Let $\boldsymbol{C}_{\bA} = \bK + \bL\bA \in \bbR^{p \times m}$ and $\boldsymbol{D}_{\bA} = \bL - \bK\bA^\top \in \bbR^{p \times (p-m)}$ with $\bK = \begin{pmatrix}
         \boldsymbol{I}_m  \\
         \boldsymbol{0}
    \end{pmatrix} \in \bbR^{p \times m}$ and $\bL = \begin{pmatrix}
          \boldsymbol{0} \\
          \boldsymbol{I}_{p-m}
    \end{pmatrix} \in \bbR^{p \times (p-m)}$. From the reparameterized posterior distribution in~\eqref{eq:penvrepost}, we can define $\tf_{\bX}(\bA)$ as follows.
\begin{align*}
    \tf_{\bX}(\bA) \propto&  \frac{(2n+\nu_{\bX}^{(1)}+\nu_{\bX}^{(0)})}{2} \log \left| \boldsymbol{I}_{p-m} + \bA \bA^\top \right| \\
    &-\frac{1}{2} \tr \left[ \left( \boldsymbol{\eta}_q \nu_{\bY_q} \bPsi_{\bY_q}^{-1} \boldsymbol{\eta}_q^\top + \Ktr \left[ \nu_{\bY_q} \bPsi_{\bY_q}^{-1} \odot \bSigma_q^{(\boldsymbol{\eta})} \right] \right) \bA^\top \bL^\top \boldsymbol{G}_{pX} \bL \bA \right] \\
    &- \frac{1}{2} \tr \left[ \nu_{\bX_q}^{(1)} \bPsi_{\bX_q}^{(1) -1} \bA^\top \bL^\top \left( \boldsymbol{G}_{pX} + \psi^{(\boldsymbol{\eta}) -1}_0 \bB_0 \nu_{\bY_q} \bPsi_{\bY_q}^{-1} \bB_0^\top + \psi_{\bX}^{(1)} \bI_p \right) \bL \bA \right] \\
    &- \frac{1}{2} \tr \left[ \bK^\top \left( \boldsymbol{G}_{pX} + \psi_{\bX}^{(0)} \bI_p \right) \bK \bA^\top \nu_{\bX_q}^{(0)} \bPsi_{\bX_q}^{(0) -1} \bA \right] \\
    &+ \tr \left[ \left( \boldsymbol{\eta}_q \nu_{\bY_q} \bPsi_{\bY_q}^{-1} \boldsymbol{\eta}_q^\top + \Ktr \left[ \nu_{\bY_q} \bPsi_{\bY_q}^{-1} \odot \tilde{\bSigma}_q^{(\boldsymbol{\eta})} \right] \right) \bK^\top \boldsymbol{G}_{pX} \bL \bA \right] \\
    &+ \tr \left[ \nu_{\bX_q}^{(1)} \bPsi_{\bX_q}^{(1) -1} \bK^\top \left( \boldsymbol{G}_{pX} + \psi^{(\boldsymbol{\eta}) -1}_0 \bB_0 \nu_{\bY_q} \bPsi_{\bY_q}^{-1} \bB_0^\top + \psi_{\bX}^{(1)} \bI_p \right) \bL \bA \right] \\
    &- \tr \left[ \bK^\top \left( \boldsymbol{G}_{pX} + \psi_{\bX}^{(0)} \bI_p \right) \bL \nu_{\bX_q}^{(0)} \bPsi_{\bX_q}^{(0) -1} \bA \right] \\
    &+ \tr \left[ \left( \boldsymbol{\eta}_q \nu_{\bY_q} \bPsi_{\bY_q}^{-1} \bbY_{\bmu_q}^\top \bbX_{\bmu_q} + \psi^{(\boldsymbol{\eta}) -1}_0 \nu_{\bX_q}^{(1)} \bPsi_{\bX_q}^{(1) -1} \boldsymbol{\eta}_q \nu_{\bY_q} \bPsi_{\bY_q}^{-1} \bB_0^\top \right) \bL \bA \right] \\
    &- \frac{1}{2} \tr \left[ \bV^{(\bA) -1}_0 (\bA - \bA_0)^\top \bU^{(\bA) -1}_0 (\bA - \bA_0) \right], \label{eq:tfpenv} \tag{S21}
\end{align*}
where $\bbX_{\bmu_q} = (\bbX - \boldsymbol{1}_n \bmu_{\bX_q}^\top)$, $\bbY_{\bmu_q} = (\bbY - \boldsymbol{1}_n \bmu_{\bY_q}^\top)$, $\boldsymbol{G}_{pX} = \bbX_{\bmu_q}^\top \bbX_{\bmu_q} + n \bSigma_{\bX_q}$, and $\Ktr \left[\cdot \odot \cdot \right]$ denote the operator defined in Section~\ref{sec:B.2}.

The process of deriving the CALVI update for the Bayesian predictor envelope model is similar to that of the Bayesian response envelope model. Therefore, only the results are listed here. 

\subsubsection*{Update for $q(\vecr(\bA))$.}
\begin{align*}
        q(\vecr(\bA)) &\propto \calN_{(p-m)m}\left( \vecr(\hbA), - \tf_{\bX}^{\prime \prime}(\vecr(\hbA))^{-1} \right), \label{eq:qA} \tag{S22}\\
        \vecr(\hbA) &= \argmax_{\bA} \tf_{\bX}(\bA),\\
        H(\vecr(\hbA)) &= -\big[\nabla^2_{\vecr(\bA)}\tf_{\bX}(\vecr(\hbA))\big]^{-1}, 
\end{align*}
where $\nabla^2_{\vecr(\bA)}\tf_{\bX}(\vecr(\hbA))$ denotes the Hessian of $\tf_{\bX}(\bA)$ with respect to $\vecr(\bA)$, evaluated at $\vecr(\hbA)$:
\begin{align*}
    \nabla^2_{\vecr(\bA)}\tf_{\bX}(\vecr(\hbA)) &= (2n + \nu_{\bX}^{(1)} + \nu_{\bX}^{(0)}) \Bigg[ (\bI_m - \hbA^\top \bJ_0(\hbA)^{-1} \hbA) \otimes \bJ_0(\hbA)^{-1} \\
    & \qquad \qquad \qquad \qquad \qquad - \Big( \big( \bJ_0(\hbA)^{-1} \hbA \big)^\top \otimes \big( \bJ_0(\hbA)^{-1} \hbA \big) \Big) \mathcal{K}_{(p-m),m} \Bigg] \\
    & \quad - \left( \boldsymbol{\eta}_q \nu_{\bY_q} \bPsi_{\bY_q}^{-1} \boldsymbol{\eta}_q^\top + \Ktr \left[ \nu_{\bY_q} \bPsi_{\bY_q}^{-1} \odot \tilde{\bSigma}_q^{(\boldsymbol{\eta})} \right] \right) \otimes \bL^\top \boldsymbol{G}_{pX} \bL \\
    & \quad - \bK^\top \left( \boldsymbol{G}_{pX} + \psi_{\bX}^{(0)} \bI_p \right) \bK \otimes \nu_{\bX_q}^{(0)} \bPsi_{\bX_q}^{(0) -1} \\
    & \quad - \bV_0^{(\bA) -1} \otimes \bU_0^{(\bA) -1},
\end{align*}
with $\bJ_0(\hbA) := \bI_{p-m} + \hbA \hbA^\top$.

\subsubsection*{Update for $q(\bSigma_{\bY \mid \bX})$.}
\begin{align*}
    \log q(\bSigma_{\bY \mid \bX}) & \propto  
    \text{const.} - \frac{\nu_{\bY_q} + r + 1}{2} \log |\bSigma_{\bY \mid \bX}| 
    - \frac{1}{2} \tr \left[ \bPsi_{\bY_q} \bSigma_{\bY \mid \bX}^{-1} \right] \\
    & \propto \mathcal{IW}_{r}(\bPsi_{\bY_q}, \nu_{\bY_q}), \label{eq:qSigma} \tag{S23}
\end{align*}
where 
\begin{align*}
    & \nu_{\bY_q} = n + m + \nu_{\bY},\\
    & \bPsi_{\bY_q} = \boldsymbol{G}_{pY} - 2 \bbY_{\bmu_q}^\top \bbX_{\bmu_q} \bC_{\hbA}\boldsymbol{\eta}_q + \boldsymbol{\eta}_q^\top S_3 \boldsymbol{\eta}_q + \Ktr \left[ S_3 \odot \bSigma_q^{(\boldsymbol{\eta})} \right] + \bPsi_{\bY} \\
    & \qquad \quad + \frac{\nu_{\bX_q}^{(1)}}{\psi_0^{(\boldsymbol{\eta})}} \Bigg( \left(\bC_{\hbA}^\top \bB_0 - \boldsymbol{\eta}_q \right)^\top \bPsi_{\bX_q}^{(1) -1} \left(\bC_{\hbA}^\top \bB_0 - \boldsymbol{\eta}_q \right)  \\
    & \qquad \qquad \qquad \qquad + \bB_0^\top \Ktr \left[\bPsi_{\bX_q}^{(1) -1} \odot \tilde{H}(\vecr(\hbA)) \right] \bB_0 + \Ktr \left[\bPsi_{\bX_q}^{(1) -1} \odot \bSigma^{(\boldsymbol{\eta})}_{q} \right] \Bigg), \\
    & \boldsymbol{G}_{pY} = \bbY_{\bmu_q}^\top \bbY_{\bmu_q} + n \bSigma_{\bY_q}, \\
    & \tilde{H}(\vecr(\hbA)) = \mathcal{K}_{(p-m),m} H(\vecr(\hbA)) \mathcal{K}_{m,(p-m)} ,\\ 
    & S_3 = \bC_{\hbA}^\top \boldsymbol{G}_{pX} \bC_{\hbA} + \Ktr \left[ \bL^\top \boldsymbol{G}_{pX} \bL \odot H(\vecr(\hbA)) \right]
\end{align*}

\subsubsection*{Update for $q(\btOmega^{(1)}_{\bX})$.}
\begin{align*}
    \log q(\btOmega^{(1)}_{\bX}) & \propto  
    \text{const.} - \frac{\nu^{(1)}_{\bX_q} + m + 1}{2} \log |\btOmega^{(1)}_{\bX}| 
    - \frac{1}{2} \tr \left[ \bPsi^{(1)}_{\bX_q} \btOmega^{(1) -1}_{\bX} \right] \\
    & \propto \mathcal{IW}_{m}(\bPsi^{(1)}_{\bX_q}, \nu^{(1)}_{\bX_q}), \label{eq:qOmega} \tag{S24}
\end{align*}
where 
\begin{align*}
    & \nu^{(1)}_{\bX_q} = n + \nu^{(1)}_{\bX} + r,\\
    & \bPsi^{(1)}_{\bX_q} = \bC_{\hbA}^\top S_4 \bC_{\hbA} + \Ktr \left[ \bL^\top S_4 \bL \odot H(\vecr(\hbA)) \right] + \boldsymbol{\eta}_q \psi^{(\boldsymbol{\eta}) -1}_0 \nu_{\bY_q} \bPsi_{\bY_q}^{-1} \boldsymbol{\eta}_q^\top + \Ktr \left[ \psi^{(\boldsymbol{\eta}) -1}_0 \nu_{\bY_q} \bPsi_{\bY_q}^{-1} \odot \tilde{\bSigma}^{(\boldsymbol{\eta})}_q \right], \\
    & \tilde{\bSigma}_q^{(\boldsymbol{\eta})} = \mathcal{K}_{m,r} \bSigma_q^{(\boldsymbol{\eta})} \mathcal{K}_{r,m} ,\\ 
    & S_4 = \boldsymbol{G}_{pX} + \psi_0^{(\boldsymbol{\eta}) -1} \bB_0 \nu_{\bY_q} \bPsi_{\bY_q}^{-1} \bB_0^\top + \psi^{(1)}_{\bX} \bI_p.
\end{align*}

\subsubsection*{Update for $q(\btOmega^{(0)}_{\bX})$.}
\begin{align*}
    \log q(\btOmega^{(0)}_{\bX}) & \propto  
    \text{const.} - \frac{\nu^{(0)}_{\bX_q} + p - m + 1}{2} \log |\btOmega^{(0)}_{\bX}| 
    - \frac{1}{2} \tr \left[ \bPsi^{(0)}_{\bX_q} \btOmega^{(0) -1}_{\bX} \right] \\
    & \propto \mathcal{IW}_{p-m}(\bPsi^{(0)}_{\bX_q}, \nu^{(0)}_{\bX_q}), \label{eq:qOmega0} \tag{S25}
\end{align*}
where 
\begin{align*}
    & \nu^{(0)}_{\bX_q} = n + \nu^{(0)}_{\bX},\\
    & \bPsi^{(0)}_{\bX_q} = \bD_{\hbA}^\top S_5 \bD_{\hbA} + \Ktr \left[ \bK^\top S_5 \bK \odot \tilde{H}(\vecr(\hbA)) \right], \\
    & S_5 = \boldsymbol{G}_{pX} + \psi^{(0)}_{\bX} \bI_p.
\end{align*}

\subsubsection*{Update for $q(\vecr\boldsymbol{(\tilde{\eta}}))$.} 
\begin{align*}
    \log q(\vecr (\boldsymbol{\tilde{\eta}})) &\propto \text{const.} - \frac{1}{2} \tr \left[ (\vecr (\boldsymbol{\tilde{\eta}}) - \vecr (\boldsymbol{\eta}_q))^\top \bSigma_{q}^{(\boldsymbol{\eta})-1} (\vecr (\boldsymbol{\tilde{\eta}}) - \vecr (\boldsymbol{\eta}_q)) \right], \\
    &\propto \mathcal{N}_{mr}(\vecr (\boldsymbol{\eta}_q), \bSigma_{q}^{(\boldsymbol{\eta})}), \label{eq:qeta} \tag{S26}
\end{align*}
where
\begin{align*}
    \vecr (\boldsymbol{\eta}_q) &= \bSigma_{q}^{(\boldsymbol{\eta})} \vecr \left( (\nu_{\bY_q} \bPsi_{\bY_q}^{-1} \bbY_{\bmu_q}^\top \bbX_{\bmu_q} \bC_{\hbA} + \nu_{\bY_q} \bPsi_{\bY_q}^{-1} \bB_0^\top \bC_{\hbA}  \psi^{(\boldsymbol{\eta}) -1}_0 \nu_{\bX_q}^{(1)} \bPsi_{\bX_q}^{(1) -1})^\top \right), \\ 
    \bSigma_{q}^{(\boldsymbol{\eta})} &= \left( \nu_{\bY_q} \bPsi_{\bY_q}^{-1} \otimes ( S_3 + \psi^{(\boldsymbol{\eta}) -1}_0 \nu_{\bX_q}^{(1)} \bPsi_{\bX_q}^{(1) -1}) \right)^{-1}
\end{align*}

\subsubsection*{Update for $q(\bmu_{\bX})$.}
\begin{align*}
    \log q(\bmu_{\bX}) &\propto \text{const.} - \frac{1}{2} \tr \left[ (\bmu_{\bX} - \bmu_{\bX_q})^\top \bSigma_{\bX_q}^{-1} (\bmu_{\bX} - \bmu_{\bX_q}) \right], \\
    &\propto \mathcal{N}_p(\bmu_{\bX_q}, \bSigma_{\bX_q}), \label{eq:qmux} \tag{S27}
\end{align*}
where
\begin{align*}
    \bmu_{\bX_q} &= \bar{\bX}, \\ 
    \bSigma_{\bX_q} &= \left( \bC_{\hbA} S_6 \bC_{\hbA}^\top + \bL \Ktr \left[ S_6 \odot \tilde{H}(\vecr(\hbA)) \right] \bL^\top + \bD_{\hbA} S_7 \bD_{\hbA}^\top + \bK \Ktr \left[ S_7 \odot H(\vecr(\hbA)) \right] \bK^\top \right)^{-1}, \\
    S_6 &= n \nu_{\bX_q}^{(1)} \bPsi_{\bX_q}^{(1) -1} + \boldsymbol{\eta}_q \nu_{\bY_q} \bPsi_{\bY_q}^{-1} \boldsymbol{\eta}_q^\top + \Ktr \left[ \nu_{\bY_q} \bPsi_{\bY_q}^{-1} \odot \tilde{\bSigma}^{(\boldsymbol{\eta})}_{q} \right], \\
    S_7 &= n \nu_{\bX_q}^{(0)} \bPsi_{\bX_q}^{(0) -1}. 
\end{align*}

\subsubsection*{Update for $q(\bmu_{\bY})$.}
\begin{align*}
    \log q^*(\bmu_{\bY}) &\propto \text{const.} - \frac{1}{2} \tr \left[ (\bmu_{\bY} - \bmu_{\bY_q})^\top \bSigma_{\bY_q}^{-1} (\bmu_{\bY} - \bmu_{\bY_q}) \right], \\
    &\propto \mathcal{N}_r(\bmu_{\bY_q}, \bSigma_{\bY_q}), \label{eq:qmuy} \tag{S28}
\end{align*}
where
\begin{align*}
    \bmu_{\bY_q} = \bar{\bY}, \quad 
    \bSigma_{\bY_q} = \frac{\bPsi_{\bY_q}}{n \nu_{\bY_q}} .
\end{align*}
    
\subsubsection*{Approximated ELBO for convergence criterion}
\begin{align*}
    \calL(q) = &\mathbb{E}_{q(\cdot)}\left[\log \pi(\bmu_{\bX}, \bmu_{\bY}, \bSigma_{\bY \mid \bX}, \boldsymbol{\tilde{\eta}}, \btOmega, \btOmega_0, \bA \mid \mathbb{X}, \mathbb{Y}) - \log q(\bmu_{\bX}, \bmu_{\bY}, \bSigma_{\bY \mid \bX}, \boldsymbol{\tilde{\eta}}, \btOmega, \btOmega_0, \bA) \right] \\
    \approx & \boldsymbol{\tilde{C}} + \tf_{\bX}(\hbA) - \frac{(p-m)m}{2} - \mathbb{E}_{q(\cdot)}\left[\log q(\bmu_{\bX}, \bmu_{\bY}, \bSigma_{\bY \mid \bX}, \boldsymbol{\tilde{\eta}}, \btOmega, \btOmega_0, \bA) \right]. \label{eq:penvELBO} \tag{S29}
\end{align*}      

Algorithm~\ref{penvCALVI} summarizes the CALVI procedure for the predictor envelope model. To obtain the optimal variational distribution, at iteration $t$ and given $q^{t}_{-i}$, we update $q^{t+1}_i$ cycling over $i$ and repeating the sweep until convergence.

\begin{algorithm}
\caption{Coordinate ascent Laplace variational inference for the predictor envelope model}
\label{penvCALVI}
\begin{algorithmic}[1]
    \State Initialize variational factors
    $q^{(1)}(\bmu_{\bX})$, $q^{(1)}(\bmu_{\bY})$,
    $q^{(1)}(\boldsymbol{\tilde{\eta}})$,
    $q^{(1)}(\bSigma_{\bY\mid\bX})$,
    $q^{(1)}(\btOmega^{(1)}_{\bX})$,
    $q^{(1)}(\btOmega^{(0)}_{\bX})$,
    and $q^{(1)}(\vecr(\bA))$
    
    \State Set tolerance $\epsilon$, maximum iterations $T$
    \State $t \gets 1$
    
    \While{$t < T$}
        \State Compute $\tilde{\mathcal L}(q)^{(t)}$ using~\eqref{eq:penvELBO}
        
        \State Update $q^{(t+1)}(\vecr(\bA))$ using the Laplace step~\eqref{eq:qA}
        
        \State Update $q^{(t+1)}(\bSigma_{\bY\mid\bX})$ using closed-form CAVI updates~\eqref{eq:qSigma}
        
        \State Update $q^{(t+1)}(\btOmega^{(1)}_{\bX})$ using closed-form CAVI updates~\eqref{eq:qOmega}
        
        \State Update $q^{(t+1)}(\btOmega^{(0)}_{\bX})$ using closed-form CAVI updates~\eqref{eq:qOmega0}
        
        \State Update $q^{(t+1)}(\boldsymbol{\tilde{\eta}})$ using closed-form CAVI updates~\eqref{eq:qeta}
        
        \State Update $q^{(t+1)}(\bmu_{\bX})$ using closed-form CAVI updates~\eqref{eq:qmux}
        
        \State Update $q^{(t+1)}(\bmu_{\bY})$ using closed-form CAVI updates~\eqref{eq:qmuy}
        
        \State Compute $\tilde{\mathcal L}(q)^{(t+1)}$ using~\eqref{eq:penvELBO}
        
        \If{$\big|\tilde{\mathcal L}(q)^{(t+1)} - \tilde{\mathcal L}(q)^{(t)}\big|
             < \epsilon\,\big|\tilde{\mathcal L}(q)^{(t+1)}\big|$}
            \State \textbf{break}
        \EndIf
        
        \State $t \gets t + 1$
    \EndWhile
    
    \State \Return
    $q^{(t)}(\bmu_{\bX})$,
    $q^{(t)}(\bmu_{\bY})$,
    $q^{(t)}(\boldsymbol{\tilde{\eta}})$,
    $q^{(t)}(\bSigma_{\bY\mid\bX})$,
    $q^{(t)}(\btOmega^{(1)}_{\bX})$,
    $q^{(t)}(\btOmega^{(0)}_{\bX})$,
    $q^{(t)}(\vecr(\bA))$,
    and $\{\tilde{\mathcal L}(q)^{(1)},\dots,\tilde{\mathcal L}(q)^{(t)}\}$
\end{algorithmic}
\end{algorithm}

\subsection{Simulation studies of predictor envelope model}

We conducted simulation studies on the Bayesian predictor envelope model. 
The performance of CALVI is assessed in terms of estimation accuracy of $\bbeta$. 
We compare CALVI with the Metropolis--Hastings within Gibbs method with likelihood-driven proposal \citep{chakraborty2024comprehensive} (hereafter, MH--Gibbs).

Throughout, the superscript ${*}$ denotes the true values of the parameters. 
We generated 100 replicated data sets according to the predictor envelope model in~\eqref{eq:penv}, with $n=100, 200, 500, 1000$, $r=3$, $p=15$, and true envelope dimension $m^{*}\in\{2,5\}$. 
The elements of $\bmu_{\bX}^{*}$ and $\bmu_{\bY}^{*}$ were independently sampled from Uniform$(-10,10)$, while the elements of $\boldsymbol{\eta}^{*}$ and $\bA^{*}$ were independently drawn from Uniform$(5,10)$ and Uniform$(0,5)$, respectively. 
The matrices $\bOmega_{\bX}^{(1) *}$, $\bOmega_{\bX}^{(0) *}$, and $\bSigma_{\bY \mid \bX}^{*}$ were simulated independently from $5\mathcal{IW}_m(m+2,5\bI_m)$, $\mathcal{IW}_{p-m}(p-m+2,0.1\bI_{p-m})$, and $\mathcal{IW}_r(r+1,5\mathbf{I}_r)$.

We adopted vague priors for all parameters, setting zero means and covariance matrices equal to $10^6 \mathbf{I}$ for multivariate and matrix normal priors, and $\mathcal{IW}_p(p,10^{-6}\mathbf{I}_p)$ for inverse-Wishart priors.

Table~\ref{S-table1} reports the BIC-based posterior probabilities of the envelope dimension $m$ under CALVI and MH--Gibbs. When $m^{*}=2$, CALVI concentrates more rapidly on the true dimension at small sample sizes (0.901 at $n=100$), whereas MH--Gibbs places comparatively more mass on neighboring dimensions, particularly at $m=3$. As $n$ increases, both methods increasingly concentrate on the true dimension, with CALVI exhibiting slightly faster concentration at moderate sample sizes.

When $m^{*}=5$, both methods assign the majority of posterior mass to $m=5$ across all sample sizes. MH--Gibbs tends to concentrate more sharply at smaller $n$, while CALVI also achieves near-degenerate concentration as $n$ grows.

Table~\ref{S-table2} summarizes the MSE of $\hat{\bbeta}$ obtained via BMA. For $m^{*}=2$, both methods exhibit decreasing MSE as $n$ increases. MH--Gibbs attains smaller MSE at small and moderate sample sizes, although the gap narrows as $n$ becomes large. For $m^{*}=5$, MH--Gibbs generally achieves lower MSE across sample sizes, while CALVI displays stable and monotone improvement as $n$ increases.

Table~\ref{S-table3} compares the total computational time. 
Across all configurations, CALVI is substantially faster than MH--Gibbs, requiring only a small fraction of the runtime (typically below 4\% and decreasing further as $n$ increases). This computational advantage becomes particularly pronounced at larger sample sizes.

\begin{table*}[!t]
    \centering
    \caption{Mean posterior probabilities for the envelope dimension $m$ by method, with $r=3$, $p=15$, and $m^{*}\in\{2,5\}$. Results are averaged over 100 replications.}
    \label{S-table1}
    \setlength{\tabcolsep}{4pt} 
        \begin{tabular}{@{} l | cccc | cccc @{}}
            \toprule
            & \multicolumn{4}{c|}{CALVI}
            & \multicolumn{4}{c}{MH--Gibbs} \\ 
            \midrule
            $n$ & $m \leq 1$ & $m = 2$ & $m = 3$ & $4 \leq m$ 
            & $m \leq 1$ & $m = 2$ & $m = 3$ & $4 \leq m$ \\
            \midrule
            100  & 0.000 & 0.901 & 0.095 & 0.004
            & 0.000 & 0.714 & 0.267 & 0.019 \\
            200  & 0.000 & 0.975 & 0.024 & 0.000
            & 0.000 & 0.920 & 0.076 & 0.004 \\
            500  & 0.000 & 0.979 & 0.021 & 0.000
            & 0.000 & 0.889 & 0.111 & 0.000 \\
            1000 & 0.000 & 0.997 & 0.003 & 0.000
            & 0.000 & 0.975 & 0.025 & 0.000 \\
            \midrule
            \multicolumn{9}{c}{(a) The true $m$ is 2.} \\
            \midrule
            $n$ & $m \leq 4$ & $m = 5$ & $m = 6$ & $7 \leq m$ 
            & $m \leq 4$ & $m = 5$ & $m = 6$ & $7 \leq m$ \\
            \midrule
            100  & 0.000 & 0.860 & 0.123 & 0.017
            & 0.000 & 0.919 & 0.080 & 0.001 \\
            200  & 0.000 & 0.914 & 0.083 & 0.004
            & 0.000 & 0.980 & 0.019 & 0.001 \\
            500  & 0.000 & 0.960 & 0.039 & 0.001
            & 0.000 & 0.937 & 0.06 & 0.000 \\
            1000 & 0.000 & 0.971 & 0.029 & 0.000
            & 0.000 & 0.992 & 0.008 & 0.000 \\
            \midrule
            \multicolumn{9}{c}{(b) The true $m$ is 5.} \\
            \bottomrule
        \end{tabular}
\end{table*}

\begin{table*}[!t]
    \centering
    \caption{MSE for $\bbeta$ (sum over coordinates) by method—BMA via CALVI and MH--Gibbs—with $r=3$, $p=15$, and $m^{*}\in\{2,5\}$. Entries are mean (sd) over 100 replications.}
    \label{S-table2}
    \setlength{\tabcolsep}{6pt}
    \begin{tabular}{@{}lcccc@{}}
        \toprule
        Method & $n = 100$ & $n = 200$ & $n = 500$ & $n = 1000$ \\
        \midrule
        CALVI & 2.578 (6.708) & 0.327 (1.157) & 0.106 (0.429) & 0.005 (0.012) \\
        MH--Gibbs  & 1.003 (2.920) & 0.102 (0.506) & 0.030 (0.085) & 0.005 (0.015) \\
        \midrule
        \multicolumn{5}{c}{(a) The true $m$ is 2.} \\
        \midrule
        CALVI & 4.447 (18.556) & 1.200 (5.649) & 0.164 (0.924) & 0.026 (0.046) \\
        MH--Gibbs & 0.344 (0.858) & 0.078 (0.076) & 0.385 (0.027) & 0.028 (0.125) \\
        \midrule
        \multicolumn{5}{c}{(b) The true $m$ is 5.} \\
        \bottomrule
    \end{tabular}
\end{table*}

\begin{table}[!t]
\centering
\caption{Mean total CPU time (seconds) across $m=1,\ldots,15$ over 100 replications ($r=3$, $p=15$; $m^* \in \{2,5\}$).}
\label{S-table3}
\setlength{\tabcolsep}{6pt}
\begin{tabular}{@{}lccc@{}}
\toprule
 & CALVI & MH--Gibbs & ratio (\%) \\
\midrule
$n=100$  & 91.7 & 2705.1 & 3.39 \\
$n=200$  & 100.3 & 2836.0 & 3.54 \\
$n=500$  & 23.5 & 3208.2 & 0.73 \\
$n=1000$ & 8.1  & 3752.1 & 0.22 \\
\midrule
\multicolumn{4}{c}{(a) True $u=2$} \\
\midrule
$n=100$  & 54.5 & 2695.9 & 2.02 \\
$n=200$  & 37.8 & 2829.3 & 1.34 \\
$n=500$  & 8.0  & 3221.7 & 0.25 \\
$n=1000$ & 7.5  & 3760.9 & 0.20 \\
\midrule
\multicolumn{4}{c}{(b) True $u=5$}\\
\bottomrule
\end{tabular}
\end{table}

\newpage
\section{Proof of Theorem~\ref{main-theorem}}
\label{sec:D}

This section provides the proof of Theorem~\ref{main-theorem} in the main text. Rather than proving the asymptotic statement directly, we establish a stronger, nonasymptotic bound on the total variation (TV) distance between the Laplace-based coordinate update and the exact CAVI update, from which the asymptotic convergence in Theorem~\ref{main-theorem} follows as a direct corollary.

For clarity and completeness, we first restate the result with all constants
and auxiliary quantities made explicit.
\begin{theorem}
\label{theorem1}
Suppose that Assumptions~\ref{assum1}--\ref{assum5} and the associated notation hold. Then
\[
d_{\mathrm{TV}}\!\left( \overline{q_n^{L}},\, \overline{q_n^{VI}} \right) 
\;\le\; C_1 n^{-1/2} \;+\; 2 e^{\hat{\calT}_n}
\;+\; C_2\, n^{d_1/2} e^{-n \hat{\kappa}},
\]
where
\[
\begin{aligned}
C_1 &= 
\frac{\sqrt{3}\,\tr\!\big[H_f(\hat{\btheta}_{1_n})^{-1}\big]\;\hat{M}_1}{
4\sqrt{\big(\lambda_{\min}(H_f(\hat{\btheta}_{1_n})) - \hat{\delta}\hat{M}_1\big)\,\big(1 - e^{\hat{\calT}_n}\big)}} , 
\qquad
C_2 = \frac{2\,\big|\det\!\big(J(\bar{\btheta}_{1_n})^{-1}\big)\big|^{-1/2}\,\bar{M}_2}{
(2\pi)^{d_1/2}\,\big(1 - e^{\bar{\calT}_J}\big)},\\[3pt]
H_f(\hat{\btheta}_{1_n}) &:= -\frac{\tf_n^{\dagger\,\prime\prime}(\hat{\btheta}_{1_n})}{n}, 
\qquad
\hat{\calT}_n := -\frac{1}{2}\Big(\hat{\delta}\sqrt{n} - \sqrt{\tr\big[H_f(\hat{\btheta}_{1_n})^{-1}\big]}\Big)^{\!2}\,\big\|H_f(\hat{\btheta}_{1_n})^{-1}\big\|_{\mathrm{op}}^{-1},\\[3pt]
H_{\ell}(\bar{\btheta}_{1_n}) &:= -\frac{\ell_n^{\prime\prime}(\bar{\btheta}_{1_n})}{n},
\qquad
J(\bar{\btheta}_{1_n}) := H_{\ell}(\bar{\btheta}_{1_n}) + \frac{\bar{M}_1 \bar{\delta}}{3}\,\bI_{d_1},\\[3pt]
\bar{\calT}_J &:= -\frac{1}{2}\Big(\bar{\delta}\sqrt{n} - \sqrt{\tr\big[J(\bar{\btheta}_{1_n})^{-1}\big]}\Big)^{\!2}\,\big\|J(\bar{\btheta}_{1_n})^{-1}\big\|_{\mathrm{op}}^{-1}.
\end{aligned}
\]
\end{theorem}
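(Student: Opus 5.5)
The plan is to transplant the proof of \citet[Theorem~17]{kasprzak2025laplace} to the present setting. The key observation is that $q_n^{VI}(\cdot\mid q_{2,n}^\dagger)$ is formally a posterior with ``log-likelihood'' $\ell_n^\dagger$ and prior $\pi$. Every structural property that argument uses is already encoded in Assumptions~\ref{assum1}--\ref{assum5}, imposed on $\ell_n^\dagger$ and $\tf_n^\dagger$. Because TV distance is invariant under the bijection $\btheta_1\mapsto\sqrt n(\btheta_1-\hat{\btheta}_{1_n})$, we work directly with the laws of $\bX$. Under $\overline{q_n^L}$ this law is $\calN(0,H_f(\hat{\btheta}_{1_n})^{-1})$. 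Under $\overline{q_n^{VI}}$ its density is proportional to $\exp\{\tf_n^\dagger(\hat{\btheta}_{1_n}+\bX/\sqrt n)-\tf_n^\dagger(\hat{\btheta}_{1_n})\}$.

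We split $\bbR^{d_1}$ into the local region $\{\|\bX\|\le\hat\delta\sqrt n\}$ (that is, $\btheta_1\in\calB(\hat{\btheta}_{1_n},\hat\delta)$) and its complement, and bound $d_{TV}$ by three pieces.
\begin{enumerate}
\item[(i)] The Gaussian mass of the complement. By Assumption~\ref{assum4}, $\sqrt{\tr[H_f^{-1}]}<\hat\delta\sqrt n$, so a Gaussian norm concentration inequality (Lipschitz concentration of $\|\bX\|$ around its mean, which is at most $\sqrt{\tr H_f^{-1}}$) bounds this mass by $e^{\hat\calT_n}$.
\item[(ii)] The discrepancy between the two normalized densities restricted to the local ball.
\item[(iii)] The $q^{VI}$ mass of the complement.
\end{enumerate}
The factor $2e^{\hat\calT_n}$ comes from (i) together with the renormalization correction, and the factors $(1-e^{\hat\calT_n})$ in $C_1$ record that the Gaussian is renormalized to the ball.

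For (ii) we first note that $\nabla\tf_n^\dagger(\hat{\btheta}_{1_n})=0$. A third-order Taylor expansion with Assumption~\ref{assum3}(c) then gives, inside the ball,
\[
\big|\tf_n^\dagger(\hat{\btheta}_{1_n}+\bX/\sqrt n)-\tf_n^\dagger(\hat{\btheta}_{1_n})+\tfrac12\bX^\top H_f\bX\big|\le \hat M_1\|\bX\|^3/(6\sqrt n).
\]
Instead of bounding the density ratio pointwise, we bound a KL divergence, as Kasprzak et al.\ do, and then apply Pinsker. The log-density difference is a cubic remainder. The lower curvature bound $\lambda_{\min}(H_f)-\hat\delta\hat M_1>0$ ensures that the true local density is log-concave with curvature bounded below. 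We can therefore apply a log-Sobolev (or Poincar\'e) inequality for the restricted $q^{VI}$ to the gradient of the remainder. That gradient is $O(\hat M_1\|\bX\|^2/\sqrt n)$, and its second moment under the Gaussian is controlled through $\tr[H_f^{-1}]$. Taking square roots yields the displayed constant $C_1n^{-1/2}$, including the $\sqrt3/4$ from the Gaussian fourth moment and the factor $1/\sqrt{\lambda_{\min}-\hat\delta\hat M_1}$ from the log-Sobolev constant.

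For (iii) we compare numerator and denominator. Assumption~\ref{assum5} bounds the numerator: outside $\calB(\hat{\btheta}_{1_n},\hat\delta)$ the point lies at distance more than $\hat\delta-\|\bar{\btheta}_{1_n}-\hat{\btheta}_{1_n}\|$ from $\bar{\btheta}_{1_n}$, so $\ell_n^\dagger\le\ell_n^\dagger(\bar{\btheta}_{1_n})-n\hat\kappa$. Integrating $\pi$ (a probability density) over this region leaves $e^{\ell_n^\dagger(\bar{\btheta}_{1_n})-n\hat\kappa}$. For the denominator we need a lower bound on $\int e^{\ell_n^\dagger}\pi$. We restrict to $\calB(\bar{\btheta}_{1_n},\bar\delta)$, use $\pi\ge1/\bar M_2$ from Assumption~\ref{assum2}, and Taylor-expand $\ell_n^\dagger$ around the MVLE with Assumption~\ref{assum1}(c). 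This gives $\ell_n^\dagger\ge\ell_n^\dagger(\bar{\btheta}_{1_n})-\tfrac n2(\btheta-\bar{\btheta})^\top J(\btheta-\bar{\btheta})$. The resulting Gaussian integral over the ball equals $(2\pi)^{d_1/2}n^{-d_1/2}|\det J^{-1}|^{1/2}(1-e^{\bar\calT_J})$, where the same concentration bound as in (i) and the last part of Assumption~\ref{assum4} control the truncation. The ratio gives $C_2n^{d_1/2}e^{-n\hat\kappa}$.

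The main obstacle is step (ii), specifically making the KL/log-Sobolev argument rigorous on the truncated ball with constants matching $C_1$. I would follow Kasprzak et al.'s proof line by line with $\ell_n^\dagger$ in place of the log-likelihood, checking that they use only the listed assumptions. Convergence in probability then follows because every term tends to zero whenever the assumptions hold with $n$-independent constants.
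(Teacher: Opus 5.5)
Your proposal is correct and follows essentially the same route as the paper, which also transplants Kasprzak et al.'s Theorem~17. Both arguments split at radius $\hat\delta\sqrt n$ and bound the local piece by Pinsker plus a Bakry--\'Emery log-Sobolev bound, with the Fisher divergence controlled by the third-derivative Taylor remainder and the Gaussian fourth moment; the Gaussian tail gives $e^{\hat{\calT}_n}$, and the $q^{VI}$ tail is handled by Assumption~\ref{assum5} against a Taylor lower bound around the MVLE using Assumptions~\ref{assum1}--\ref{assum2}. Your Lipschitz-concentration argument reproduces the paper's $\hat{\calT}_n$, and the split $d_{TV}(P,Q)\le P(B^c)+d_{TV}(P_B,Q_B)+Q(B^c)$ even yields the tail terms with constant $1$ instead of $2$, so the stated bound follows a fortiori.
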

Theorem~\ref{theorem1} extends Theorem~17 of \citet{kasprzak2025laplace} to the framework of VI. Proof details from \citet{kasprzak2025laplace}, Appendices D and E, carry over; for completeness we reproduce the argument in our setting. We adopt the notation of Section~\ref{sec5}. Any additional notation and the auxiliary results needed for the proof (Lemma~\ref{std normal lemma}--Proposition~\ref{Bakry-Emery criterion}) are collected below.

\begin{lemma}
    \label{std normal lemma}
    Let $Z \sim \calN_d(0, \Sigma)$, $\|\cdot\|$ denotes the Euclidean norm (for vectors) and $\|\cdot\|_{\mathrm{op}}$ the operator norm (for matrices). Then, for any $\epsilon > \sqrt{\tr[\Sigma]}$,
    \begin{align*}
        \int_{\|Z\| > \epsilon} \frac{| \det (\Sigma)|^{-1/2} e^{-\tfrac{1}{2}z^\top \Sigma^{-1} z}}{(2\pi)^{d/2}} \ dz \leq \exp\left( -\frac{1}{2}\left(\epsilon - \sqrt{\tr[\Sigma]} \right)^2 \|\Sigma\|^{-1}_{op} \right).
    \end{align*}
\end{lemma}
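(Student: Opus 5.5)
The integral on the left-hand side is exactly $\bbP(\|Z\|>\epsilon)$ for $Z\sim\calN_d(0,\Sigma)$, since the integrand is the $\calN_d(0,\Sigma)$ density. The plan is therefore to prove a Gaussian tail bound for the norm $\|Z\|$ by Gaussian concentration of Lipschitz functions. Write $Z=\Sigma^{1/2}W$ with $W\sim\calN_d(0,\bI_d)$, and define $g(w):=\|\Sigma^{1/2}w\|$. By the triangle inequality,
\[
|g(w)-g(w')|\le\|\Sigma^{1/2}(w-w')\|\le\|\Sigma^{1/2}\|_{op}\|w-w'\|,
\]
so $g$ is $L$-Lipschitz with $L=\|\Sigma^{1/2}\|_{op}=\|\Sigma\|_{op}^{1/2}$, and hence $L^2=\|\Sigma\|_{op}$.

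Next I bound the mean. By Jensen's inequality,
\[
\E\|Z\|\le(\E\|Z\|^2)^{1/2}=\sqrt{\tr[\Sigma]}.
\]
Since $\epsilon>\sqrt{\tr[\Sigma]}$, set $t:=\epsilon-\sqrt{\tr[\Sigma]}>0$. Then $\epsilon-\E\|Z\|\ge t$, so
\[
\bbP(\|Z\|>\epsilon)\le\bbP\big(g(W)-\E g(W)\ge t\big).
\]

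Finally, I apply the one-sided Gaussian concentration inequality (Tsirelson--Ibragimov--Sudakov) for an $L$-Lipschitz function of a standard Gaussian vector:
\[
\bbP\big(g(W)-\E g(W)\ge t\big)\le\exp\!\big(-t^2/(2L^2)\big).
\]
Substituting $t$ and $L^2=\|\Sigma\|_{op}$ gives
\[
\exp\!\Big(-\tfrac12\big(\epsilon-\sqrt{\tr[\Sigma]}\big)^2\|\Sigma\|_{op}^{-1}\Big),
\]
which is exactly the claimed bound.

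The only delicate point is getting the sharp constant $1/2$ in the exponent with no prefactor. If I cannot simply cite the inequality, I will derive it through the Herbst argument. The standard Gaussian satisfies a log-Sobolev inequality with constant $1$, which also follows from the Bakry--\'Emery criterion since $-\log$ of the density has Hessian $\bI_d$. Applied to $e^{\lambda g}$, the log-Sobolev inequality gives $\log\E e^{\lambda(g-\E g)}\le\lambda^2L^2/2$. The Chernoff bound, optimized at $\lambda=t/L^2$, then yields the stated tail. A smoothing step is needed because $g$ is only Lipschitz and not differentiable at $w=0$; this is routine, for instance by replacing $g$ with $\sqrt{\|\Sigma^{1/2}w\|^2+\eta^2}$ and letting $\eta\to0$.
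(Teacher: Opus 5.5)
Your proof is correct, but it takes a different route from the paper.

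The paper does not use Lipschitz concentration. It writes $\epsilon=\sqrt{\tr(\Sigma)}+\sqrt{2\|\Sigma\|_{op}t}$ and squares the event $\{\|Z\|\ge\epsilon\}$. It then applies the Hsu--Kakade--Zhang tail bound for Gaussian quadratic forms, $\bbP\big(\|Z\|^2-\tr(\Sigma)\ge 2\sqrt{\tr(\Sigma^2)t}+2\|\Sigma\|_{op}t\big)\le e^{-t}$. The last step uses $\tr(\Sigma^2)\le\|\Sigma\|_{op}\tr(\Sigma)$, which shows the squared threshold dominates the one in that bound, with a factor of $2$ to spare under the square root. That approach is a single citation of an explicit inequality for $\|Z\|^2$, proved by an elementary moment-generating-function computation. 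Its intermediate form involving $\tr(\Sigma^2)$ is actually stronger than the lemma needs.

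You work with $\|Z\|$ directly instead. Jensen gives $\E\|Z\|\le\sqrt{\tr(\Sigma)}$, and the Tsirelson--Ibragimov--Sudakov inequality for the $\|\Sigma\|_{op}^{1/2}$-Lipschitz map $w\mapsto\|\Sigma^{1/2}w\|$ gives exactly the stated exponent with no prefactor. This buys a bound centered at $\E\|Z\|$, slightly sharper than $\sqrt{\tr(\Sigma)}$. It also gives an argument that transfers verbatim to any Lipschitz functional, or to any reference measure satisfying a log-Sobolev inequality.

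Your Herbst fallback is also sound. The standard Gaussian has LSI constant $1$, which yields $\log\E e^{\lambda(g-\E g)}\le\lambda^2L^2/2$, and Chernoff at $\lambda=t/L^2$ finishes. The smoothing $\sqrt{\|\Sigma^{1/2}w\|^2+\eta^2}$ is still $L$-Lipschitz and within $\eta$ of $g$. This route reuses the Bakry--\'Emery/log-Sobolev machinery the paper already employs for the $F_1$ term. Its only cost is that it is heavier than the single quadratic-form citation the paper relies on.
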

\begin{proof}
    \begin{align*}
        \bbP \Big[ \|Z\| \geq \epsilon \Big] &= \bbP \Big[ \|Z \| \geq \sqrt{\tr(\Sigma)} + \sqrt{2\|\Sigma\|_{op} t} \Big] \\
        &= \bbP \Big[ \|Z \|^2 \geq \tr(\Sigma) + 2\sqrt{2\|\Sigma\|_{op} \tr(\Sigma) t} + 2\|\Sigma\|_{op} t \Big] \\
        &\leq \bbP \Big[ \|Z \|^2 \geq \tr(\Sigma) + 2\sqrt{ \tr(\Sigma^2) t} + 2\|\Sigma\|_{op} t \Big] \\
        &= \bbP \Big[ \|Z \|^2 - \tr(\Sigma) \geq 2\sqrt{ \tr(\Sigma^2) t} + 2\|\Sigma\|_{op} t \Big] \\
        &\leq e^{-t} = e^{-\frac{1}{2} \big(\epsilon - \sqrt{\tr(\Sigma)} \big)^2 \|\Sigma\|^{-1}_{op} } \tag{\cite{10.1214/ECP.v17-2079}, Proposition 1}.
    \end{align*}
\end{proof}

\begin{lemma}
    \label{std normal lemma 2}
    Let $z \sim \calN_d(0, \Sigma)$, $\|\cdot\|$ denotes the Euclidean norm (for vectors) and $\|\cdot\|_{\mathrm{op}}$ the operator norm (for matrices). Then,
    \begin{align*}
        \bbE[ \|z\|^4] \leq 3 \tr \left[ \Sigma \right]^2.
    \end{align*}
\end{lemma}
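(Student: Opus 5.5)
The plan is to reduce the fourth moment of a Gaussian vector to moments of independent scalar chi-square variables via the spectral decomposition of $\Sigma$. Write $\Sigma = U\Lambda U^\top$ with $U$ orthogonal and $\Lambda=\operatorname{diag}(\lambda_1,\dots,\lambda_d)$, $\lambda_i\ge 0$. Since the Euclidean norm is invariant under orthogonal maps, $\|z\| = \|U^\top z\|$, and $U^\top z \sim \calN_d(0,\Lambda)$. Hence we may write $U^\top z \stackrel{d}{=} \Lambda^{1/2} g$ with $g\sim\calN_d(0,\bI_d)$, which gives
\[
\|z\|^2 \stackrel{d}{=} \sum_{i=1}^d \lambda_i g_i^2 ,
\]
where $g_1,\dots,g_d$ are independent standard normal variables.

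Next we square this representation and take expectations term by term. This gives
\[
\bbE\|z\|^4 = \sum_{i,j}\lambda_i\lambda_j\,\bbE[g_i^2 g_j^2].
\]
For $i\neq j$, independence gives $\bbE[g_i^2g_j^2]=1$. For $i=j$, we use $\bbE[g_i^4]=3$. Therefore
\[
\bbE\|z\|^4 = \Big(\sum_i\lambda_i\Big)^2 + 2\sum_i\lambda_i^2 = \tr[\Sigma]^2 + 2\tr[\Sigma^2].
\]

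The final step is the elementary inequality $\tr[\Sigma^2]=\sum_i\lambda_i^2\le(\sum_i\lambda_i)^2=\tr[\Sigma]^2$, which holds because all eigenvalues are nonnegative. Combining this with the identity above yields $\bbE\|z\|^4\le 3\tr[\Sigma]^2$.

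No step poses a real obstacle. The only points needing care are that $\Sigma$ is positive semidefinite, so the square root and the nonnegativity of the eigenvalues are both available, and the standard Gaussian moment $\bbE[g^4]=3$. This lemma will later be used, together with the Cauchy--Schwarz inequality, to control the cubic remainder term in the Taylor expansion. That is where the factor $\sqrt{3}\,\tr[H_f^{-1}]$ in $C_1$ of Theorem~\ref{theorem1} comes from.
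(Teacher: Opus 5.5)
Your proof is correct and follows essentially the same route as the paper. The paper also diagonalizes $\Sigma$, writes $\|z\|^2=\sum_i\lambda_i\upsilon_i^2$ with independent standard normals, computes $\bbE\|z\|^4=(\tr\Sigma)^2+2\tr(\Sigma^2)$ from $\bbE[\upsilon_i^4]=3$ and $\bbE[\upsilon_i^2\upsilon_j^2]=1$, and concludes with $\tr(\Sigma^2)\le(\tr\Sigma)^2$.
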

\begin{proof}
Let $z = U\Lambda^{1/2}\upsilon$ with $\upsilon\sim\calN_d(0,I_d)$ and $\Lambda=\mathrm{diag}(\lambda_1,\dots,\lambda_d)$.
Then
\[
\|z\|^2=\sum_{i=1}^d \lambda_i \upsilon_i^2,
\qquad
\|z\|^4=\Big(\sum_{i=1}^d \lambda_i \upsilon_i^2\Big)^2.
\]
Using $\bbE[\upsilon_i^4]=3$ and $\bbE[\upsilon_i^2\upsilon_j^2]=1$ for $i\neq j$, we obtain
\[
\bbE[\|z\|^4]
= \sum_{i=1}^d \lambda_i^2 \bbE[\upsilon_i^4] + 2\sum_{i<j}\lambda_i\lambda_j \bbE[\upsilon_i^2\upsilon_j^2]
= 3\sum_{i=1}^d \lambda_i^2 + 2\sum_{i<j}\lambda_i\lambda_j
= (\tr\Sigma)^2 + 2\,\tr(\Sigma^2).
\]
Since $\tr(\Sigma^2)\le (\tr\Sigma)^2$, it follows that $\bbE[\|z\|^4]\le 3(\tr\Sigma)^2$.
\end{proof}

\begin{proposition}[Pinsker's inequality, (\cite{kasprzak2025laplace}, Proposition D.3)]
    \label{Pinsker's inequality}
    For any two probability measures $\nu$ and $\omega$ such that $\nu \ll \omega$, we have that 
    \begin{align*}
        d_{TV}(\nu, \omega) \leq \sqrt{\frac{1}{2} \text{KL}(\nu || \omega)}.
    \end{align*}
\end{proposition}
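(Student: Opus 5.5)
The plan is the classical two-step argument: reduce to Bernoulli distributions by a data-processing (coarse-graining) inequality, then verify the inequality for Bernoulli laws by elementary calculus. If $\mathrm{KL}(\nu\|\omega)=\infty$ there is nothing to prove, so I will assume it is finite. Write $f:=d\nu/d\omega$, which exists because $\nu\ll\omega$. Let $A^\ast:=\{f>1\}$. The first step is the standard identity
\[
d_{TV}(\nu,\omega)=\sup_A|\nu(A)-\omega(A)|=\nu(A^\ast)-\omega(A^\ast).
\]
It follows from $\nu(A)-\omega(A)=\int_A (f-1)\,d\omega$, which is maximized by $A=A^\ast$; the infimum, attained at the complement, gives the same value in absolute value because $\int (f-1)\,d\omega=0$. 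Set $p:=\nu(A^\ast)$ and $q:=\omega(A^\ast)$.

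The second step is to show $\mathrm{KL}(\nu\|\omega)\ge \mathrm{kl}(p,q)$, where
\[
\mathrm{kl}(p,q):=p\log\frac{p}{q}+(1-p)\log\frac{1-p}{1-q}
\]
is the KL divergence between the pushforwards of $\nu$ and $\omega$ under $\mathbf 1_{A^\ast}$. Apply Jensen's inequality to the convex function $x\mapsto x\log x$ under the normalized measure $\omega(\cdot\cap B)/\omega(B)$, separately for $B=A^\ast$ and for $B=(A^\ast)^c$. This gives
\[
\int_B f\log f\,d\omega\ \ge\ \nu(B)\log\frac{\nu(B)}{\omega(B)},
\]
which is the log-sum inequality. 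Summing over the two pieces yields the claim. The degenerate cases $q\in\{0,1\}$ are handled by absolute continuity together with the convention $0\log 0=0$.

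The third step is the binary Pinsker inequality $\mathrm{kl}(p,q)\ge 2(p-q)^2$ for $p,q\in[0,1]$. I fix $p$ and set $g(q):=\mathrm{kl}(p,q)-2(p-q)^2$ for $q\in(0,1)$, so that $g(p)=0$. Differentiating gives
\[
g'(q)=-\frac{p}{q}+\frac{1-p}{1-q}+4(p-q)=(q-p)\Big(\frac{1}{q(1-q)}-4\Big).
\]
Since $q(1-q)\le 1/4$, the second factor is nonnegative. Hence $g$ is nonincreasing on $(0,p]$ and nondecreasing on $[p,1)$, so its minimum is $g(p)=0$; the endpoints follow by continuity or because $\mathrm{kl}=\infty$ there. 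Combining the three steps,
\[
\mathrm{KL}(\nu\|\omega)\ \ge\ 2(p-q)^2=2\,d_{TV}(\nu,\omega)^2,
\]
which rearranges to the stated bound.

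The main obstacle is bookkeeping rather than anything conceptual. The data-processing step has to handle sets of measure zero and infinite values carefully, and the sign analysis of $g'$ in the third step is the one place where the factor $2$ (equivalently, the constant $4$) actually enters.
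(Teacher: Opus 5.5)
Your proof is correct. The choice $A^\ast=\{d\nu/d\omega>1\}$ attains the supremum defining $d_{TV}$. The log-sum (Jensen) step gives $\mathrm{KL}(\nu\|\omega)\ge \mathrm{kl}(p,q)$. The derivative $g'(q)=(q-p)\bigl(1/(q(1-q))-4\bigr)$ is computed correctly, and $q(1-q)\le 1/4$ settles its sign. The degenerate cases $q\in\{0,1\}$ are disposed of correctly via $\nu\ll\omega$.

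The paper gives no argument of its own for this proposition. It imports it from Kasprzak et al.\ (Proposition D.3) and uses it as a black box when bounding $F_1$. So your argument is not a competing route but a self-contained replacement for that citation. The citation keeps the supplement short.

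Your version makes one thing explicit that the paper leaves implicit: the constant $\tfrac12$ is correct because $d_{TV}$ is taken as $\sup_A|\nu(A)-\omega(A)|$. That is the convention used in the proof of Theorem~\ref{theorem1}, where $d_{TV}$ is written as a supremum over indicator functions. Under the $L^1$ convention the statement would instead read $\int|d\nu-d\omega|\le\sqrt{2\,\mathrm{KL}(\nu\|\omega)}$.

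Your argument also needs nothing beyond $\nu\ll\omega$. That is exactly what is available where the paper applies the inequality: to the Gaussian and the rescaled variational posterior, both restricted to the ball $B_0(\hat\delta\sqrt n)$.
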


\textit{Fisher divergence} (or relative Fisher information in Section 2.2 of \cite{vempala2019rapid}) is a measure of the difference between two probability distributions in terms of their gradients as follows.
\begin{align*}
    \text{Fisher}(p \mid\mid q) = \int p(x) \left\| \left(\log \frac{p(x)}{q(x)}\right)^{'} \right\|^2 \ dx.
\end{align*}

\begin{definition}[Log-Sobolev inequality (LSI), (\cite{vempala2019rapid}, Section 2.2), (\cite{kasprzak2025laplace}, Definition D.2)]
    \label{LSI definition}
    Let $\omega$ be a probability measure on $\Omega \subset \bbR^{d}$ and let $\kappa > 0$. We say that $\omega$ satisfies the log-Sobolev inequality with constant $\kappa$ (LSI($\kappa$)) if for any probability measure $\nu$ on $\Omega$, 
    \begin{align*}
        \text{KL}(\nu || \omega) \leq \frac{1}{2\kappa} \text{Fisher}(\nu || \omega).
    \end{align*}
\end{definition}

\begin{proposition}[Bakry-Émery log-Sobolev criterion (\cite{kolesnikov2016riemannian}, Theorem 2.1), (\cite{kasprzak2022good}, Proposition D.1)]
    \label{Bakry-Emery criterion}
    Let $\Omega \subset \bbR^{d}$ be convex, let $f: \Omega \rightarrow \bbR$ and consider a probability measure $\omega (dx) = Z_{\omega}^{-1} e^{-f(x)} \mathbb{1}_{\Omega}(x)\ dx$. Assume that $f^{''}(x) \succeq \kappa \bI_{d}$ for some $\kappa > 0$. Then $\omega$ satisfies $LSI(\kappa)$ with $\kappa > 0$.
\end{proposition}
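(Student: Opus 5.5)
Since this is the classical Bakry--\'Emery criterion, the plan is to reduce it to the whole-space case and then run the standard semigroup (\(\Gamma_2\)) argument. Write \(V:=f\) on \(\Omega\) and \(V:=+\infty\) off \(\Omega\). Then \(\omega\propto e^{-V}\), and the hypothesis \(f''\succeq\kappa\bI_d\) on the convex set \(\Omega\) says exactly that \(x\mapsto V(x)-\tfrac{\kappa}{2}\|x\|^2\) is convex as an extended-valued function.

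\textbf{Step 1 (smooth approximation).} Approximate \(V\) by smooth, finite, \(\kappa\)-convex potentials \(V_\varepsilon\) on all of \(\bbR^d\) with \(e^{-V_\varepsilon}\to e^{-V}\) in \(L^1\). I would construct them in two stages.
\begin{itemize}
\item First take \(\tfrac{\kappa}{2}\|x\|^2+g_\varepsilon(x)\), where \(g_\varepsilon\) is the Moreau--Yosida envelope (inf-convolution with \(\|\cdot\|^2/(2\varepsilon)\)) of the convex function \(V-\tfrac{\kappa}{2}\|\cdot\|^2\). This is finite and convex, and \(g_\varepsilon\) increases to \(V-\tfrac{\kappa}{2}\|\cdot\|^2\).
\item Then mollify. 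Mollification preserves convexity of \(g_\varepsilon\) and adds smoothness.
\end{itemize}
Monotone and dominated convergence then give \(\omega_\varepsilon\to\omega\) in total variation. Both sides of \(\mathrm{KL}(\nu\|\omega_\varepsilon)\le\frac{1}{2\kappa}\mathrm{Fisher}(\nu\|\omega_\varepsilon)\) pass to the limit for nice test measures, namely \(\nu=h\,\omega\) with \(h\) smooth, bounded below and compactly supported in \(\mathrm{int}\,\Omega\). A density argument, together with lower semicontinuity of KL, extends the inequality to general \(\nu\); if \(\mathrm{Fisher}(\nu\|\omega)=\infty\) there is nothing to prove.

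\textbf{Step 2 (whole-space case).} For smooth \(V\) with \(\nabla^2V\succeq\kappa\bI_d\), consider the Langevin semigroup \(P_t\) with generator \(L=\Delta-\nabla V\cdot\nabla\), which is reversible with respect to \(\omega\). The Bochner identity gives \(\Gamma_2(g)=\|\nabla^2 g\|_F^2+\nabla g^\top\nabla^2V\,\nabla g\ge\kappa\,\Gamma(g)\), i.e.\ the condition \(CD(\kappa,\infty)\). From this one gets the gradient commutation bound \(|\nabla P_t h|\le e^{-\kappa t}P_t|\nabla h|\). Now differentiate the entropy along the flow: \(\frac{d}{dt}\mathrm{Ent}_\omega(P_th)=-I(P_th)\), where \(I\) is the Fisher information. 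The commutation bound, combined with Cauchy--Schwarz in the form \((P_t|\nabla h|)^2\le P_t h\,P_t(|\nabla h|^2/h)\), yields \(I(P_th)\le e^{-2\kappa t}I(h)\). Integrating over \(t\in[0,\infty)\) and using ergodicity (\(P_th\to1\)) gives \(\mathrm{Ent}_\omega(h)\le\frac{1}{2\kappa}I(h)\). With \(h=d\nu/d\omega\), this is precisely LSI\((\kappa)\) in the form of Definition~\ref{LSI definition}.

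\textbf{Main obstacle.} The technical crux is Step 2's regularity: justifying the differentiation of entropy under the integral and the commutation bound for unbounded \(V\). I would handle it by first restricting to compactly supported smooth \(h\) bounded below, and then truncating. As a fallback that avoids the semigroup analysis entirely, I would use Caffarelli's contraction theorem. The optimal transport map from \(\calN(0,\kappa^{-1}\bI_d)\) to \(\omega\) is \(1\)-Lipschitz whenever \(\omega\) is \(\kappa\)-log-concave, and this covers restrictions to convex sets directly. The Gaussian LSI with constant \(\kappa\) then transfers to \(\omega\) by pushing forward through the \(1\)-Lipschitz map.
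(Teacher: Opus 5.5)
The paper does not prove this proposition. It is imported from Kolesnikov--Milman via Kasprzak et al., where measures restricted to convex sets are covered directly. Your proposal is therefore a different, self-contained route. You remove the boundary by approximating $e^{-f}\mathbb{1}_\Omega$ with smooth whole-space $\kappa$-convex potentials: the Moreau--Yosida envelope of the convex function $V-\tfrac{\kappa}{2}\|x\|^2$, followed by mollification. You then prove the whole-space case by the $\Gamma_2$/semigroup argument, with Caffarelli's contraction theorem plus the Gaussian LSI as an alternative. Step~2 and the Caffarelli route are sound. The constant also matches the paper's normalization, since $\mathrm{Fisher}(\nu\|\omega)=\int|\nabla h|^2/h\,d\omega$ for $h=d\nu/d\omega$.

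\textbf{The step that fails.} The passage to the limit at the end of Step~1 does not work as written. You keep $\nu$ fixed, restrict to densities $h$ compactly supported in $\mathrm{int}\,\Omega$ (such $h$ cannot also be bounded below), and then invoke density plus lower semicontinuity of KL. Suppose $\partial\Omega\neq\emptyset$ and $e^{-f}$ is bounded and bounded away from zero near $\partial\Omega$. Then this class is not dense in the sense you need. If $\sqrt{h}$ has nonzero boundary trace, it lies outside the $H^1(\omega)$-closure of functions vanishing near $\partial\Omega$ ($H^1_0\neq H^1$). So you cannot approximate $\nu$ from your class while keeping $\mathrm{Fisher}(\nu_k\|\omega)$ close to $\mathrm{Fisher}(\nu\|\omega)$; a linear cutoff in a boundary layer of width $\delta_k$ already costs order $\delta_k^{-1}$. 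Nor can such a $\nu$ be held fixed as $\varepsilon\to0$. Its density jumps across $\partial\Omega$, so $\mathrm{Fisher}(\nu\|\omega_\varepsilon)=\infty$ for the full-support $\omega_\varepsilon$.

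This is exactly the case the paper uses. In \eqref{eq:F1_1} the LSI is applied with $\nu$ a Gaussian truncated to the ball $B_0(\hat\delta\sqrt n)$, whose density is bounded away from zero at the boundary.

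\textbf{The repair.} Take the limit in the functional form, with test functions that ignore $\partial\Omega$. For $g\in C_c^\infty(\mathbb{R}^d)$, the whole-space inequality $\mathrm{Ent}_{\omega_\varepsilon}(g^2)\le\frac{2}{\kappa}\int|\nabla g|^2\,d\omega_\varepsilon$ passes to $\varepsilon\to0$. This holds because $g^2$, $g^2\log g^2$ and $|\nabla g|^2$ are bounded and $\omega_\varepsilon\to\omega$ in total variation. Restrictions of such $g$ to $\Omega$ (a Neumann rather than Dirichlet class) are then the right starting point for your density argument. Your Caffarelli route avoids the issue altogether. There $g\circ T$ is Lipschitz on $\mathbb{R}^d$ with $|\nabla(g\circ T)|\le|\nabla g|\circ T$, and no cutoff at $\partial\Omega$ ever appears.
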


The proof proceeds by decomposing the TV distance into a local Laplace
approximation error and a tail contribution, and bounding each term separately using Assumptions~\ref{assum1}--\ref{assum5}.
\begin{proof}[Proof of Theorem~\ref{theorem1}]
    For any measurable set $\mathcal{A} \subset \mathbb{R}^{d_1}$, let $g=\mathbb{1}_{\mathcal{A}}$.
Let
\[
Z \sim \mathcal{N}\big(0,\, H_f(\hat{\btheta}_{1_n})^{-1}\big),
\]
and let $X \sim \overline{\Pi}_n$, where $\overline{\Pi}_n$ denotes the law of
\[
\sqrt{n}\,(\btheta_1-\hat{\btheta}_{1_n})
\quad\text{under}\quad
\btheta_1 \sim \Pi_n^\dagger.
\]
 Define
    \begin{align*}
        D(g)\;:=\;\bigl|\;\mathbb{E}[g(Z)]-\mathbb{E}[g(X)]\;\bigr|.
    \end{align*}  
    Then the TV distance satisfies
    \begin{align*}
        d_{\mathrm{TV}}\!\left(\mathcal{N}\!\big(0, H_f(\hat{\btheta}_{1_n})^{-1}\big), \,\overline{\Pi}_n(X)\right)
        \;=\;\sup_{A\in\mathcal{B}(\mathbb{R}^d)} 
        \bigl|\;\mathbb{E}[g(Z)]-\mathbb{E}[g(X)]\;\bigr|
        \;=\;\sup_{g} D(g).
    \end{align*}    
    where $\mathcal{B}(\mathbb{R}^d)$ denotes the Borel $\sigma$-algebra on $\mathbb{R}^d$.
    
    Now we decompose $D(g)$ into the sum of inner and outer integrals with respect to $\hat{\delta} \sqrt{n}$.
    
\subsection{\texorpdfstring{Decomposition of $D_g$}{Decomposition of D_g}}
\begin{align*}
    D_g :=& \left| \bbE_{Z \sim \calN(0, H_f(\hat{\btheta}_{1_n})^{-1})} \left[ g(Z_n) \right] - \bbE_{\overline{\Pi}_n(\sqrt{n}(\btheta_1 - \hat{\btheta}_{1_n}))} \left[ g\left( \sqrt{n} ({\btheta}_1 - \hat{\btheta}_{1_n} ) \right) \right]  \right| \\
    =&  \left| \int_{\bbR^{d_1}} g(u) \frac{|\text{det}(H_f(\hat{\btheta}_{1_n})^{-1})|^{-1/2} e^{-u^\top H_f(\hat{\btheta}_{1_n})u/2} }{(2\pi)^{d_{1}/2}} \ du - \int_{\bbR^{d_1}} g(u) n^{-d_{1}/2}\overline{\Pi}_n(n^{-1/2}u + \hat{\btheta}_{1_n}) \ du \right| \\
    =&  \left| \int_{\bbR^{d_1}} g(u) \frac{|\text{det}(H_f(\hat{\btheta}_{1_n})^{-1})|^{-1/2} e^{-u^\top H_f(\hat{\btheta}_{1_n})u/2} }{(2\pi)^{d_{1}/2}} \ du \right. \\
    &\left. - n^{-d_{1}/2} \int_{\|u\| > \hat{\delta} \sqrt{n}} g(u) \overline{\Pi}_n(n^{-1/2}u + \hat{\btheta}_{1_n}) \ du - n^{-d_{1}/2}\int_{\|u\| \leq \hat{\delta} \sqrt{n}} g(u) \overline{\Pi}_n(n^{-1/2}u + \hat{\btheta}_{1_n}) \ du \right| \\
    =&  \left| \int_{\bbR^{d_1}} g(u) \frac{|\text{det}(H_f(\hat{\btheta}_{1_n})^{-1})|^{-1/2} e^{-u^\top H_f(\hat{\btheta}_{1_n})u/2} }{(2\pi)^{d_{1}/2}} \ du \right. \\
    &\left. - n^{-d_{1}/2} \int_{\|u\| > \hat{\delta} \sqrt{n}} g(u) \overline{\Pi}_n(n^{-1/2}u + \hat{\btheta}_{1_n}) \ du -\frac{n^{-d_{1}/2} \int_{\|u\| \leq \hat{\delta}\sqrt{n}} g(u) \overline{\Pi}_n(n^{-1/2}u + \hat{\btheta}_{1_n}) \ du}{n^{-d_{1}/2} \int_{\|u\| \leq \hat{\delta}\sqrt{n}} \overline{\Pi}_n(n^{-1/2}u + \hat{\btheta}_{1_n}) \ du}  \right. \\
    &\left. + \frac{n^{-d_{1}/2} \int_{\|u\| \leq \hat{\delta}\sqrt{n}} g(u) \overline{\Pi}_n(n^{-1/2}t + \hat{\btheta}_{1_n}) \ du}{n^{-d_{1}/2} \int_{\|u\| \leq \hat{\delta}\sqrt{n}} \overline{\Pi}_n(n^{-1/2}u + \hat{\btheta}_{1_n}) \ du} - n^{-d_{1}/2} \int_{\|u\| \leq \hat{\delta} \sqrt{n}} g(u) \overline{\Pi}_n(n^{-1/2}u + \hat{\btheta}_{1_n}) \ du \right| \\
    =&  \left| \int_{\bbR^{d_1}} \left( g(u) -\frac{n^{-d_{1}/2} \int_{\|u\| \leq \hat{\delta}\sqrt{n}} g(u) \overline{\Pi}_n(n^{-1/2}u + \hat{\btheta}_{1_n}) \ du}{n^{-d_{1}/2} \int_{\|u\| \leq \hat{\delta}\sqrt{n}} \overline{\Pi}_n(n^{-1/2}u + \hat{\btheta}_{1_n}) \ du}  \right) \frac{|\text{det}(H_f(\hat{\btheta}_{1_n})^{-1})|^{-1/2} e^{-u^\top H_f(\hat{\btheta}_{1_n})u/2} }{(2\pi)^{d_{1}/2}} \ du \right. \\
    &\left. - n^{-d_{1}/2} \int_{\|u\| > \hat{\delta} \sqrt{n}} g(u) \overline{\Pi}_n(n^{-1/2}u + \hat{\btheta}_{1_n}) \ du \right. \\
    &\left. + \frac{n^{-d_{1}/2} \int_{\|u\| \leq \hat{\delta}\sqrt{n}} g(u) \overline{\Pi}_n(n^{-1/2}t + \hat{\btheta}_{1_n}) \ du}{n^{-d_{1}/2} \int_{\|u\| \leq \hat{\delta}\sqrt{n}} \overline{\Pi}_n(n^{-1/2}u + \hat{\btheta}_{1_n}) \ du} \left(  1 - n^{-d_{1}/2} \int_{\|u\| \leq \hat{\delta}\sqrt{n}} \overline{\Pi}_n(n^{-1/2}u + \hat{\btheta}_{1_n}) \ du \right) \right| \\
    =&  \left| \int_{\bbR^{d_1}} \left( g(u) -\frac{n^{-d_{1}/2} \int_{\|u\| \leq \hat{\delta}\sqrt{n}} g(u) \overline{\Pi}_n(n^{-1/2}u + \hat{\btheta}_{1_n}) \ du}{n^{-d_{1}/2} \int_{\|u\| \leq \hat{\delta}\sqrt{n}} \overline{\Pi}_n(n^{-1/2}u + \hat{\btheta}_{1_n}) \ du} \right) \frac{|\text{det}(H_f(\hat{\btheta}_{1_n})^{-1})|^{-1/2} e^{-u^\top H_f(\hat{\btheta}_{1_n})u/2} }{(2\pi)^{d_{1}/2}} \ du \right. \\
    &\left. - n^{-d_{1}/2} \int_{\|u\| > \hat{\delta} \sqrt{n}} \left( g(u) -\frac{n^{-d_{1}/2} \int_{\|u\| \leq \hat{\delta}\sqrt{n}} g(u) \overline{\Pi}_n(n^{-1/2}u + \hat{\btheta}_{1_n}) \ du}{n^{-d_{1}/2} \int_{\|u\| \leq \hat{\delta}\sqrt{n}} \overline{\Pi}_n(n^{-1/2}u + \hat{\btheta}_{1_n}) \ du} \right) \overline{\Pi}_n(n^{-1/2}u + \hat{\btheta}_{1_n}) \ du \right|. \label{eq:Dg} \tag{S30}
\end{align*}

For a more readable expression, we define a set of quantities as follows:
\begin{align*}
    C_{\calP} &:= n^{-d_1/2} \int_{\|u\| \leq \hat{\delta}\sqrt{n}} \overline{\Pi}_n (n^{-1/2}u + \hat{\btheta}_{1_n})\ du, \\
    C_{\calN} &:= \int_{\|u\| \leq \hat{\delta} \sqrt{n}} \frac{|\text{det}(H_f(\hat{\btheta}_{1_n})^{-1})|^{-1/2} e^{-u^\top H_f(\hat{\btheta}_{1_n})u/2} }{(2\pi)^{d_{1}/2}} \ du, \\
    h_g(u) &:= g(u) - \frac{n^{-d_1/2}}{C_{\calP}} \int_{\|u\| \leq \hat{\delta}\sqrt{n}} g(u) \overline{\Pi}_n (n^{-1/2}u + \hat{\btheta}_{1_n})\ du.
\end{align*}
Then we clarify~\eqref{eq:Dg} and continue the decomposition.
\begin{align*}
    D_g :=& \left| \int_{\bbR^{d_1}} h_g(u) \frac{|\text{det}(H_f(\hat{\btheta}_{1_n})^{-1})|^{-1/2} e^{-u^\top H_f(\hat{\btheta}_{1_n})u/2} }{(2\pi)^{d_{1}/2}} \ du  - n^{-d_{1}/2} \int_{\|u\| > \hat{\delta} \sqrt{n}} h_g(u) \overline{\Pi}_n(n^{-1/2}u + \hat{\btheta}_{1_n}) \ du \right| \\
    =& \left| \int_{\|u\| \leq \hat{\delta} \sqrt{n}} h_g(u) \frac{|\text{det}(H_f(\hat{\btheta}_{1_n})^{-1})|^{-1/2} e^{-u^\top H_f(\hat{\btheta}_{1_n})u/2} }{(2\pi)^{d_{1}/2}} \ du \right. \\
    &\left. + \int_{\|u\| > \hat{\delta} \sqrt{n}} h_g(u) \frac{|\text{det}(H_f(\hat{\btheta}_{1_n})^{-1})|^{-1/2} e^{-u^\top H_f(\hat{\btheta}_{1_n})u/2} }{(2\pi)^{d_{1}/2}} \ du - n^{-d_{1}/2} \int_{\|u\| > \hat{\delta} \sqrt{n}} h_g(u) \overline{\Pi}_n(n^{-1/2}u + \hat{\btheta}_{1_n}) \ du \right| \\
    \leq & \frac{1}{C_{\calN}} \left| \int_{\|u\| \leq \hat{\delta} \sqrt{n}} h_g(u) \frac{|\text{det}(H_f(\hat{\btheta}_{1_n})^{-1})|^{-1/2} e^{-u^\top H_f(\hat{\btheta}_{1_n})u/2} }{(2\pi)^{d_{1}/2}} \ du \right| \\
    & + \left| \int_{\|u\| > \hat{\delta} \sqrt{n}} h_g(u) \left[ \frac{|\text{det}(H_f(\hat{\btheta}_{1_n})^{-1})|^{-1/2} e^{-u^\top H_f(\hat{\btheta}_{1_n})u/2} }{(2\pi)^{d_{1}/2}} - n^{-d_{1}/2} \overline{\Pi}_n(n^{-1/2}u + \hat{\btheta}_{1_n}) \right] \ du  \right| \\
    = & \left| \int_{\|u\| \leq \hat{\delta} \sqrt{n}} g(u) \frac{|\text{det}(H_f(\hat{\btheta}_{1_n})^{-1})|^{-1/2} e^{-u^\top H_f(\hat{\btheta}_{1_n})u/2} }{(2\pi)^{d_{1}/2}} \ du - \frac{n^{-d_1/2}}{C_{\calP}} \int_{\|u\| \leq \hat{\delta}\sqrt{n}} g(u) \overline{\Pi}_n (n^{-1/2}u + \hat{\btheta}_{1_n})\ du \right| \\
    & + \left| \int_{\|u\| > \hat{\delta} \sqrt{n}} h_g(u) \left[ \frac{|\text{det}(H_f(\hat{\btheta}_{1_n})^{-1})|^{-1/2} e^{-u^\top H_f(\hat{\btheta}_{1_n})u/2} }{(2\pi)^{d_{1}/2}} - n^{-d_{1}/2} \overline{\Pi}_n(n^{-1/2}u + \hat{\btheta}_{1_n}) \right] \ du  \right| \\
    & := F_1 + F_2
\end{align*}

\subsection{\texorpdfstring{Controlling term $F_1$}{Controlling term F_1}}

\begin{align*}
    F_1 :=& \left| \int_{\|u\| \leq \hat{\delta} \sqrt{n}} g(u) \frac{|\text{det}(H_f(\hat{\btheta}_{1_n})^{-1})|^{-1/2} e^{-u^\top H_f(\hat{\btheta}_{1_n})u/2} }{(2\pi)^{d_{1}/2}} \ du - \frac{n^{-d_1/2}}{C_{\calP}} \int_{\|u\| \leq \hat{\delta}\sqrt{n}} g(u) \overline{\Pi}_n (n^{-1/2}u + \hat{\btheta}_{1_n})\ du \right| \\
    & \leq d_{TV} \left(\left[ \calN(0, H_f(\hat{\btheta}_{1_n})^{-1}) \right]_{B_0(\hat{\delta} \sqrt{n})}, \left[ \overline{\Pi}_n (X)\right]_{B_0(\hat{\delta} \sqrt{n})} \right) \tag{$g$ : indicator function} \\
    & \leq \sqrt{\frac{1}{2}\text{KL} \left(\left[ \calN(0, H_f(\hat{\btheta}_{1_n})^{-1}) \right]_{B_0(\hat{\delta} \sqrt{n})}, \left[ \overline{\Pi}_n (X) \right]_{B_0(\hat{\delta} \sqrt{n})} \right)} \tag{Pinsker's inequality} \\
    & \leq \frac{\sqrt{\text{Fisher}\left(\left[ \calN(0, H_f(\hat{\btheta}_{1_n})^{-1}) \right]_{B_0(\hat{\delta} \sqrt{n})}, \left[ \overline{\Pi}_n (X) \right]_{B_0(\hat{\delta} \sqrt{n})} \right)}}{2\sqrt{\lambda_{min}(H_f(\hat{\btheta}_{1_n})) - \hat{\delta} \hat{M}_1}}  
\label{eq:F1_1} \tag{S31} \\
    & = \frac{1}{2\sqrt{\lambda_{min}(H_f(\hat{\btheta}_{1_n})) - \hat{\delta}\hat{M}_1}} \\
    &\quad \cdot \sqrt{ \int_{\|u\| \leq \hat{\delta}\sqrt{n}} \frac{|\text{det}(H_f(\hat{\btheta}_{1_n})^{-1})|^{-1/2} e^{-u^\top H_f(\hat{\btheta}_{1_n})u/2}}{C_{\calN} (2\pi)^{d_{1}/2}} \left\| H_f(\hat{\btheta}_{1_n})u + \frac{f_n^{\dagger '}(n^{-1/2}u + \hat{\btheta}_{1_n})}{\sqrt{n}} \right\|^2 \ du} \\
    & \leq \frac{\hat{M}_1}{4\sqrt{n \left(\lambda_{min}(H_f(\hat{\btheta}_{1_n})) - \hat{\delta}\hat{M}_1\right)} } \sqrt{ \int_{\|u\| \leq \hat{\delta}\sqrt{n}} \frac{|\text{det}(H_f(\hat{\btheta}_{1_n})^{-1})|^{-1/2} e^{-u^\top H_f(\hat{\btheta}_{1_n})u/2}}{C_{\calN} (2\pi)^{d_{1}/2}} \left\| u \right\|^4 \ du} \label{eq:F1_2} \tag{S32} \\
    & \leq \frac{\sqrt{3} \tr \left[H_f(\hat{\btheta}_{1_n})^{-1}\right] \hat{M}_1}{4\sqrt{n \left(\lambda_{min}(H_f(\hat{\btheta}_{1_n})) - \hat{\delta}\hat{M}_1  \right) \left(1- e^{\hat{\calT}_n} \right) }}. \label{eq:F1_3} \tag{S33}
\end{align*}

Let $\xi \in (\hat{\btheta}_{1_n},  n^{-1/2}u + \hat{\btheta}_{1_n})$, then from the Taylor's theorem, Assumption~\ref{assum3}, and $\|u\| \leq \hat{\delta}\sqrt{n}$, we have 
\begin{align*}
    -\frac{f_n^{\dagger ''}(n^{-1/2}u + \hat{\btheta}_{1_n})}{n} &= -\frac{f_n^{\dagger ''}(\hat{\btheta}_{1_n})}{n} - \frac{f_n^{\dagger '''}(\xi) \| n^{-1/2} u \|}{n} \succeq \left( \lambda_{min} \left(H_f(\hat{\btheta}_{1_n})  \right) - \hat{\delta} \hat{M}_1 \right) \bI_{d_{1}}.
\end{align*}
Then it satisfied the assumption of Proposition~\ref{Bakry-Emery criterion}, which implies that $\sqrt{n}(\btheta_1 - \hat{\btheta}_{1_n})$ satisfies $LSI(\kappa)$ with $\kappa = \left( \lambda_{min} \left(H(\hat{\btheta}_{1_n})  \right) - \hat{\delta} \hat{M}_1 \right)$. Therefore, we can show the inequality \eqref{eq:F1_1} satisfied by Definition~\ref{LSI definition} log-Sobolev inequality.

Inequality \eqref{eq:F1_2} follows from the third-order Taylor's theorem and Assumption~\ref{assum3} for $\|u\| \leq \hat{\delta}\sqrt{n}$ and for some $\xi \in (\hat{\btheta}_{1_n},  n^{-1/2}u + \hat{\btheta}_{1_n})$,  
\begin{align*}
    & f_n^{\dagger}(n^{-1/2}u + \hat{\btheta}_{1_n}) = f_n^{\dagger}(\hat{\btheta}_{1_n}) + f_n^{\dagger '}(\hat{\btheta}_{1_n}) n^{-1/2}u + \frac{1}{2} f_n^{\dagger ''}(\hat{\btheta}_{1_n})(\|n^{-1/2}u\|^2) + \frac{1}{6} f_n^{\dagger '''}(\xi)(\|n^{-1/2}u\|^3), \\
    & \frac{f_n^{\dagger '}(n^{-1/2}u + \hat{\btheta}_{1_n})}{\sqrt{n}} = f_n^{\dagger '}(\hat{\btheta}_{1_n}) n^{-1/2} + \frac{f_n^{\dagger ''}(\hat{\btheta}_{1_n})u}{n} + \frac{1}{2n\sqrt{n}} f_n^{\dagger '''}(\xi)(\|u\|^2), \\
    & \left\| - \frac{f_n^{\dagger ''}(\hat{\btheta}_{1_n})u}{n} + \frac{f_n^{\dagger '}(n^{-1/2}u + \hat{\btheta}_{1_n})}{\sqrt{n}} \right\|^2 = \left\| \frac{\|u\|^2}{2n\sqrt{n}} f_n^{\dagger '''}(\xi) \right\|^2 \leq \left\| \frac{\hat{M}_1}{2\sqrt{n}}\right\|^2 \|u\|^4.
\end{align*}
Inequality \eqref{eq:F1_3} follows from the Lemma~\ref{std normal lemma} and Lemma~\ref{std normal lemma 2}. 

\subsection{\texorpdfstring{Controlling term $F_2$}{Controlling term F_2}}
We decompose $F_2$ once again.
\begin{align*}
    F_2 :=& \left| \int_{\|u\| > \hat{\delta} \sqrt{n}} h_g(u) \left[ \frac{|\text{det}(H_f(\hat{\btheta}_{1_n})^{-1})|^{-1/2} e^{-u^\top H_f(\hat{\btheta}_{1_n})u/2} }{(2\pi)^{d_{1}/2}} - n^{-d_{1}/2} \overline{\Pi}_n(n^{-1/2}u + \hat{\btheta}_{1_n}) \right] \ du  \right| \\
    \leq & \left| \int_{\|u\| > \hat{\delta} \sqrt{n}} g(u) \left[ \frac{|\text{det}(H_f(\hat{\btheta}_{1_n})^{-1})|^{-1/2} e^{-u^\top H_f(\hat{\btheta}_{1_n})u/2} }{(2\pi)^{d_{1}/2}} - n^{-d_{1}/2} \overline{\Pi}_n(n^{-1/2}u + \hat{\btheta}_{1_n}) \right] \ du  \right| \\
    & + \frac{n^{-d_1/2}}{C_{\calP}} \left| \int_{\|u\| \leq \hat{\delta}\sqrt{n}} g(u) \overline{\Pi}_n (n^{-1/2}u + \hat{\btheta}_{1_n})\ du \right. \\
    & \qquad \qquad \qquad \left. \cdot \int_{\|u\| > \hat{\delta} \sqrt{n}}  \left[ \frac{|\text{det}(H_f(\hat{\btheta}_{1_n})^{-1})|^{-1/2} e^{-u^\top H_f(\hat{\btheta}_{1_n})u/2} }{(2\pi)^{d_{1}/2}} - n^{-d_{1}/2} \overline{\Pi}_n(n^{-1/2}u + \hat{\btheta}_{1_n}) \right] \ du  \right| \\
    &:= F_{2,1} + F_{2,2}.
\end{align*}

\subsection{\texorpdfstring{Controlling term $F_{2,1}$}{Controlling term F_{2,1}}}
\begin{align*}
    F_{2,1} :=& \left| \int_{\|u\| > \hat{\delta} \sqrt{n}} g(u) \left[ \frac{|\text{det}(H_f(\hat{\btheta}_{1_n})^{-1})|^{-1/2} e^{-u^\top H_f(\hat{\btheta}_{1_n})u/2} }{(2\pi)^{d_{1}/2}} - n^{-d_{1}/2} \overline{\Pi}_n(n^{-1/2}u + \hat{\btheta}_{1_n}) \right] \ du  \right| \\
    \leq & \left| \int_{\|u\| > \hat{\delta} \sqrt{n}} g(u) \frac{|\text{det}(H_f(\hat{\btheta}_{1_n})^{-1})|^{-1/2} e^{-u^\top H_f(\hat{\btheta}_{1_n})u/2} }{(2\pi)^{d_{1}/2}} \ du \right| + \left| \int_{\|u\| > \hat{\delta} \sqrt{n}} g(u) n^{-d_{1}/2} \overline{\Pi}_n(n^{-1/2}u + \hat{\btheta}_{1_n}) \ du  \right| \\
    \leq & \int_{\|u\| > \hat{\delta} \sqrt{n}} | g(u) | \frac{|\text{det}(H_f(\hat{\btheta}_{1_n})^{-1})|^{-1/2} e^{-u^\top H_f(\hat{\btheta}_{1_n})u/2} }{(2\pi)^{d_{1}/2}} \ du + \int_{\|u\| > \hat{\delta} \sqrt{n}} | g(u) | n^{-d_{1}/2} \overline{\Pi}_n(n^{-1/2}u + \hat{\btheta}_{1_n}) \ du.
\end{align*}

The upper bound of the first term of $F_{2,1}$ is derived from $|g| \leq 1$ and Lemma \ref{std normal lemma}.
\begin{align*}
    &\int_{\|u\| > \hat{\delta} \sqrt{n}} | g(u) | \frac{|\text{det}(H_f(\hat{\btheta}_{1_n})^{-1})|^{-1/2} e^{-u^\top H_f(\hat{\btheta}_{1_n})u/2} }{(2\pi)^{d_{1}/2}} \\
    &\leq \int_{\|u\| > \hat{\delta} \sqrt{n}} \frac{|\text{det}(H_f(\hat{\btheta}_{1_n})^{-1})|^{-1/2} e^{-u^\top H_f(\hat{\btheta}_{1_n})u/2} }{(2\pi)^{d_{1}/2}} \tag{$|g| \leq 1$} \\ 
    &\leq \exp \left(-\frac{1}{2} \left( \hat{\delta}\sqrt{n} - \sqrt{\tr[H_f(\hat{\btheta}_{1_n})^{-1}]} \right)^2 \left\| H_f(\hat{\btheta}_{1_n})^{-1}\right\|^{-1}_{op} \right) := \exp(\hat{\calT}_n) \label{eq:F21} \tag{S34}.
\end{align*}

The upper bound of the second term of $F_{2,1}$ is derived using Taylor's theorem, Assumption~\ref{assum1},~\ref{assum2}, and~\ref{assum5}, and $|g| \leq 1$.
\begin{align*}
    &\int_{\|u\| > \hat{\delta} \sqrt{n}} | g(u) | n^{-d_{1}/2} \overline{\Pi}_n(n^{-1/2}u + \hat{\btheta}_{1_n}) \ du \\
    &= \int_{\|t-\hat{\btheta}_{1_n}\| > \hat{\delta} } | g(\sqrt{n}(t - \hat{\btheta}_{1_n}) | \overline{\Pi}_n(t) \ dt \\
    &\leq \frac{\int_{\|t-\bar{\btheta}_{1_n}\| > \hat{\delta} - \|\bar{\btheta}_{1_n} - \hat{\btheta}_{1_n}\|}  | g(\sqrt{n}(t - \hat{\btheta}_{1_n}) | \pi(t) \exp\{\ell_n(t)\} \ dt}{ \int_{\|t-\bar{\btheta}_{1_n}\| > \bar{\delta}} \pi(t) \exp\{\ell_n(t)\} \ dt} \\
    &\leq \frac{\int_{\|t-\bar{\btheta}_{1_n}\| > \hat{\delta} - \|\bar{\btheta}_{1_n} - \hat{\btheta}_{1_n}\|}  | g(\sqrt{n}(t - \hat{\btheta}_{1_n}) | \pi(t) \exp\{\ell_n(t) - \ell_n(\bar{\btheta}_{1_n})\} \ dt}{ \int_{\|t-\bar{\btheta}_{1_n}\| > \bar{\delta}} \pi(t) \exp\{\ell_n(t) - \ell_n(\bar{\btheta}_{1_n})\} \ dt} \\
    &\leq \frac{\exp(-n \hat{\kappa})\int_{\|t-\bar{\btheta}_{1_n}\| > \hat{\delta} - \|\bar{\btheta}_{1_n} - \hat{\btheta}_{1_n}\|}  | g(\sqrt{n}(t - \hat{\btheta}_{1_n}) | \pi(t) \ dt}{ \int_{\|t-\bar{\btheta}_{1_n}\| > \bar{\delta}} \pi(t) \exp\{\ell_n(t) - \ell_n(\bar{\btheta}_{1_n})\} \ dt} \tag{Assumption~\ref{assum5}} \\
    &\leq \frac{\exp(-n \hat{\kappa})}{ \int_{\|t-\bar{\btheta}_{1_n}\| > \bar{\delta}} \pi(t) \exp\{\ell_n(t) - \ell_n(\bar{\btheta}_{1_n})\} \ dt} \tag{$|g| \leq 1$} \\
    &= \frac{n^{d_{1}/2} \exp(-n \hat{\kappa})}{ \int_{\|u\| > \bar{\delta}\sqrt{n}} \pi(n^{-1/2}u + \bar{\btheta}_{1_n}) \exp\{\ell_n(n^{-1/2}u + \bar{\btheta}_{1_n}) - \ell_n(\bar{\btheta}_{1_n})\} \ du} \\
    &\leq \frac{n^{d_{1}/2} \exp(-n \hat{\kappa})}{ \int_{\|u\| > \bar{\delta}\sqrt{n}} \pi(n^{-1/2}u + \bar{\btheta}_{1_n}) \exp\{ -(u^\top J(\bar{\btheta}_{1_n}) u) / 2\} \ du} \tag{Taylor's theorem, Assumption~\ref{assum1}} \\
    &\leq \frac{\bar{M}_2 n^{d_{1}/2} \exp(-n \hat{\kappa})}{\int_{\|u\| > \bar{\delta}\sqrt{n}} \exp\{ -(u^\top J(\bar{\btheta}_{1_n}) u) / 2\} \ du} \tag{Assumption~\ref{assum2}} \\
    &\leq \frac{| \text{det} (J(\bar{\btheta}_{1_n}))^{-1} |^{-1/2} \bar{M}_2 n^{d_{1}/2} \exp(-n \hat{\kappa})}{ (2 \pi)^{d_{1}/2} (1 - \exp(\bar{\calT}_J) )}. \label{eq:F21buond} \tag{S35}
\end{align*}

In summary, the final upper bound for $F_{2,1}$ is expressed as follows:
\begin{align*}
    F_{2,1} \leq \exp(\hat{\calT}_n) + \frac{| \text{det} (J(\bar{\btheta}_{1_n}))^{-1} |^{-1/2} \bar{M}_2 n^{d_{1}/2} \exp(-n \hat{\kappa})}{ (2 \pi)^{d_{1}/2} (1 - \exp(\bar{\calT}_J) )}. \label{eq:F21bound2} \tag{S36}
\end{align*}

\subsection{\texorpdfstring{Controlling term $F_{2,2}$}{Controlling term F_{2,2}}}
Since $g$ is an indicator function, $g$ that maximizes absolute value integral is a constant $1$, so the maximum value of $F_{2,1}$ is when $g \equiv 1$, and that value is $F_{2,2}$. Hence it suffices to bound $F_{2,2}$ by evaluating $F_{2,1}$ at $g \equiv 1$.

\begin{align*}
    F_{2,2} &:= \frac{n^{-d_1/2}}{C_{\calP}} \left| \int_{\|u\| \leq \hat{\delta}\sqrt{n}} g(u) \overline{\Pi}_n (n^{-1/2}u + \hat{\btheta}_{1_n})\ du \right. \\
    & \qquad \qquad \qquad \left. \cdot \int_{\|u\| > \hat{\delta} \sqrt{n}}  \left[ \frac{|\text{det}(H_f(\hat{\btheta}_{1_n})^{-1})|^{-1/2} e^{-u^\top H_f(\hat{\btheta}_{1_n})u/2} }{(2\pi)^{d_{1}/2}} - n^{-d_{1}/2} \overline{\Pi}_n(n^{-1/2}u + \hat{\btheta}_{1_n}) \right] \ du  \right| \tag{$|g| \leq 1$} \\
    &\leq \left| \int_{\|u\| > \hat{\delta} \sqrt{n}}  \left[ \frac{|\text{det}(H_f(\hat{\btheta}_{1_n})^{-1})|^{-1/2} e^{-u^\top H_f(\hat{\btheta}_{1_n})u/2} }{(2\pi)^{d_{1}/2}} - n^{-d_{1}/2} \overline{\Pi}_n(n^{-1/2}u + \hat{\btheta}_{1_n}) \right] \ du  \right|  \\
    &\leq \exp(\hat{\calT}_n) + \frac{| \text{det} (J(\bar{\btheta}_{1_n}))^{-1} |^{-1/2} \bar{M}_2 n^{d_{1}/2} \exp(-n \hat{\kappa})}{ (2 \pi)^{d_{1}/2} (1 - \exp(\bar{\calT}_J) )}. \label{eq:F22bound} \tag{S37}
\end{align*}

Note that conditions $\sqrt{\tr \left[H_f(\hat{\btheta}_{1_n})^{-1}\right] / n} < \hat{\delta}$ and $\sqrt{\tr \left[J(\bar{\btheta}_{1_n})^{-1}\right]/n} < \bar{\delta}$ are required from the inequalities \eqref{eq:F21}, \eqref{eq:F21buond} and \eqref{eq:F22bound}, which are derived from Lemma \ref{std normal lemma}. These conditions are stated in Assumption~\ref{assum4}.

From \eqref{eq:F1_3}, \eqref{eq:F21bound2}, and \eqref{eq:F22bound}, the TV distance between the rescaled variational posterior $\overline{\Pi}_n$ and its Laplace approximation admits the following upper bound:
\begin{align*}
d_{TV} \left(
\mathcal{N}\!\left(0 , H_f(\hat{\btheta}_{1_n})^{-1} \right),
\overline{\Pi}_n
\right)
\;\le\;&
\frac{\sqrt{3}\,\tr \!\left[H_f(\hat{\btheta}_{1_n})^{-1}\right]\hat{M}_1}
{4\sqrt{\left(\lambda_{\min}(H_f(\hat{\btheta}_{1_n}))-\hat{\delta}\hat{M}_1\right)
\left(1-\exp(\hat{\calT}_n)\right)}}\,n^{-1/2} \\
&\quad +\; 2\exp(\hat{\calT}_n)
\;+\;
\frac{2\,\big|\det(J(\bar{\btheta}_{1_n})^{-1})\big|^{-1/2}\bar{M}_2}
{(2\pi)^{d_1/2}\big(1-\exp(\bar{\calT}_J)\big)}\,
n^{d_1/2}\exp(-n\hat{\kappa}).
\end{align*}

\end{proof}

Note that if assumptions~\ref{assum1}--\ref{assum5} are satisfied and if we fix the dimension $d_1$, all bounds will vanish as $n \rightarrow \infty$ at a rate of $\frac{1}{\sqrt{n}}$, as long as $\hat{\kappa} \gg \frac{d_1+1}{2} \cdot \frac{\log n}{n}$. These bounds are completely non-asymptotic and computable. For more details, see Appendix D of \citet{kasprzak2025laplace}.

\section{Proof of Theorem~\ref{thm:yenv}}
\label{sec:E}

This section provides the proof of Theorem~\ref{thm:yenv} in the main text. The proof proceeds by verifying that, under the reparameterized Bayesian response envelope model, the model-specific Assumption~\ref{assum6} implies Assumptions~\ref{assum1}--\ref{assum5} with probability tending to one. Once these conditions are established, the conclusion follows directly from Theorem~\ref{main-theorem} by identifying the nonconjugate block $\btheta_1$ with the envelope parameter $\bA$ and the conjugate block $\btheta_2$ with $\bZ=(\btmu,\tilde{\boldsymbol{\eta}},\btOmega,\btOmega_0)$.

For clarity, we first collect several auxiliary lemmas controlling the local
curvature and localization properties of the variational log-likelihood
$\ell_n^\dagger(\bA)$ and the variational log-posterior $\tf_n^\dagger(\bA)$. These lemmas are then combined to verify Assumptions~\ref{assum1}--\ref{assum5} under Assumption~\ref{assum6}, completing the proof.

\begin{lemma}[Local strong concavity of the variational log-likelihood and log-posterior]
\label{lem:local-concavity}
Suppose Assumption~\ref{assum6} holds. Then, with probability tending to one, there
exists a constant $\delta>0$ such that, for all sufficiently large $n$,
\[
\inf_{\bA\in\mathcal N}
\lambda_{\min}\!\Big(-\tfrac{1}{n}\nabla^2_{\vecr(\bA)}\ell^{\dagger}_n(\bA)\Big)
\ \ge\ \delta.
\]

Moreover, under the matrix-normal prior
$\bA\sim\mathcal{MN}_{r-u,u}(\bA_0,\bU_0^{(\bA)},\bV_0^{(\bA)})$, the variational
log-posterior $\tf_n^\dagger(\bA)=\ell_n^\dagger(\bA)+\log\pi(\bA)$ satisfies
\[
\inf_{\bA\in\mathcal N}
\lambda_{\min}\!\Big(-\tfrac{1}{n}\nabla^2_{\vecr(\bA)}\tf^{\dagger}_n(\bA)\Big)
\ \ge\ \delta.
\]

Consequently, $\ell^{\dagger}_n(\bA)$ and $\tf^{\dagger}_n(\bA)$ are uniformly strongly
concave on the convex set $\mathcal N$ and each admits a unique maximizer on $\mathcal N$,
denoted by
\[
\bbA_n := \argmax_{\bA\in\mathcal N}\ell^{\dagger}_n(\bA),
\qquad
\hbA_n := \argmax_{\bA\in\mathcal N}\tf_n^\dagger(\bA).
\]

Furthermore, if $\rho-\|\bbA_n-\bA^\star\|_F>0$ and
$\rho-\|\hbA_n-\bA^\star\|_F>0$, then there exist radii
\[
0<\bar{\delta}_y\le \rho-\|\bbA_n-\bA^\star\|_F,
\qquad
0<\hat{\delta}_y\le \rho-\|\hbA_n-\bA^\star\|_F,
\]
such that $\ell^{\dagger}_n$ and $\tf_n^\dagger$ each have a unique maximizer on the
corresponding closed balls
$\mathcal B(\bbA_n,\bar{\delta}_y)$ and $\mathcal B(\hbA_n,\hat{\delta}_y)$.
\end{lemma}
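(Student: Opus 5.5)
The plan is to derive everything from Assumption~\ref{assum6}(b), after identifying $-\nabla^2_{\vecr(\bA)}\ell_n^\dagger(\bA)$ with $\mathbf J_n(\bA)$ on $\mathcal N=\mathcal N_\rho$.

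\textbf{Step 1 (Hessian identity).} By \eqref{eq:elldef}, $\ell_n^\dagger(\bA)=\int q_n^\dagger(\bZ)\log p(\bbY_n,\bZ\mid\bA)\,d\bZ$. I would show that differentiation under the integral sign is legitimate twice on $\mathcal N$. This extends Proposition~\ref{prop:DCT} to second order.

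Under the reparameterization \eqref{eq:13}--\eqref{eq:14}, $\bA$ enters only through the affine maps $\bC_{\bA}=\bK+\bL\bA$ and $\bD_{\bA}=\bL-\bK\bA^\top$ and through the smooth term $\log|\bI_{r-u}+\bA\bA^\top|$. Its derivatives are bounded on the compact set $\mathcal N$, since $\bJ_0(\bA)\succeq\bI$. So the second derivatives of the complete-data log-likelihood are dominated by a polynomial in $\|\btmu\|,\|\boldsymbol{\tilde\eta}\|,\|\btOmega^{-1}\|_{op},\|\btOmega_0^{-1}\|_{op}$ and the data. These quantities are integrable under the Gaussian and inverse-Wishart factors \eqref{eq:VF}.

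A mean-value/dominated-convergence argument, exactly as in Step 3 of Proposition~\ref{prop:DCT}, then gives $-\nabla^2_{\vecr(\bA)}\ell_n^\dagger(\bA)=\mathbf J_n(\bA)$. Alternatively, one can read this off directly from the explicit Hessian \eqref{eq:tAobj2}, whose likelihood part is already an expectation. Assumption~\ref{assum6}(b) then yields the first display with the same $\delta$, with probability tending to one.

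\textbf{Step 2 (prior).} The matrix-normal log prior has constant Hessian $-\bV_0^{(A)-1}\otimes\bU_0^{(A)-1}\preceq 0$. Hence $-\frac1n\nabla^2\tf_n^\dagger\succeq-\frac1n\nabla^2\ell_n^\dagger\succeq\delta\bI$, which is the second display.

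\textbf{Step 3 (strong concavity and maximizers).} On the convex compact set $\mathcal N$, a continuous function whose Hessian satisfies $\preceq-n\delta\bI$ is strictly (indeed strongly) concave. It therefore attains its maximum on $\mathcal N$ at exactly one point, which defines $\bbA_n$ and $\hbA_n$.

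\textbf{Step 4 (balls).} For the ball statement, set $\bar\delta_y:=\rho-\|\bbA_n-\bA^\star\|_F>0$. The triangle inequality gives $\mathcal B(\bbA_n,\bar\delta_y)\subset\mathcal N$. The ball is convex and contains $\bbA_n$, so the restriction of the strongly concave $\ell_n^\dagger$ to it has $\bbA_n$ as a maximizer, and strict concavity makes it unique. The same argument applies to $\hat\delta_y$ and $\hbA_n$.

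\textbf{Main obstacle.} The only genuinely delicate step is Step 1, namely justifying the interchange for second derivatives and the matching of the conditioning conventions in $\mathbf J_n$ (joint density $p(\bbY_n,\bZ\mid\bA)$ versus the reparameterized likelihood). I would handle it by writing the $\bA$-dependence explicitly via the quadratic forms in $\bC_{\bA},\bD_{\bA}$ and the log-determinant, as in \eqref{eq:tAobj}. This makes the dominating function explicit and avoids the inverse-square-root Jacobians of Lemma~\ref{lem:invsqrt_first_compact}.
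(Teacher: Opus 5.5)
Your proposal is correct and follows essentially the same route as the paper. The paper first proves a second-order dominated-convergence interchange (Proposition~\ref{prop:DCT2}) using exactly the explicit $\bA$-dependence through $\log|\bI+\bA^\top\bA|$, the affine maps $\bC_{\bA},\bD_{\bA}$ and the Gaussian prior; it then identifies $-\nabla^2\ell_n^\dagger$ with $\mathbf J_n$, applies Assumption~\ref{assum6}(b), adds the negative semidefinite prior Hessian, and gets uniqueness and the balls just as in your Steps~2--4. Your alternative is also valid and slightly cleaner: $\log p$ is a finite sum of terms $a_k(\bA)b_k(\bZ)$ with $q$-integrable $b_k$, so $\ell_n^\dagger$ is available in closed form by linearity of expectation and no domination argument is needed.
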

The proof of Lemma~\ref{lem:local-concavity} is given in Section~\ref{sec:E.1}. Lemma~\ref{lem:local-concavity} establishes the local strong concavity and uniqueness of the variational log-likelihood and variational log-posterior on $\mathcal N$, thereby verifying Assumption~\ref{assum1}(a),(b) and Assumption~\ref{assum3}(a),(b) for the Bayesian response envelope model.

\begin{lemma}[Uniform third-derivative bound on a local neighborhood]
\label{lem:third-derivative}
Suppose Assumption~\ref{assum6} holds and the parameter dimensions are fixed as $n\to\infty$.
Then there exists a finite constant $M_1<\infty$ (independent of $n$) such that, for all
sufficiently large $n$,
\[
\sup_{\bA\in\mathcal N}\Big\|\,\tfrac{1}{n}\nabla^3_{\vecr(\bA)}\ell_n^\dagger(\bA)\,\Big\|^{*}
\ \le\ M_1,
\qquad
\sup_{\bA\in\mathcal N}\Big\|\,\tfrac{1}{n}\nabla^3_{\vecr(\bA)}\tf_n^\dagger(\bA)\,\Big\|^{*}
\ \le\ M_1.
\]
In particular, on the event that $\mathcal B(\bbA_n,\bar\delta_y)\subseteq\mathcal N$ and
$\mathcal B(\hbA_n,\hat\delta_y)\subseteq\mathcal N$ (as in Lemma~\ref{lem:local-concavity}),
the same bound holds on these closed balls.
\end{lemma}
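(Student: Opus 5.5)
The plan is to compute the third derivative of $\tf_n^\dagger$ explicitly, using the closed form of the reparameterized objective in \eqref{eq:tAobj}, and then bound each piece uniformly on the compact set $\mathcal N$. By \eqref{eq:tAobj}, $\tf_n^\dagger(\bA)$ is the sum of three parts: the log-determinant term $\frac{\kappa}{2}\log|\bJ_0(\bA)|$ with $\kappa=2n+\nu^{(1)}+\nu^{(0)}$, a quadratic polynomial in $\vecr(\bA)$, and the matrix-normal log prior, which is also quadratic. The quadratic and linear parts have identically zero third derivative. Hence
\[
\nabla^3_{\vecr(\bA)}\tf_n^\dagger(\bA)=\nabla^3_{\vecr(\bA)}\ell_n^\dagger(\bA)=\tfrac{\kappa}{2}\nabla^3_{\vecr(\bA)}\log|\bI_{r-u}+\bA\bA^\top|,
\]
and the two bounds claimed in the lemma reduce to one. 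The interchange of differentiation and expectation needed to write $\ell_n^\dagger$ in this form follows Proposition~\ref{prop:DCT}. Alternatively, it is immediate here because the $\bA$-dependence after taking expectations is explicit.

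\textbf{Step 1: directional third derivative of $\phi(\bA):=\log|\bJ_0(\bA)|$.} For a direction $\Delta$, set $\bJ_0=\bJ_0(\bA)$ and $D_1=\Delta\bA^\top+\bA\Delta^\top$, so that $D_2:=\tfrac{d^2}{dt^2}\bJ_0(\bA+t\Delta)=2\Delta\Delta^\top$ and all higher derivatives of $\bJ_0$ vanish. Standard log-determinant calculus then gives
\[
\phi'''[\Delta,\Delta,\Delta]=2\tr\big((\bJ_0^{-1}D_1)^3\big)-3\tr\big(\bJ_0^{-1}D_1\bJ_0^{-1}D_2\big).
\]
The trilinear form $\phi'''[a,b,c]$ follows by polarization, or equivalently by the symmetric analogue with $D_1$ built from $a,b,c$ in turn. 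Its value on unit vectors is therefore controlled by $\sup\|\phi'''[\Delta,\Delta,\Delta]\|$ over $\|\Delta\|_F\le1$, up to a dimension-free polarization constant.

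\textbf{Step 2: uniform bounds.} Since $\bJ_0\succeq\bI_{r-u}$, we have $\|\bJ_0^{-1}\|_{op}\le1$ for every $\bA$. On $\mathcal N$ we also have $\|\bA\|_F\le B_{\mathcal N}=\|\bA^\star\|_F+\rho$, as in the proof of Proposition~\ref{prop:DCT}. This gives $\|D_1\|_F\le2B_{\mathcal N}$ and $\|D_2\|_F\le2$ for $\|\Delta\|_F\le1$. Bounding traces by products of Frobenius and operator norms yields
\[
|\phi'''[\Delta,\Delta,\Delta]|\le 2(2B_{\mathcal N})^3\sqrt{r-u}+6\cdot 2B_{\mathcal N}\sqrt{r-u}=:c(B_{\mathcal N},r,u).
\]
Because $\kappa/n\le 2+(\nu^{(1)}+\nu^{(0)})/n\le3$ for $n$ large, we may take $M_1:=\tfrac32 C_{pol}\,c(B_{\mathcal N},r,u)$, which is independent of $n$. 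The final claim is then immediate, since the balls $\mathcal B(\bbA_n,\bar\delta_y)$ and $\mathcal B(\hbA_n,\hat\delta_y)$ are contained in $\mathcal N$ on the stated event.

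\textbf{Main obstacle.} The step I expect to need the most care is justifying that no other $\bA$-dependence survives in $\ell_n^\dagger$. In particular, it must be checked that the coefficients of the quadratic part, $\bG_{r,1}$ and $\bG_{r,2}$, which are of order $n$, enter only at second order and so contribute nothing to the third derivative. This relies on the isotropic prior scales $\bPsi=\psi^{(1)}\bI_u$ and $\bPsi_0=\psi^{(0)}\bI_{r-u}$, under which the Jacobian contributions of \eqref{eq:13} and the $|\bJ(\bA)|$ factors collapse into the single term $\log|\bJ_0(\bA)|$. I would verify this from \eqref{eq:tAobj}, using the identity $|\bI_u+\bA^\top\bA|=|\bI_{r-u}+\bA\bA^\top|$.
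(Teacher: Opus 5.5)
Your proposal is correct and follows the paper's proof. Both arguments reduce $\nabla^3\tf_n^\dagger=\nabla^3\ell_n^\dagger$ to the $\log|\bI_{r-u}+\bA\bA^\top|$ term with an $O(n)$ coefficient, because the quadratic part (including the $\bG_{r,1},\bG_{r,2}$ terms and the Gaussian prior) has vanishing third derivative, and both then use that $\kappa/n$ stays bounded. The only difference is the last step: you bound the third derivative of the log-determinant explicitly via the derivative formula and $\|\bJ_0(\bA)^{-1}\|_{op}\le 1$, whereas the paper invokes smoothness of $g$ and compactness of $\mathcal N$ (extreme value theorem), so your version also produces an explicit $M_1$.
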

The proof of Lemma~\ref{lem:third-derivative} is given in Section~\ref{sec:E.2}. Lemma~\ref{lem:third-derivative} provides the uniform third-derivative control required by the nonasymptotic Laplace approximation bound. Assumptions~\ref{assum1}(c) and~\ref{assum3}(c) require that the rescaled third derivatives of the variational log-likelihood $\ell_n^\dagger$ and the variational log-posterior $\tf_n^\dagger$ be uniformly bounded on neighborhoods of their maximizers. Lemma~\ref{lem:local-concavity} guarantees that the relevant local neighborhoods $\mathcal B(\bbA_n,\bar\delta_y)$ and $\mathcal B(\hbA_n,\hat\delta_y)$ are contained in the fixed compact set $\mathcal N$ with probability tending to one. Therefore, the uniform bound on $\mathcal N$ established in Lemma~\ref{lem:third-derivative} immediately implies the corresponding bounds on these balls, verifying Assumptions~\ref{assum1}(c) and~\ref{assum3}(c) for the Bayesian response envelope model.

\begin{lemma}[Finite bound for $M_2$ centered at the local MVLE]
\label{lem:bound-M2}
Let $d_{\bA}=(r-u)u$ and let $\bSigma_0^{(\bA)}=\bV_0^{(\bA)}\otimes\bU_0^{(\bA)}$ be the fixed prior covariance,
so that $\vecr(\bA)\sim\mathcal N(\vecr(\bA_0),\bSigma_0^{(\bA)})$.
Fix a compact neighborhood $\mathcal N=\mathcal B(\bA^\star,\rho)$ and let
\[
\bbA_n \in \arg\max_{\bA\in\mathcal N}\ell_n^\dagger(\bA).
\]
Then for any fixed $\bar\delta>0$,
\[
\sup_{\|\bA-\bbA_n\|_F\le \bar\delta}\ \Big| \frac{1}{\pi(\bA)} \Big|
\;\le\; M_2,
\]
where $M_2<\infty$ is a deterministic constant (independent of $\bA$ and $n$).
In particular, Assumption~\ref{assum2} holds for the $\bA$-update on $\mathcal N$.
\end{lemma}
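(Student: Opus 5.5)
The plan is to use that the matrix-normal prior density is explicit, strictly positive, and continuous, and to note that the supremum runs over a ball whose center $\bbA_n$ lies in the fixed compact set $\mathcal N=\mathcal B(\bA^\star,\rho)$. First I would write out the prior:
\[
\pi(\bA)=(2\pi)^{-d_{\bA}/2}\,\big|\bSigma_0^{(\bA)}\big|^{-1/2}\exp\!\Big\{-\tfrac12\,\vecr(\bA-\bA_0)^\top\bSigma_0^{(\bA)-1}\vecr(\bA-\bA_0)\Big\}.
\]
Taking reciprocals gives
\[
\frac{1}{\pi(\bA)}=(2\pi)^{d_{\bA}/2}\,\big|\bSigma_0^{(\bA)}\big|^{1/2}\exp\!\Big\{\tfrac12\,\vecr(\bA-\bA_0)^\top\bSigma_0^{(\bA)-1}\vecr(\bA-\bA_0)\Big\}.
\]
This expression is positive and increasing in the quadratic form. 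The quadratic form is bounded above by $\lambda_{\min}(\bSigma_0^{(\bA)})^{-1}\|\bA-\bA_0\|_F^2$. We have $\lambda_{\min}(\bSigma_0^{(\bA)})=\lambda_{\min}(\bV_0^{(\bA)})\lambda_{\min}(\bU_0^{(\bA)})>0$, because both are fixed positive definite matrices.

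Second, I would bound $\|\bA-\bA_0\|_F$ uniformly over the set. Suppose $\|\bA-\bbA_n\|_F\le\bar\delta$. Since $\bbA_n\in\mathcal N$, the triangle inequality gives
\[
\|\bA-\bA_0\|_F\le \bar\delta+\|\bbA_n-\bA^\star\|_F+\|\bA^\star-\bA_0\|_F\le \bar\delta+\rho+\|\bA^\star-\bA_0\|_F=:R.
\]
The constant $R$ depends only on the fixed quantities $\bar\delta,\rho,\bA^\star,\bA_0$. It does not depend on $n$ or on the data. Combining this with the first step yields
\[
M_2:=(2\pi)^{d_{\bA}/2}\,\big|\bV_0^{(\bA)}\big|^{d_{\bA}/(2u)}\big|\bU_0^{(\bA)}\big|^{d_{\bA}/(2(r-u))}\exp\!\Big\{\frac{R^2}{2\,\lambda_{\min}(\bV_0^{(\bA)})\lambda_{\min}(\bU_0^{(\bA)})}\Big\}<\infty.
\]
Here I use $|\bV\otimes\bU|=|\bV|^{r-u}|\bU|^{u}$. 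Equivalently, one can simply keep $|\bSigma_0^{(\bA)}|^{1/2}$ in the formula.

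The only delicate point is to make sure $M_2$ is deterministic, even though $\bbA_n$ is data-dependent. This is handled by the definition of $\bbA_n$ as a maximizer over $\mathcal N$: it always lies in the deterministic ball, so the bound holds on every realization, not only with high probability. To conclude Assumption~\ref{assum2}, I would identify $\bar\btheta_{1_n}$ with $\bbA_n$. On the event of Assumption~\ref{assum6}(a), this restricted maximizer coincides with the global one by Lemma~\ref{lem:local-concavity}. Then $\bar M_2=M_2$ serves as the required constant for the same $\bar\delta$.
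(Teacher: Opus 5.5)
Your proposal is correct and follows the paper's proof essentially step for step. You write $1/\pi(\bA)$ explicitly, bound the quadratic form by $\|\vecr(\bA)-\vecr(\bA_0)\|_2^2/\lambda_{\min}(\bSigma_0^{(\bA)})$, and use the triangle inequality through $\bbA_n\in\mathcal N$ and $\bA^\star$ to get the deterministic radius $\bar\delta+\rho+\|\bA^\star-\bA_0\|_F$. Your explicit Kronecker-product formulas for $\lambda_{\min}(\bSigma_0^{(\bA)})$ and $|\bSigma_0^{(\bA)}|$ are correct refinements that the paper leaves implicit.
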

The proof of Lemma~\ref{lem:bound-M2} is given in Section~\ref{sec:E.3}. Lemma~\ref{lem:bound-M2} ensures that Assumption~\ref{assum2} is satisfied for the response envelope model.

\begin{lemma}[Outer--shell descent around the local MVLE under fixed $q_n^\dagger$]
\label{lem:variational-posterior}
Suppose Assumption~\ref{assum6} holds.
Let $\bbA_n\in\argmax_{\bA\in\mathcal N}\ell^{\dagger}_n(\bA)$ and
$\hbA_n\in\argmax_{\bA\in\mathcal N}\tf_n^\dagger(\bA)$.
Let $\bar\delta_y,\hat\delta_y$ be radii as in Lemma~\ref{lem:local-concavity}, chosen so that
$\mathcal B(\bbA_n,\bar\delta_y)\subset\mathcal N$ and $\mathcal B(\hbA_n,\hat\delta_y)\subset\mathcal N$,
and without loss of generality $\hat\delta_y\le \bar\delta_y$.

Let
\[
M_1\ :=\ \sup_{\bA\in\mathcal N}\Big\|\tfrac{1}{n}\nabla^3_{\vecr(\bA)}\ell^{\dagger}_n(\bA)\Big\|^{*},
\]
which is finite for all sufficiently large $n$ by Lemma~\ref{lem:third-derivative}.
Define
\[
\rho_n\ :=\ \hat\delta_y-\|\bbA_n-\hbA_n\|_F.
\]
Then, with probability tending to one and for all sufficiently large $n$, $0<\rho_n\le\bar\delta_y$ and
\[
\sup_{\ \rho_n\le \|\bA-\bbA_n\|_F\le \bar\delta_y}\ 
\frac{\ell_n(\bA)-\ell_n(\bbA_n)}{n}
\ \le\ -\,\hat\kappa_n,
\]
where
\[
\hat\kappa_n\ :=\ \min \!\left\{ \frac{\delta}{2}\,\rho_n^2\;-\;\frac{M_1}{6}\,\rho_n^3,\ \ \frac{\delta}{2}\,\bar{\delta}_y^2\;-\;\frac{M_1}{6}\,\bar{\delta}_y^3 \right\},
\]
and $\delta>0$ is the uniform strong-concavity constant from Lemma~\ref{lem:local-concavity}.
In particular, $\hat\kappa_n>0$ whenever $\rho_n<3\delta/M_1$ and $\bar\delta_y<3\delta/M_1$.

Moreover, under the matrix-normal (Gaussian) prior on $\vecr(\bA)$,
\[
\|\hbA_n-\bbA_n\|_F = O(n^{-1}),
\]
hence $\rho_n\to \hat\delta_y$ and $\rho_n>0$ with probability tending to one.
\end{lemma}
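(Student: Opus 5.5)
The argument rests on a second-order Taylor expansion of $\ell_n^\dagger$ around its maximizer $\bbA_n$, combined with the uniform curvature and third-derivative controls on $\mathcal N$ from Lemmas~\ref{lem:local-concavity} and~\ref{lem:third-derivative}. The $O(n^{-1})$ distance between $\hbA_n$ and $\bbA_n$ then follows from the fact that the prior contributes only an $O(1)$ gradient while the likelihood curvature is of order $n$.

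\textbf{Step 1: positivity and size of $\rho_n$.}
First we establish $\|\hbA_n-\bbA_n\|_F=O(n^{-1})$, and hence that $0<\rho_n\le\bar\delta_y$ with probability tending to one.

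By Assumption~\ref{assum6}(a), both maximizers lie in $\operatorname{int}(\mathcal N)$ with probability tending to one. On that event the first-order conditions hold:
\[
\nabla\ell_n^\dagger(\bbA_n)=0,
\qquad
\nabla\ell_n^\dagger(\hbA_n)=-\nabla\log\pi(\hbA_n).
\]
For the matrix-normal prior, $\nabla_{\vecr(\bA)}\log\pi(\bA)=-\bSigma_0^{(\bA)-1}\vecr(\bA-\bA_0)$. This is bounded on the compact set $\mathcal N$ by a constant $K$ that does not depend on $n$.

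By the mean value theorem applied to $\nabla\ell_n^\dagger$ along the segment joining the two maximizers (the segment lies in the convex set $\mathcal N$), Lemma~\ref{lem:local-concavity} gives
\[
n\delta\,\|\hbA_n-\bbA_n\|_F\le K,
\qquad\text{so}\qquad
\|\hbA_n-\bbA_n\|_F\le K/(n\delta).
\]
Consequently $\rho_n=\hat\delta_y-\|\bbA_n-\hbA_n\|_F\to\hat\delta_y>0$. Together with the convention $\hat\delta_y\le\bar\delta_y$, this yields $0<\rho_n\le\bar\delta_y$.

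\textbf{Step 2: the shell descent bound.}
Fix $\bA$ with $\rho_n\le s:=\|\bA-\bbA_n\|_F\le\bar\delta_y$. Since $\mathcal B(\bbA_n,\bar\delta_y)\subset\mathcal N$ and $\nabla\ell_n^\dagger(\bbA_n)=0$, Taylor's theorem with third-order remainder gives, for some $\xi$ on the segment,
\[
\frac{\ell_n^\dagger(\bA)-\ell_n^\dagger(\bbA_n)}{n}
=\frac{1}{2n}\,\Delta^\top\nabla^2\ell_n^\dagger(\bbA_n)\,\Delta+\frac{1}{6n}\,\nabla^3\ell_n^\dagger(\xi)[\Delta,\Delta,\Delta]
\le-\frac{\delta}{2}s^2+\frac{M_1}{6}s^3,
\]
where $\Delta=\vecr(\bA-\bbA_n)$. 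The inequality uses Lemma~\ref{lem:local-concavity} for the quadratic term and Lemma~\ref{lem:third-derivative} for the cubic term.

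Define $\phi(s)=\tfrac{\delta}{2}s^2-\tfrac{M_1}{6}s^3$. Then
\[
\phi'(s)=s\big(\delta-\tfrac{M_1}{2}s\big),
\]
so $\phi$ is increasing on $[0,2\delta/M_1]$ and decreasing afterwards, i.e.\ unimodal on $[0,\infty)$. Its minimum over the interval $[\rho_n,\bar\delta_y]$ is therefore attained at an endpoint, and
\[
\min_{s\in[\rho_n,\bar\delta_y]}\phi(s)=\hat\kappa_n .
\]
Hence $(\ell_n^\dagger(\bA)-\ell_n^\dagger(\bbA_n))/n\le-\phi(s)\le-\hat\kappa_n$ throughout the shell. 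Finally, $\phi(s)>0$ exactly when $s<3\delta/M_1$, which gives the stated positivity condition for $\hat\kappa_n$.

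\textbf{Main obstacle.}
The delicate point is that the descent bound only covers the bounded shell, whereas Assumption~\ref{assum5} concerns all $\bA$ outside the ball. The region beyond $\bar\delta_y$ is not part of this lemma. It should be handled using Assumption~\ref{assum6}(a), which says the global maximizer lies in $\mathcal N$, together with strong concavity on $\mathcal N$ and the boundary estimate above. I would address this when verifying Assumption~\ref{assum5}, rather than inside the present proof.
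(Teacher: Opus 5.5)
Your proposal is correct and follows essentially the same route as the paper: a third-order Taylor expansion of $\ell_n^\dagger$ about $\bbA_n$, bounded by $-\frac{\delta}{2}s^2+\frac{M_1}{6}s^3$, with the unimodality (endpoint) argument on $[\rho_n,\bar\delta_y]$, and the first-order conditions plus a mean-value argument with the uniformly bounded Gaussian prior gradient to get $\|\hbA_n-\bbA_n\|_F\le C/(n\delta)$. Your form of the expansion (Hessian at $\bbA_n$, cubic remainder at an intermediate point) is the correctly stated one, whereas the paper evaluates both terms at the same intermediate point; your closing remark that the region beyond $\bar\delta_y$ is not covered is a fair point about the later verification of Assumption~\ref{assum5}, not a gap in this lemma.
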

The proof of Lemma~\ref{lem:variational-posterior} is given in Section~\ref{sec:E.4}. Lemma~\ref{lem:variational-posterior} establishes a uniform outer--shell descent property for the variational log-likelihood around the local MVLE $\bbA_n$. Specifically, it shows that outside a shrinking neighborhood of $\bbA_n$—but still within the region where local quadratic control holds—the variational log-likelihood decreases at a rate proportional to $n$. As a consequence, the induced variational posterior assigns exponentially small mass to this outer shell.

This result verifies the tail-decay condition required in Assumption~\ref{assum5} and plays a crucial role in controlling the contribution of the nonlocal region to the TV distance between the exact CAVI update and its Laplace approximation.

\paragraph{Explicit verification of Assumption~\ref{assum4} for the $\bA$-block.}
Let $d_{\bA}=(r-u)u$ and work on the high-probability event in Assumption~\ref{assum6}.
By Lemma~\ref{lem:local-concavity}, there exists $\delta>0$ such that, for all sufficiently large $n$,
\[
-\nabla^2_{\vecr(\bA)}\ell_n^\dagger(\bA)\ \succeq\ n\delta\,\bI_{d_{\bA}},
\qquad \forall\bA\in\mathcal N.
\]
Under the Gaussian (matrix-normal) prior on $\vecr(\bA)$,
$\nabla^2_{\vecr(\bA)}\log\pi(\bA)\preceq 0$, hence
\[
-\nabla^2_{\vecr(\bA)}\tf_n^\dagger(\bA)
=
-\nabla^2_{\vecr(\bA)}\ell_n^\dagger(\bA)-\nabla^2_{\vecr(\bA)}\log\pi(\bA)
\succeq
n\delta\,\bI_{d_{\bA}},
\qquad \forall\bA\in\mathcal N.
\]
In particular, at the local maximizers $\bbA_n$ and $\hbA_n$,
\[
\tr\!\Big(\big[-\nabla^2_{\vecr(\bA)}\ell_n^\dagger(\bbA_n)\big]^{-1}\Big)\le \frac{d_{\bA}}{n\delta},
\qquad
\tr\!\Big(\big[-\nabla^2_{\vecr(\bA)}\tf_n^\dagger(\hbA_n)\big]^{-1}\Big)\le \frac{d_{\bA}}{n\delta}.
\]
Therefore the local Gaussian scales in Assumption~\ref{assum4} are $O(n^{-1/2})$.

Moreover, since $\nabla \ell_n^\dagger(\bbA_n)=0$ and $\nabla \tf_n^\dagger(\hbA_n)=0$,
there exists $\tilde{\bA}_n$ on the line segment between $\bbA_n$ and $\hbA_n$ such that
\[
-\nabla^2_{\vecr(\bA)}\ell_n^\dagger(\tilde{\bA}_n)\,\big(\vecr(\hbA_n)-\vecr(\bbA_n)\big)
=\nabla_{\vecr(\bA)}\log\pi(\hbA_n).
\]
Using $-\nabla^2\ell_n^\dagger(\tilde{\bA}_n)\succeq n\delta\,\bI$ and compactness of $\mathcal N$ yields
\[
\|\hbA_n-\bbA_n\|_F
=
\|\vecr(\hbA_n)-\vecr(\bbA_n)\|_2
\le
\frac{1}{n\delta}\sup_{\bA\in\mathcal N}\|\nabla_{\vecr(\bA)}\log\pi(\bA)\|_2
=O(n^{-1}).
\]
Consequently, the proximity and scale inequalities required in Assumption~\ref{assum4} hold for all sufficiently large $n$.

Together, Lemmas~\ref{lem:local-concavity}, \ref{lem:third-derivative},
\ref{lem:bound-M2}, \ref{lem:variational-posterior}, and above paragraph verify
Assumptions~\ref{assum1}--\ref{assum5} for the reparameterized Bayesian response envelope model under Assumption~\ref{assum6}.

\subsection{Proof of Lemma~\ref{lem:local-concavity}}
\label{sec:E.1}
We first give and prove Proposition~\ref{prop:DCT2} to ensure the interchangeability of the second derivative with expectation.

\begin{proposition}[Exchange of differentiation and expectation (second order)]
\label{prop:DCT2}
Suppose Assumption~\ref{assum6} holds. Fix the current variational factor $q(\bZ)=q(\btmu)\,q(\tilde{\boldsymbol\eta})\,q(\btOmega)\,q(\btOmega_0)$
used in our CALVI scheme, where $q(\btmu),q(\tilde{\boldsymbol\eta})$ are Gaussian and
$q(\btOmega),q(\btOmega_0)$ are inverse-Wishart.
Let $\log p(\bZ,\bA\mid\bbY_n)$ be the reparameterized complete-data log-posterior in \eqref{eq:12}.
Fix a bounded neighborhood $\mathcal N$ of $\bA^\star$.

Then there exists a measurable function $h_2(\bZ)$, independent of $\bA$, such that
\[
\sup_{\bA\in\mathcal N}\Big\|\nabla^2_{\vecr(\bA)}\log p(\bZ,\bA\mid\bbY_n)\Big\|_{op}\ \le\ h_2(\bZ),
\qquad
\E_{q(\bZ)}[h_2(\bZ)]<\infty.
\]
Consequently, for every $\bA\in\mathcal N$,
\[
\nabla^2_{\vecr(\bA)}
\int q(\bZ)\,\log p(\bZ,\bA\mid\bbY_n)\,d\bZ
=
\int q(\bZ)\,\nabla^2_{\vecr(\bA)} \log p(\bZ,\bA\mid\bbY_n)\,d\bZ.
\]
\end{proposition}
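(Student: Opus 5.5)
The plan mirrors the proof of Proposition~\ref{prop:DCT}, now one derivative higher: build a $\bZ$-integrable envelope for the Hessian that holds uniformly over $\mathcal N$, then conclude by dominated convergence.

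\textbf{Step 1: write the $\bA$-dependence explicitly.} In the reparameterized complete-data log-posterior, the $\bA$-dependent terms are:
\begin{itemize}
\item the log-determinant term $\tfrac{\kappa}{2}\log|\bI_{r-u}+\bA\bA^\top|$;
\item quadratic and linear forms $\tr[\bC_{\bA}^\top \bS_1(\bZ)\bC_{\bA}\btOmega^{-1}]$, $\tr[\bD_{\bA}^\top \bS_2(\bZ)\bD_{\bA}\btOmega_0^{-1}]$ and $\tr[\tilde{\boldsymbol\eta}\,\bS_3(\bZ)\bC_{\bA}\btOmega^{-1}]$, where $\bS_1,\bS_2$ are built from $\bbY_{\btmu}^\top\bbY_{\btmu}$ plus constants, and $\bS_3=\bbX_c^\top\bbY+\bM\bB_0^\top$;
\item the Gaussian prior term.
\end{itemize}
Since $\bC_{\bA}=\bK+\bL\bA$ and $\bD_{\bA}=\bL-\bK\bA^\top$ are affine in $\bA$, each trace term is a quadratic polynomial in $\vecr(\bA)$. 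Its Hessian is therefore constant in $\bA$, and, exactly as in~\eqref{eq:tAobj2}, it equals a Kronecker product such as $\btOmega^{-1}\otimes\bL^\top\bS_1\bL$ or $\bK^\top\bS_2\bK\otimes\btOmega_0^{-1}$. The linear term in $\tilde{\boldsymbol\eta}$ contributes no Hessian at all.

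\textbf{Step 2: bound each piece in operator norm.} Using $\|\bP\otimes\bQ\|_{op}=\|\bP\|_{op}\|\bQ\|_{op}$, the trace terms give
\[
C\Big(\|\btOmega^{-1}\|_{op}\big(\|\bbY_{\btmu}\|_F^2+1\big)+\|\btOmega_0^{-1}\|_{op}\big(\|\bbY_{\btmu}\|_F^2+1\big)\Big).
\]
The log-determinant Hessian is deterministic and continuous in $\bA$, with $\bJ_0(\bA)^{-1}\preceq\bI$. It is therefore bounded on the compact set $\mathcal N$ by a constant depending only on $B_{\mathcal N}=\|\bA^\star\|_F+\rho$. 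The prior Hessian is the constant $-\bV_0^{(A)-1}\otimes\bU_0^{(A)-1}$. Adding these gives $h_2(\bZ)$ of the stated form, independent of $\bA$.

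\textbf{Step 3: integrability.} Integrability follows as in Proposition~\ref{prop:DCT}. Gaussian $q(\btmu)$ gives $\E_q\|\bbY_{\btmu}\|_F^2<\infty$, and inverse-Wishart factors have finite $\E\tr(\btOmega^{-1})$ and $\E\tr(\btOmega_0^{-1})$. The one subtlety is the product $\|\btOmega^{-1}\|_{op}\|\bbY_{\btmu}\|_F^2$. It is handled by mean-field independence of $q(\btOmega)$ and $q(\btmu)$: the expectation of the product factorizes into a product of finite expectations.

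\textbf{Step 4: interchange.} First, Proposition~\ref{prop:DCT} gives $\nabla\int q\log p=\int q\nabla\log p$ on $\mathcal N$. Next, for each coordinate direction $e_j$, apply the mean value theorem to the difference quotient of $\nabla_{\vecr(\bA)}\log p$. Each entry of the difference quotient is dominated by $h_2(\bZ)$, uniformly for small steps that stay in $\mathcal N$, so dominated convergence exchanges the second derivative with the expectation. If $\bA$ lies on the boundary of $\mathcal N$, either enlarge $\rho$ slightly or use one-sided quotients.

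The step I expect to need the most care is Step 1. The reference to ``\eqref{eq:12}'' in the statement really means the reparameterized log-posterior, so the first task is to confirm that after reparameterization no $\bJ(\bA)^{-1/2}$ factors remain and that only the log-determinant is non-polynomial. Once that is checked, the bounds in Steps 2–4 are routine.
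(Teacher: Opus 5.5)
Your proposal is correct and follows essentially the same route as the paper. Both arguments build an $\bA$-uniform Hessian envelope from three pieces: the deterministic log-determinant term, bounded on $\mathcal N$; the constant Kronecker-type Hessians of the trace-quadratic terms, with the linear $\tilde{\boldsymbol\eta}$-term contributing nothing; and the constant prior Hessian. Both then get integrability from mean-field independence of $q(\btmu)$ and $q(\btOmega),q(\btOmega_0)$, and conclude by dominated convergence on mean-value difference quotients of $\nabla_{\vecr(\bA)}\log p$. The differences are cosmetic. You bound the log-determinant Hessian by continuity and compactness, where the paper gives the explicit bound $2+4B_{\mathcal N}^2$. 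Your remarks are sound refinements that the paper glosses over: the cited equation is mislabelled and should be the reparameterized log-posterior, and enlarging $\rho$ handles coordinate directions at boundary points of $\mathcal N$.
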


\begin{proof}[Proof of Proposition~\ref{prop:DCT2}]
In \eqref{eq:12}, the parameter $\bA$ appears only through the terms
$\log|\bI_u+\bA^\top\bA|$, the affine maps $\bC_{\bA}$ and $\bD_{\bA}$, and the Gaussian prior term.
(All remaining terms are $\bA$--free.) Recall $\bC_{\bA}=\bK+\bL\bA$ and $\bD_{\bA}=\bL-\bK\bA^\top$,
so $\bC_{\bA}$ and $\bD_{\bA}$ are affine in $\bA$.

By Assumption~\ref{assum6}, for every $\bA\in\mathcal N$,
\[
\|\bA\|_{op}\le \|\bA\|_F \le \|\bA^\star\|_F+\rho \;=:\; B_{\mathcal N}<\infty.
\]

\smallskip
\noindent\textbf{Step 1: Construct an $\bA$--uniform Hessian bound.}
Write the $\bA$--dependent part of \eqref{eq:12} (up to an additive constant) as
\begin{align*}
\log p(\bZ,\bA\mid\bbY_n)
&= c_n\,\log|\bI_u+\bA^\top\bA|
-\frac12\tr\!\Big(\bC_{\bA}^\top \bS_1(\btmu)\,\bC_{\bA}\,\btOmega^{-1}\Big)
-\frac12\tr\!\Big(\bD_{\bA}^\top \bS_0(\btmu)\,\bD_{\bA}\,\btOmega_0^{-1}\Big) \\
&\quad -\frac12\tr\!\Big((\bA-\bA_0)^\top \bU_0^{(\bA)-1}(\bA-\bA_0)\,\bV_0^{(\bA)-1}\Big)
\;+\; \tr\!\Big(\tilde{\boldsymbol\eta}\,\bH(\btmu)\,\bC_{\bA}\,\btOmega^{-1}\Big),
\end{align*}
where $c_n=\frac{2n+\nu^{(1)}+\nu^{(0)}}{2}$ and
\[
\bS_1(\btmu):=\bbY_{\btmu}^\top\bbY_{\btmu}+\bB_0\bM\bB_0^\top+\psi^{(1)}\bI_r,
\qquad
\bS_0(\btmu):=\bbY_{\btmu}^\top\bbY_{\btmu}+\psi^{(0)}\bI_r.
\]
The last trace term is \emph{linear} in $\bA$ (because $\bC_{\bA}$ is affine in $\bA$), hence it has zero
second derivative in $\bA$ and does not contribute to the Hessian.

\smallskip
\emph{(a) Log-determinant term.}
Let $g(\bA)=\log|\bI_u+\bA^\top\bA|$ and $J(\bA)=(\bI_u+\bA^\top\bA)^{-1}$.
Then $\nabla_{\bA} g(\bA)=2\bA J(\bA)$ and for any direction $\bU$,
\[
D(\nabla_{\bA} g(\bA))[\bU]
=2\bU J(\bA)-2\bA J(\bA)\,(\bA^\top\bU+\bU^\top\bA)\,J(\bA).
\]
Using $\|J(\bA)\|_{op}\le 1$ and $\|\bA^\top\bU+\bU^\top\bA\|_F\le 2\|\bA\|_F\|\bU\|_F$, we obtain
\[
\|D(\nabla_{\bA} g(\bA))[\bU]\|_F
\le (2+4\|\bA\|_F^2)\,\|\bU\|_F
\le (2+4B_{\mathcal N}^2)\,\|\bU\|_F,
\]
hence
\[
\sup_{\bA\in\mathcal N}\|\nabla^2_{\vecr(\bA)} g(\bA)\|_{op}\ \le\ 2+4B_{\mathcal N}^2.
\]

\smallskip
\emph{(b) Quadratic trace terms.}
Expanding $\bC_{\bA}=\bK+\bL\bA$ gives
\[
\tr(\bC_{\bA}^\top \bS_1 \bC_{\bA}\,\btOmega^{-1})
=
\tr(\bA^\top \bL^\top\bS_1\bL\,\bA\,\btOmega^{-1}) + \text{(terms at most linear in $\bA$)}.
\]
Therefore the $\bA$--Hessian of this term equals the Hessian of the quadratic form
$\tr(\bA^\top \bL^\top\bS_1\bL\,\bA\,\btOmega^{-1})$.
Using \eqref{eq:14} yields
\[
\Big\|\nabla^2_{\vecr(\bA)} \tr(\bA^\top \bL^\top\bS_1(\btmu)\bL\,\bA\,\btOmega^{-1})\Big\|_{op}
\le 2\,\|\btOmega^{-1}\|_{op}\,\|\bL^\top\bS_1(\btmu)\bL\|_{op}.
\]
Multiplying by the factor $1/2$ in \eqref{eq:12} gives the contribution
$\|\btOmega^{-1}\|_{op}\,\|\bL^\top\bS_1(\btmu)\bL\|_{op}$.
Similarly, expanding $\bD_{\bA}=\bL-\bK\bA^\top$ shows that
\[
\tr(\bD_{\bA}^\top \bS_0 \bD_{\bA}\,\btOmega_0^{-1})
=
\tr(\bA^\top \btOmega_0^{-1}\bA\,\bK^\top\bS_0(\btmu)\bK)+\text{(terms at most linear in $\bA$)},
\]
hence
\[
\Big\|\nabla^2_{\vecr(\bA)} \tr(\bD_{\bA}^\top \bS_0(\btmu) \bD_{\bA}\,\btOmega_0^{-1})\Big\|_{op}
\le 2\,\|\btOmega_0^{-1}\|_{op}\,\|\bK^\top\bS_0(\btmu)\bK\|_{op},
\]
so the $1/2$ factor yields the contribution
$\|\btOmega_0^{-1}\|_{op}\,\|\bK^\top\bS_0(\btmu)\bK\|_{op}$.

\smallskip
\emph{(c) Gaussian prior term.}
The prior term is quadratic in $\vecr(\bA)$, hence its Hessian is the constant matrix
$-(\bV_0^{(\bA)})^{-1}\otimes(\bU_0^{(\bA)})^{-1}$ and therefore bounded on $\mathcal N$:
\[
\Big\|\nabla^2_{\vecr(\bA)}\log\pi(\bA)\Big\|_{op}
=\big\|(\bV_0^{(\bA)})^{-1}\otimes(\bU_0^{(\bA)})^{-1}\big\|_{op}
=\|(\bV_0^{(\bA)})^{-1}\|_{op}\,\|(\bU_0^{(\bA)})^{-1}\|_{op}.
\]

\smallskip
Collecting the three contributions above and using $\|\bL^\top\bS\bL\|_{op}\le\|\bS\|_{op}$,
$\|\bK^\top\bS\bK\|_{op}\le\|\bS\|_{op}$, we obtain for all $\bA\in\mathcal N$,
\[
\big\|\nabla^2_{\vecr(\bA)}\log p(\bZ,\bA\mid\bbY_n)\big\|_{op}
\le
c_n(2+4B_{\mathcal N}^2)
+ \|\btOmega^{-1}\|_{op}\,\|\bS_1(\btmu)\|_{op}
+ \|\btOmega_0^{-1}\|_{op}\,\|\bS_0(\btmu)\|_{op}
+ C_\pi,
\]
where $C_\pi:=\|(\bV_0^{(\bA)})^{-1}\|_{op}\,\|(\bU_0^{(\bA)})^{-1}\|_{op}$.
Finally, $\|\bS_1(\btmu)\|_{op}\le \|\bbY_{\btmu}\|_F^2+C_{S,1}$ and
$\|\bS_0(\btmu)\|_{op}\le \|\bbY_{\btmu}\|_F^2+C_{S,0}$ for constants $C_{S,1},C_{S,0}<\infty$.
Therefore we may define
\[
h_2(\bZ)
:=
c_n(2+4B_{\mathcal N}^2)
+\big(\|\btOmega^{-1}\|_{op}+\|\btOmega_0^{-1}\|_{op}\big)\,\big(\|\bbY_{\btmu}\|_F^2+1\big)
+ C_0,
\]
for a finite constant $C_0<\infty$ absorbing $C_{S,1},C_{S,0},C_\pi$ and other fixed terms, so that
$\sup_{\bA\in\mathcal N}\|\nabla^2_{\vecr(\bA)}\log p(\bZ,\bA\mid\bbY_n)\|_{op}\le h_2(\bZ)$.

\smallskip
\noindent\textbf{Step 2: Integrability of $h_2(\bZ)$.}
Under the mean-field factorization, $\btmu$ is independent of $(\btOmega,\btOmega_0)$ under $q$.
Since $q(\btmu)$ is Gaussian, $\E_q\|\bbY_{\btmu}\|_F^2<\infty$.
Since $q(\btOmega)$ and $q(\btOmega_0)$ are inverse-Wishart, $\E_q\tr(\btOmega^{-1})<\infty$ and
$\E_q\tr(\btOmega_0^{-1})<\infty$, hence also
$\E_q\|\btOmega^{-1}\|_{op}<\infty$ and $\E_q\|\btOmega_0^{-1}\|_{op}<\infty$
because $\|\cdot\|_{op}\le \tr(\cdot)$ for positive semidefinite matrices.
Therefore $\E_q[h_2(\bZ)]<\infty$.

\smallskip
\noindent\textbf{Step 3: Dominated convergence for second derivatives.}
Let $F(\bA):=\int q(\bZ)\log p(\bZ,\bA\mid\bbY_n)\,d\bZ$.
By Proposition~\ref{prop:DCT}, $F$ is differentiable and
$\partial_j F(\bA)=\int q(\bZ)\,\partial_j\log p(\bZ,\bA\mid\bbY_n)\,d\bZ$.
Fix indices $j,k$ and let $e_k$ be the $k$th Euclidean basis vector in the coordinates $\vecr(\bA)$.
For $t\neq 0$ such that $\bA_t:=\bA+t\,\vecr^{-1}(e_k)\in\mathcal N$, define
\[
\Delta_t(\bZ)
:=
\frac{\partial_j\log p(\bZ,\bA_t\mid\bbY_n)-\partial_j\log p(\bZ,\bA\mid\bbY_n)}{t}.
\]
By the mean value theorem, for each $\bZ$ there exists $\xi=\xi(\bZ,t)\in(0,1)$ such that
\[
\Delta_t(\bZ)=\partial_{jk}^2\log p(\bZ,\bA+\xi(\bZ,t)(\bA_t-\bA)\mid\bbY_n).
\]
Therefore $|\Delta_t(\bZ)|\le \|\nabla^2_{\vecr(\bA)}\log p(\bZ,\cdot\mid\bbY_n)\|_{op}\le h_2(\bZ)$.
Moreover, by the $C^2$-smoothness of \eqref{eq:12} in $\bA$,
$\Delta_t(\bZ)\to \partial_{jk}^2\log p(\bZ,\bA\mid\bbY_n)$ pointwise as $t\to 0$.
Since $h_2$ is $q$--integrable, dominated convergence yields
\[
\partial_{jk}^2 F(\bA)
=
\int q(\bZ)\,\partial_{jk}^2\log p(\bZ,\bA\mid\bbY_n)\,d\bZ.
\]
Doing this for all $j,k$ gives the matrix identity
\[
\nabla^2_{\vecr(\bA)}F(\bA)
=
\int q(\bZ)\,\nabla^2_{\vecr(\bA)}\log p(\bZ,\bA\mid\bbY_n)\,d\bZ,
\]
which completes the proof.
\end{proof}

We prove Lemma~\ref{lem:local-concavity}.

\begin{proof}[Proof of Lemma~\ref{lem:local-concavity}]
By Proposition~\ref{prop:DCT2}, $\ell^{\dagger}_n(\bA)$ is twice differentiable on $\mathcal N$ and we may interchange
$\nabla^2_{\vecr(\bA)}$ with the $q_n^\dagger$-integral. Hence, for all $\bA\in\mathcal N$,
\begin{align*}
\nabla^2_{\vecr(\bA)}\ell^{\dagger}_n(\bA)
&=
\nabla^2_{\vecr(\bA)} \int{q_n^\dagger(\bZ)} \log p(\bbY_n,\bZ \mid\bA) d\bZ\\
&=
\E_{q_n^\dagger(\bZ)}\!\big[\nabla^2_{\vecr(\bA)}\log p(\bbY_n,\bZ \mid\bA)\big] := -\mathbf J_n(\bA),
\end{align*}
where we used the definitions in Assumption~\ref{assum6}(b).
Therefore by Assumption~\ref{assum6}(b),
\[
\inf_{\bA\in\mathcal N}\lambda_{\min}\!\Big( -\frac{1}{n}\nabla^2_{\vecr(\bA)}\ell^{\dagger}_n(\bA) \Big) \ge \delta
\]
Next, since $\tf_n^\dagger(\bA)=\ell^{\dagger}_n(\bA)+\log\pi(\bA)$,
\[
-\frac{1}{n}\nabla^2_{\vecr(\bA)}\tf_n^\dagger(\bA)
=
-\frac{1}{n}\nabla^2_{\vecr(\bA)}\ell^{\dagger}_n(\bA)
-\frac{1}{n}\nabla^2_{\vecr(\bA)}\log\pi(\bA).
\]
Under the matrix-normal prior $\bA\sim\mathcal{MN}_{r-u,u}(\bA_0,\bU_0^{(\bA)},\bV_0^{(\bA)})$,
\[
\nabla^2_{\vecr(\bA)}\log\pi(\bA)=-(\bV_0^{(\bA)})^{-1}\otimes(\bU_0^{(\bA)})^{-1}\ \preceq\ 0,
\]
so $-\nabla^2\log\pi(\bA)\succeq 0$ and therefore
\[
-\frac{1}{n}\nabla^2_{\vecr(\bA)}\tf_n^\dagger(\bA)
\succeq
-\frac{1}{n}\nabla^2_{\vecr(\bA)}\ell^{\dagger}_n(\bA).
\]
Hence the same curvature lower bound $c_0$ holds for $\tf_n^\dagger$.

The uniform lower bounds on $-\nabla^2 \ell^{\dagger}_n$ and $-\nabla^2 \tf_n^\dagger$ imply that $\ell^{\dagger}_n$ and $\tf_n^\dagger$
are uniformly strongly concave on the convex set $\mathcal N$, hence each admits a unique maximizer on $\mathcal N$.

Finally, since $\mathcal N=\mathcal B(\bA^\star,\rho)$, if $\rho-\|\bbA_n-\bA^\star\|_F>0$ then any
$\bar\delta_y\in(0,\rho-\|\bbA_n-\bA^\star\|_F]$ satisfies $\mathcal B(\bbA_n,\bar\delta_y)\subseteq \mathcal N$.
The same argument applies to $\hbA_n$ and $\hat\delta_y$.
\end{proof}

\subsection{Proof of Lemma~\ref{lem:third-derivative}}
\label{sec:E.2}

\begin{proof}[Proof of Lemma~\ref{lem:third-derivative}]
Fix the current iterate $q_n^\dagger(\bZ)$ and consider the induced $\bA$-objective
$\tf_n^\dagger(\bA)=\ell^{\dagger}_n(\bA)+\log\pi(\bA)$ for the fixed-$q_n^\dagger$ $\bA$-update.
Up to an additive constant in $\bA$, the explicit expression in~\eqref{eq:tAobj} can be written as
\[
\tf_n^\dagger(\bA)
=
\kappa\,g(\bA) + Q_n(\bA),
\qquad
g(\bA):=\log\big|\bI_{r-u}+\bA\bA^\top\big|,
\qquad
\kappa:=\frac{2n+\nu^{(1)}+\nu^{(0)}}{2},
\]
where $Q_n(\bA)$ collects all remaining $\bA$-dependent terms.
Under the fixed-$q_n^\dagger$ update, all terms in $Q_n(\bA)$ are at most quadratic in the entries of $\bA$
(trace-quadratic and trace-linear forms), hence
\[
\nabla^3_{\vecr(\bA)}Q_n(\bA)\equiv 0.
\]
Moreover, under the matrix-normal (Gaussian) prior on $\vecr(\bA)$,
$\log\pi(\bA)$ is a quadratic form in $\vecr(\bA)$ and therefore
\[
\nabla^3_{\vecr(\bA)}\log\pi(\bA)\equiv 0.
\]
Consequently,
\[
\nabla^3_{\vecr(\bA)}\tf_n^\dagger(\bA)
=
\kappa \,\nabla^3_{\vecr(\bA)} g(\bA),
\qquad
\nabla^3_{\vecr(\bA)}\ell^{\dagger}_n(\bA)
=
\nabla^3_{\vecr(\bA)}\tf_n^\dagger(\bA),
\]
for all $\bA$.

Next, the map $\bA\mapsto g(\bA)=\log|\bI_{r-u}+\bA\bA^\top|$ is $C^\infty$ on $\mathbb R^{(r-u)\times u}$ because $\bI_{r-u}+\bA\bA^\top$ is positive definite for every $\bA$. Hence $\bA\mapsto \|\nabla^3_{\vecr(\bA)}g(\bA)\|^{*}$ is continuous.
Since $\mathcal N$ is compact, the extreme value theorem implies that
\[
C_{3,\mathcal N}
:=
\sup_{\bA\in\mathcal N}\Big\|\nabla^3_{\vecr(\bA)}g(\bA)\Big\|^{*}
\ <\ \infty.
\]
Therefore, for all $\bA\in\mathcal N$,
\[
\Big\|\tfrac{1}{n}\nabla^3_{\vecr(\bA)}\tf_n^\dagger(\bA)\Big\|^{*}
=
\frac{\kappa}{n}\,\Big\|\nabla^3_{\vecr(\bA)}g(\bA)\Big\|^{*}
\le
\frac{\kappa}{n}\,C_{3,\mathcal N}.
\]
Since $(\nu^{(1)}+\nu^{(0)})$ is fixed,
\[
\frac{\kappa}{n}=1+\frac{\nu^{(1)}+\nu^{(0)}}{2n}\le 2
\quad\text{for all }n\ge \nu^{(1)}+\nu^{(0)}.
\]
Thus, for all sufficiently large $n$,
\[
\sup_{\bA\in\mathcal N}\Big\|\tfrac{1}{n}\nabla^3_{\vecr(\bA)}\tf_n^\dagger(\bA)\Big\|^{*}
\le 2C_{3,\mathcal N}.
\]
Setting $M_1:=2C_{3,\mathcal N}$ proves the claim.

Finally, by Lemma~\ref{lem:local-concavity} the closed balls
$\mathcal B(\bbA_n,\bar\delta_y)$ and $\mathcal B(\hbA_n,\hat\delta_y)$ are subsets of $\mathcal N$
with probability tending to one, hence the same uniform bound holds on these balls.
\end{proof}

\subsection{Proof of Lemma~\ref{lem:bound-M2}}
\label{sec:E.3}
\begin{proof}[Proof of Lemma~\ref{lem:bound-M2}]
Under the Gaussian (matrix-normal) prior $\vecr(\bA)\sim\mathcal N(\vecr(\bA_0),\bSigma_0^{(\bA)})$,
\[
\frac{1}{\pi(\bA)}
=(2\pi)^{d_{\bA}/2}\,|\bSigma_0^{(\bA)}|^{1/2}
\exp\!\left\{\frac12(\vecr(\bA)-\vecr(\bA_0))^\top \bSigma_0^{(\bA)-1}(\vecr(\bA)-\vecr(\bA_0))\right\}.
\]
Using $x^\top \Sigma^{-1}x \le \|x\|_2^2/\lambda_{\min}(\Sigma)$, we obtain
\[
\frac{1}{\pi(\bA)}
\le
(2\pi)^{d_{\bA}/2}\,|\bSigma_0^{(\bA)}|^{1/2}
\exp\!\left\{\frac{\|\vecr(\bA)-\vecr(\bA_0)\|_2^2}{2\,\lambda_{\min}(\bSigma_0^{(\bA)})}\right\}.
\]
For any $\bA$ with $\|\bA-\bbA_n\|_F\le \bar\delta$,
\[
\|\vecr(\bA)-\vecr(\bA_0)\|_2
\le
\|\vecr(\bA)-\vecr(\bbA_n)\|_2 + \|\vecr(\bbA_n)-\vecr(\bA_0)\|_2
\le
\bar\delta + \|\vecr(\bbA_n)-\vecr(\bA_0)\|_2.
\]
Since $\bbA_n\in\mathcal N=\mathcal B(\bA^\star,\rho)$ by definition,
\[
\|\vecr(\bbA_n)-\vecr(\bA_0)\|_2
\le
\|\vecr(\bbA_n)-\vecr(\bA^\star)\|_2 + \|\vecr(\bA^\star)-\vecr(\bA_0)\|_2
\le
\rho + \|\vecr(\bA^\star)-\vecr(\bA_0)\|_2.
\]
Combining the last two displays yields
\[
\sup_{\|\bA-\bbA_n\|_F\le \bar\delta}\|\vecr(\bA)-\vecr(\bA_0)\|_2
\le
\bar\delta+\rho+\|\vecr(\bA^\star)-\vecr(\bA_0)\|_2.
\]
Substituting this bound into the prior inequality completes the proof and gives the stated constant $M_2$.
\end{proof}

\subsection{Proof of Lemma~\ref{lem:variational-posterior}}
\label{sec:E.4}

\begin{proof}[Proof of Lemma~\ref{lem:variational-posterior}]
Fix $\bA\in\mathcal N$ such that $\|\bA-\bbA_n\|_F\le \bar\delta_y$.
By Taylor's theorem applied to $\ell^{\dagger}_n$ along the segment from $\bbA_n$ to $\bA$,
there exists $\btA\in[\bbA_n,\bA]\subset\mathcal N$ such that
\[
\ell^{\dagger}_n(\bA)-\ell^{\dagger}_n(\bbA_n)
=
\frac{1}{2}\langle \bA-\bbA_n,\ \nabla^2_{\vecr(\bA)}\ell^{\dagger}_n(\btA)\,(\bA-\bbA_n)\rangle
+\frac{1}{6}\,\nabla^3_{\vecr(\bA)}\ell^{\dagger}_n(\btA)\big[\vecr(\bA-\bbA_n)\big]^{\otimes 3}.
\]
Equivalently,
\[
\frac{\ell^{\dagger}_n(\bA)-\ell^{\dagger}_n(\bbA_n)}{n}
\le
-\frac{1}{2}\,\lambda_{\min}\!\Big(-\tfrac{1}{n}\nabla^2_{\vecr(\bA)}\ell^{\dagger}_n(\btA)\Big)\,\|\bA-\bbA_n\|_F^2
+\frac{1}{6}\,\Big\|\tfrac{1}{n}\nabla^3_{\vecr(\bA)}\ell^{\dagger}_n(\btA)\Big\|^{*}\,\|\bA-\bbA_n\|_F^3.
\]
By Lemma~\ref{lem:local-concavity}, with probability tending to one and for all sufficiently large $n$,
\[
-\tfrac{1}{n}\nabla^2_{\vecr(\bA)}\ell^{\dagger}_n(\bA)\ \succeq\ \delta\,\mathbf I_{d_{\bA}}
\qquad\text{for all }\bA\in\mathcal N,
\]
and by Lemma~\ref{lem:third-derivative}, for all sufficiently large $n$,
\[
\sup_{\bA\in\mathcal N}\Big\|\tfrac{1}{n}\nabla^3_{\vecr(\bA)}\ell^{\dagger}_n(\bA)\Big\|^{*}\ \le\ M_1.
\]
Let $t:=\|\bA-\bbA_n\|_F\in[0,\bar\delta_y]$. Then on the above high-probability event,
\[
\frac{\ell^{\dagger}_n(\bA)-\ell^{\dagger}_n(\bbA_n)}{n}
\ \le\ f(t)
\ :=\ -\frac{\delta}{2}\,t^2+\frac{M_1}{6}\,t^3.
\]
Since $f'(t)=-\delta+\frac{M_1}{2}t^2=t\big(\frac{M_1}{2}t-\delta\big)$,
$f$ is decreasing on $(0,2\delta/M_1]$ and increasing on $[2\delta/M_1,\infty)$.
Therefore, for any shell $t\in[\rho_n,\bar\delta_y]$,
\[
\sup_{\ \rho_n\le \|\bA-\bbA_n\|_F\le \bar\delta_y}\ \frac{\ell^{\dagger}_n(\bA)-\ell^{\dagger}_n(\bbA_n)}{n}
\ \le\ \max\{f(\rho_n),\,f(\bar\delta_y)\}
\ =\ -\,\min\{-f(\rho_n),\,-f(\bar\delta_y)\}.
\]
Define
\[
\tilde\delta_n
:=
\min \!\left\{ \frac{\delta}{2}\,\rho_n^2-\frac{M_1}{6}\,\rho_n^3,\ \ \frac{\delta}{2}\,\bar\delta_y^2-\frac{M_1}{6}\,\bar\delta_y^3 \right\},
\]
to obtain
\[
\sup_{\ \rho_n\le \|\bA-\bbA_n\|_F\le \bar\delta_y}\ 
\frac{\ell^{\dagger}_n(\bA)-\ell^{\dagger}_n(\bbA_n)}{n}
\ \le\ -\,\tilde{\delta}_n.
\]

It remains to show $\rho_n>0$ with probability tending to one and $\|\hbA_n-\bbA_n\|_F=O(n^{-1})$.
Because $\bbA_n$ and $\hbA_n$ are interior maximizers on the balls from Lemma~\ref{lem:local-concavity}$,$
their first-order conditions give
\[
\nabla_{\vecr(\bA)}\ell^{\dagger}_n(\bbA_n)=0,
\qquad
\nabla_{\vecr(\bA)}\ell^{\dagger}_n(\hbA_n)+\nabla_{\vecr(\bA)}\log\pi(\hbA_n)=0.
\]
By the mean-value form of Taylor's theorem applied to $\nabla\ell_n$,
there exists $\btA_n$ on the segment between $\bbA_n$ and $\hbA_n$ such that
\[
-\nabla^2_{\vecr(\bA)}\ell^{\dagger}_n(\btA_n)\,\big(\vecr(\hbA_n)-\vecr(\bbA_n)\big)
=\nabla_{\vecr(\bA)}\log\pi(\hbA_n).
\]
By Lemma~\ref{lem:local-concavity}, $-\nabla^2\ell^{\dagger}_n(\btA_n)\succeq n \delta\,\mathbf I$, hence
\[
\|\hbA_n-\bbA_n\|_F
=
\|\vecr(\hbA_n)-\vecr(\bbA_n)\|_2
\ \le\ \frac{1}{n \delta}\ \big\|\nabla_{\vecr(\bA)}\log\pi(\hbA_n)\big\|_2.
\]
Under the matrix-normal prior $\vecr(\bA)\sim\mathcal N(\vecr(\bA_0),\bSigma_0^{(\bA)})$,
\[
\nabla_{\vecr(\bA)}\log\pi(\bA)= -\,\bSigma_0^{(\bA)-1}\big(\vecr(\bA)-\vecr(\bA_0)\big).
\]
Since $\hbA_n\in\mathcal N$ and $\mathcal N$ is compact, there exists a finite constant $C_\pi<\infty$
(independent of $n$) such that $\sup_{\bA\in\mathcal N}\|\nabla\log\pi(\bA)\|_2\le C_\pi$.
Therefore,
\[
\|\hbA_n-\bbA_n\|_F \ \le\ \frac{C_\pi}{n \delta}\ =\ O(n^{-1}).
\]
Since $\hat\delta_y>0$ is fixed and $\rho_n=\hat\delta_y-\|\bbA_n-\hbA_n\|_F$,
it follows that $\rho_n>0$ with probability tending to one and $\rho_n\to\hat\delta_y$.
This completes the proof.
\end{proof}

\section{Additional simulation studies results}
\label{sec:F}
Figures~\ref{beta all1} and~\ref{beta all2} present the marginal posterior distributions for all elements of $\bbeta$ under CALVI, MH--Gibbs, and Manifold--Gibbs when the true envelope dimension is $u^{*}=2$. Consistent with Figure~\ref{fig:beta marginal}, the three methods yield broadly similar marginal distributions, with only minor differences in posterior spread across coefficients.
\begin{figure}
    \centering
    \includegraphics[width=1\linewidth]{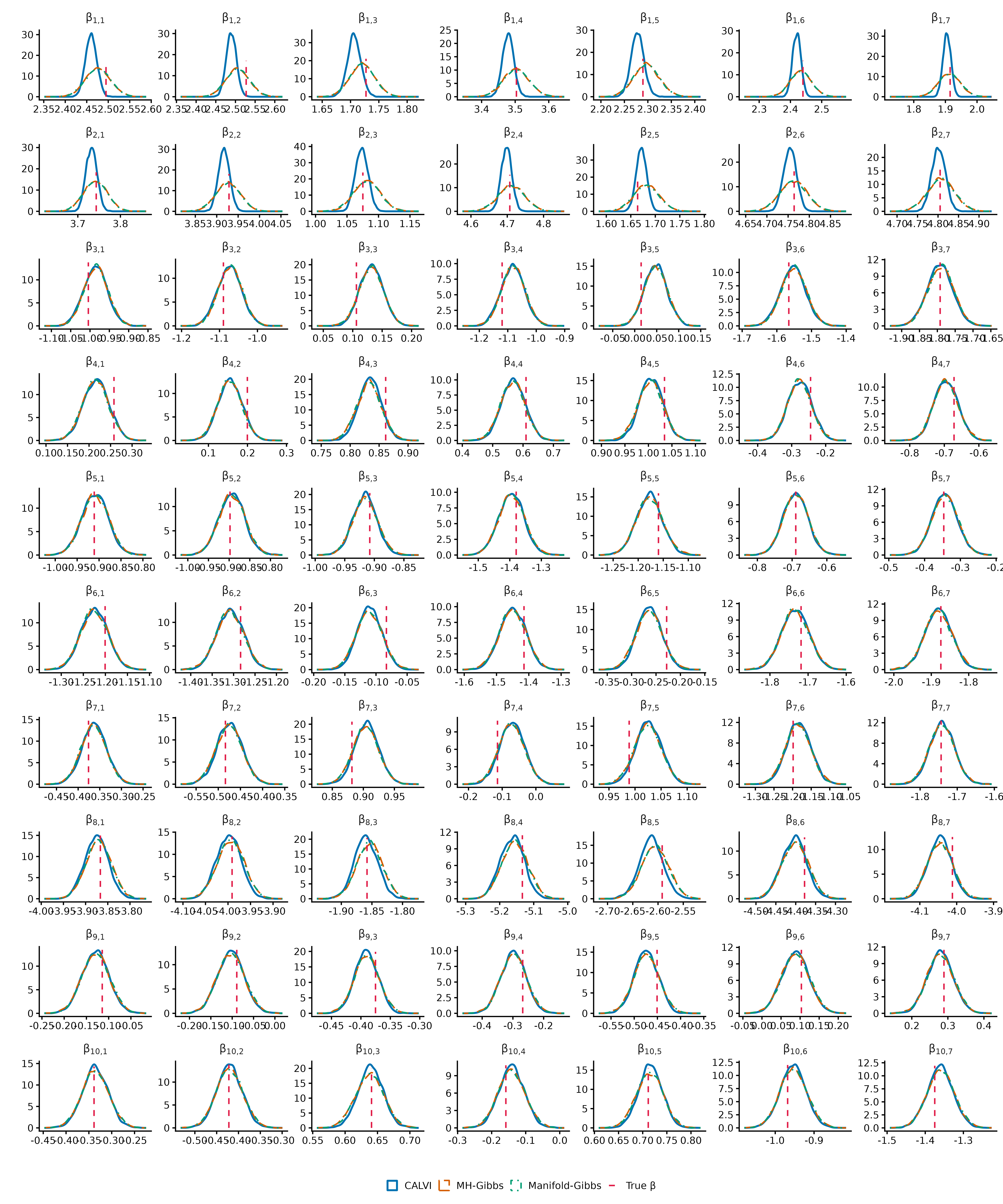}
    \caption{Marginal variational and posterior distributions for $\beta_{ij}, \ i=1,\dots,10 \ , j=1,\dots,7$ based on the first replicate of the simulated data. Red dotted lines denote the true values of the parameters.}
    \label{beta all1}
\end{figure}

\begin{figure}
    \centering
    \includegraphics[width=1\linewidth]{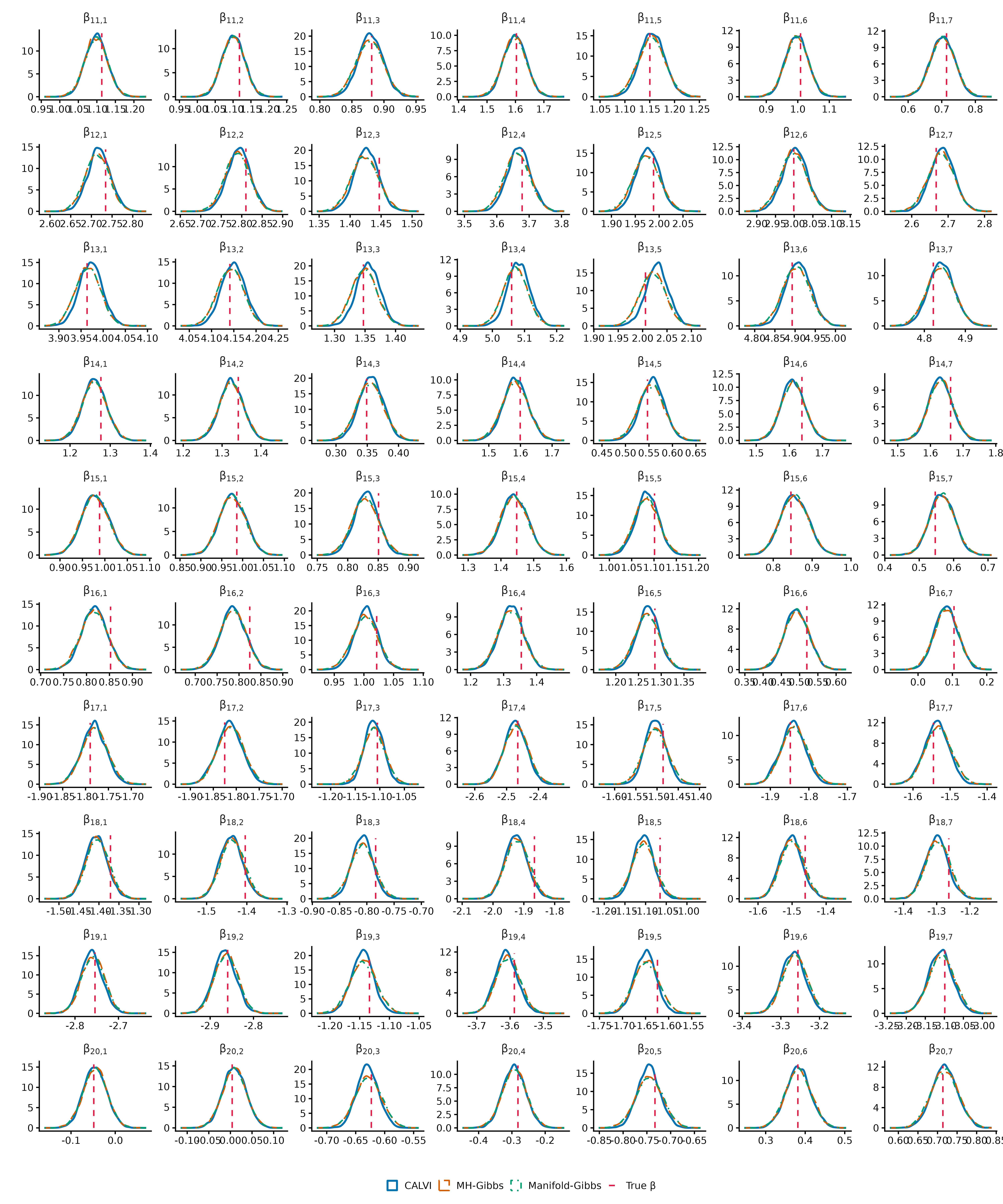}
    \caption{Marginal variational and posterior distributions for $\beta_{ij}, \ i=11,\dots,20 \ , j=1,\dots,7$ based on the first replicate of the simulated data. Red dotted lines denote the true values of the parameters.}
    \label{beta all2}
\end{figure}

\section*{Acknowledgements}
This research was supported by the BK21 FOUR (Fostering Outstanding Universities for Research, No.\ 5120200913674) funded by the Ministry of Education (MOE, Korea) and the National Research Foundation of Korea (NRF).

\bibliographystyle{apalike}
\bibliography{envelope}

\end{document}